\newcommand*{\bbZ}{\mathbb{Z}}
\newcommand*{\bbC}{\mathbb{C}}
\newcommand*{\bbD}{\mathbb{D}}
\newcommand*{\bbH}{\mathbb{H}}
\newcommand*{\cE}{\mathcal{E}}
\newcommand*{\cM}{\mathcal{M}}
\newcommand*{\cO}{\mathcal{O}}
\newcommand*{\cT}{\mathcal{T}}
\newcommand{\ket}[1]{|#1\rangle}
\newtheorem{theorem}{Theorem}
\newtheorem{definition}[theorem]{Definition}
\numberwithin{lemma}{section}
\DeclareMathOperator{\tr}{Tr}
\title{Building Bulk Geometry from the Tensor Radon Transform}
\begin{document}
\author[1]{ChunJun Cao\thanks{ccj991@gmail.com}}
\affil[1]{Joint Center for Quantum Information and Computer Science, University of Maryland, College Park, MD, 20742, USA}
\author[2]{Xiao-Liang Qi\thanks{xlqi@stanford.edu}}
\affil[2]{Stanford Institute for Theoretical Physics, Stanford University, Stanford, CA, 94305, USA}
\author[1,3]{Brian Swingle\thanks{bswingle@umd.edu}}
\affil[3]{Condensed Matter Theory Center, University of Maryland, College Park, MD, 20742, USA}
\author[4]{Eugene Tang\thanks{eugene.tang@caltech.edu}}
\affil[4]{Institute for Quantum Information and Matter, California Institute of Technology, Pasadena, CA, 91125, USA}
\maketitle

\begin{abstract}
    Using the tensor Radon transform and related numerical methods, we study how bulk geometries can be explicitly reconstructed from boundary entanglement entropies in the specific case of $\mathrm{AdS}_3/\mathrm{CFT}_2$. We find that, given the boundary entanglement entropies of a $2$d CFT, this framework provides a quantitative measure that detects whether the bulk dual is geometric in the perturbative (near AdS) limit. In the case where a well-defined bulk geometry exists, we explicitly reconstruct the unique bulk metric tensor once a gauge choice is made. We then examine the emergent bulk geometries for static and dynamical scenarios in holography and in many-body systems. Apart from the physics results, our work demonstrates that numerical methods are feasible and effective in the study of bulk reconstruction in AdS/CFT.
    \end{abstract}

\section{Introduction}
%\CC{re:xiaoliang's comment on translational invariant data: not always geometric, e.g. geometric data+constant offset, mass def example. }
\iffalse
Rough outline:
\begin{itemize}
    \item Motivation 
    \item Review math and physics literature
    \item Talk about our implementation
    \item Talk about our results
    \item Future Work
\end{itemize}

\textcolor{blue}{(ET) Some numerics TODOs:
\begin{enumerate}
    \item Figure out cause of error: ($\#\text{geodesics} < 3\times \#\text{tiles}$)?
    \item (DONE)(Potential Fix) Interpolate geodesics. 
    \item Implement a regularizer that constrains boundary fluctuations?
\end{enumerate}
}
\fi

Recent progress \cite{VanRaamsdonk:2009ar,Ryu:2006bv,Ryu:2006ef,Swingle:2009bg,Maldacena:2013xja,qi2013exact,Cao:2016mst} in quantum gravity has shown that spacetime geometry can emerge from quantum entanglement. This emergence provides appealing explanations for many intuitive properties of the physical world, including the existence of gravity\cite{Lashkari:2013koa,Faulkner:2013ica,Swingle:2014uza,Faulkner:2017tkh}, conditions on the allowed distribution of energy and matter \cite{Lashkari:2014kda,Lashkari:2016idm}, and the unitarity of black hole dynamics \cite{Almheiri:2019psf,Penington:2019npb}. Most of these developments have taken place in the context of the Anti-de Sitter/Conformal Field Theory ($\mathrm{AdS}/\mathrm{CFT}$) correspondence\cite{Maldacena:1997re,Aharony:1999ti}, although some proposals relating geometry and entanglement also apply to flat or de~Sitter geometries\cite{Maldacena:2013xja,Cao:2016mst,Cao:2017hrv,Czech:2015kbp,Bao:2017qmt}. As an example of the holographic principle\cite{tHooft:1993dmi,Susskind:1994vu}, $\mathrm{AdS}/\mathrm{CFT}$ describes a duality between a $(d+1)$-dimensional bulk theory with the presence of gravity in asymptotic AdS spacetime, and a $d$-dimensional conformal field theory (CFT) without gravity on the boundary. For the purposes of studying quantum gravity, the duality is especially powerful because the CFT is an object whose basic rules we understand. However, much remains to be understood about the relationships between the two sides of the duality.

One such challenge in the AdS/CFT duality is to understand how the boundary degrees of freedom without gravity can reorganize themselves into a higher dimensional bulk configuration with gravity. 
%With imported machinery from quantum information, a number of studies have uncovered some highly surprising connections between information theoretic constructs and geometry\cite{Ryu:2006bv,Ryu:2006ef,Brown:2015bva,Brown:2015lvg,Almheiri:2014lwa,Pastawski:2015qua}. 
This is called the problem of bulk reconstruction, and this paper reports two results on this topic. First, we describe a perturbative procedure to reconstruct the bulk geometry given an appropriate set of boundary entanglement data. Second, we show that this reconstruction procedure can detect whether the putative bulk dual is semi-classical in the sense of having only weak fluctuations about an average value.

Our first result builds on a number of works that study bulk metric reconstruction using geodesic lengths \cite{Porrati:2003na,Roy:2018ehv} or entanglement entropy\cite{Czech:2014ppa}.\footnote{ More generally, but less explicitly, there are also proposals for bulk reconstruction using tensor networks\cite{Swingle:2009bg,Pastawski:2015qua,Hayden:2016cfa,Evenbly:2017hyg,you2018machine}, modular Hamiltonians\cite{Kabat:2018smf}, and light-cone cuts \cite{Engelhardt:2016crc,Engelhardt:2016wgb}. These rely on different boundary data and methodologies which we will not discuss here.} Early efforts in this area often utilized bulk symmetries to simplify the problem of recovering the bulk metric from minimal geodesic data~\cite{Hammersley:2006cp,Bilson:2008ab,Bilson:2010ff}. Later work on differential entropy and hole-ography~\cite{Balasubramanian:2013lsa,Czech:2014ppa} furthered our understanding using information theoretic quantities and suggested a method to recover the bulk, although no explicit reconstruction formula was given for generic cases. More recently, it was shown that in certain cases, static geometries\cite{Bale2017}, or even the full dynamical metric\cite{Bao:2019bib}, can be fixed non-perturbatively by boundary entanglement data without any prior knowledge of the bulk symmetries. However, barring a few well-known examples with symmetries, there does not exist a reconstruction procedure that directly and explicitly converts entropy data into bulk metrics. Our work addresses this missing element.

Our second result arises from the basic issue that we have a far from complete understanding of what kind of boundary states correspond to semi-classical bulk geometries. Some necessary conditions are expected from holographic entropy inequalities\cite{Bao:2015bfa} and from consistency relations for any putative metric reconstruction procedure\footnote{ One typically proceeds by assuming the boundary state has a bulk geometry and running the reconstruction procedure. If the boundary state actually does not have a bulk geometry, then the reconstruction procedure will lead to internal inconsistencies.}. In general, we do not expect all quantum states from the boundary CFT to correspond to well-defined semi-classical geometries in the bulk. On the contrary, we expect an abundance of non-geometrical states obtained, for instance, by superposing states with macroscopically distinct dual geometries such as the AdS vacuum and an AdS black-hole geometry. Therefore, to better understand holographic duality, we must also address the necessary and sufficient conditions under which a bulk geometry can emerge from the boundary state. Our work also addresses this open question.

%For instance, we know from a few examples that certain quantum states on the boundary theory correspond to well-defined bulk geometries in the large $N$ limit\cite{}. In addition to the obvious benefit in understanding emergent space-time and constructing AdS/CFT dictionaries, these examples helped crystalize various ideas and proposals \cite{}.  Despite serious progress in this direction, we still contend with two crucial questions: first, what kind of boundary state correspond to well-defined bulk geometries, and second, how the bulk geometry, if exists, can be explicitly reconstructed, especially when bulk Killing symmetries are absent or unknown to the reconstruction algorithm. Various directions were explored \cite{} in both the existence of bulk geometry as well as their reconstruction assuming existence. 

In this paper, we study the above issues in the context of  $\mathrm{AdS}_3/\mathrm{CFT}_2$ using an approach based on the tensor Radon transform. The method is for metric solutions that are close to $\mathrm{AdS}_3$, hence it is restricted at present to perturbative problems. To linear order in the perturbation, each quantum state on the boundary corresponds to a constant time slice in the bulk, so we provide a numerical reconstruction algorithm that takes the entanglement entropies of intervals in the boundary state as input, and outputs the best-fit bulk metric tensor of the spatial slice in the linearized regime. This solution is unique up to gauge transformations. The algorithm also provides a quantitative indicator of whether the boundary data admits a bulk geometric description near $\mathrm{AdS}_3$. This is measured by the quality of the fit, which intuitively quantifies how far the boundary entanglement data of a given state is from being geometric. A poor quality of the fit indicates the boundary data lack consistency with a semi-classical geometry.

As a proof of principle, we explore several reconstructions numerically in holography and with a $1$d free fermion CFT. We find that free fermion ground states in the presence of disorder and a mass deformation do not correspond to a well-defined bulk geometries. Likewise, mixtures of states where each is dual to a distinct classical geometry can also fail to have a well-defined bulk geometry. In addition to these static examples, the method is applied to several dynamical scenarios including global and local quenches in the free fermion model and entanglement dynamics in a toy model scrambling system. In the case where the dynamics is scrambling, we find that the bulk description is qualitatively consistent with an in-falling spherical shell of bulk matter experiencing gravitational attraction. These results further demonstrate that it is both feasible and interesting to study AdS/CFT using tensor Radon transform techniques coupled with modest (laptop-scale) computational resources.

The remainder of the paper is organized as follows. In section \ref{sec:boundaryrig}, we briefly review the basic assumptions, especially how entanglement entropy can be tied to metric tensors via the tensor Radon transform. We give a general review on the tensor Radon transform in Appendix \ref{app:TRT}. Its adaptation to a hyperbolic geometry and the gauge fixing procedure are found in Appendix \ref{app:TRTpoincare} and Appendix \ref{app:gaugefix}, respectively. In section \ref{sec:numrec}, we introduce the numerical reconstruction procedure, which we elaborate in detail in Appendix \ref{app:numrecon}. In section \ref{sec:recongeo}, we apply the reconstruction algorithm to static and dynamical boundary entanglement data. In section \ref{sec:geodetect}, we discuss these reconstructions and how geometrical and non-geometrical can be distinguished using the relative reconstruction error. Finally we conclude with some remarks and directions for future work in section \ref{sec:discussion}.

\section{Boundary Rigidity and Bulk Metric Reconstruction}
\label{sec:boundaryrig}

We begin with the vacuum state $|0\rangle_{\mathrm{CFT}}$ of a hologaphic CFT. Because this state has conformal symmetry at all scales, it must be dual to empty AdS \cite{Maldacena:1997re}. When the bulk theory is Einstein gravity coupled to matter, the bulk geometry controls the leading entanglement structure of the CFT state via the Ryu-Takayanagi (RT) formula \cite{Ryu:2006bv,Ryu:2006ef}. Given a boundary region $A$, the RT formula computes the von Neumann entropy $S(A)$ in terms of a minimal area surface,
\begin{equation}
    S(A) = \frac{1}{4G}\min_{\gamma_A} \mathrm{Area}[\gamma_A],
\end{equation}
where $\mathrm{Area}[\gamma]$ is the area of $\gamma_A$, $G$ is Newton's constant, and the minimum runs over bulk surfaces $\gamma_A$ that are homologous to $A$. In the time symmetric case where RT applies, all of these surfaces can be taken to lie in a time-symmetric spacelike surface $\Sigma$. %\textcolor{red}{Right way to say this?} 
In other words, the Ryu-Takayanagi formula says that the von Neumann entropy of a state on a boundary subregion is given by area of the minimal area bulk surface that subtends the region. 

If we have access to a boundary state, in the sense of knowing its von Neumann entropies on all (connected) subregions, then the RT formula translates these entropic quantities into a set of boundary anchored minimal surface areas. Because these minimal surfaces all lie on the spatial slice $\Sigma$, recovering the bulk geometry from entanglement reduces to a pure geometry problem where we try to find the interior metric $g_{ij}$ of a Riemannian manifold $\cM$ while knowing only the areas of minimal surfaces that are anchored to its boundary $\partial \cM$. This is precisely the statement of the boundary rigidity problem\cite{UhlSteAnnounce}, which is well-studied in the field of integral geometry\cite{sharafutdinov1994integral}. 

For this work, we focus exclusively on the case of $\mathrm{AdS_3/CFT_2}$, where minimal surfaces are simply geodesics and spatial slices are 2d Riemannian manifolds. For a class of 2d Riemanian manifolds (called \emph{simple} manifolds), it is known that the lengths of all boundary-anchored geodesics indeed fixes the bulk metric uniquely up to gauge equivalence~\cite{PestovUhlmann2003}. Because the single interval von Neumann entropies in the ground state of a 2d CFT are universally determined by the central charge~\cite{Calabrese:2009qy}, the RT formula combined with boundary rigidity completely fixes the bulk geometry to be that of hyperbolic space\cite{BCG1995,Croke1990}. Of course, this is precisely the induced geometry on a time-symmetric slice of AdS, as it had to be based on the grounds of symmetry\footnote{Since the lengths of geodesics on an asymptotically AdS spacetime are divergent when extended to the boundary, it is understood that the geodesics are actually regularized to be anchored on a cutoff surface that is some finite distance away from an arbitrarily chosen coordinate in the bulk. For the rest of this work, we will always impose a UV cutoff for the CFT, or equivalently, an IR cut off in the bulk geometry so that the geodesic lengths stay finite.}.

It is natural to ask whether we can exploit the power of the RT formula and the results from boundary rigidity theory to reconstruct dual geometries from boundary states other than the vacuum. For instance, given a generic non-vacuum state $\ket{\psi}_{\mathrm{CFT}}$, can we apply the same principles to reconstruct the metric tensor for the bulk geometry from the set of boundary-anchored geodesic lengths? There are several obstacles that prevent us from recovering the bulk metric tensor exactly using the above methods, even if $\ket{\psi}_{\mathrm{CFT}}$ has a well-defined dual geometry. First, since it may not be possible for minimal surfaces (or extremal surfaces in the dynamical case\cite{Hubeny:2007xt}) to foliate entire space(time) manifold, thus we can only reconstruct regions where there is at least a local foliation with minimal or extremal surfaces. Second, even for the regions whose geometries are fixed by the entanglement data\cite{Bao:2019bib}, there is no explicit reconstruction formula for the general boundary rigidity problem.

Although such problems are difficult to solve in general, it is typically easier to reconstruct the difference between the dual geometry and a known reference, or background, geometry. This is known as the linearized boundary rigidity  problem\cite{sharafutdinov1994integral,UhlSteAnnounce}. In this work, instead of a direct reconstruction of the of the dual geometry for $\ket{\psi}_{\mathrm{CFT}}$, we reconstruct the differences in the entanglement patterns as linearized metric perturbations around the AdS background.

To do so, we first fix the background geometry to be vacuum $\mathrm{AdS}_3$. Working with a given constant-time slice, let us suppose that $\ket{\psi}_{\mathrm{CFT}}$ has a slightly different entanglement structure compared to $\ket{0}_{\mathrm{CFT}}$, and is dual to a bulk geometry with a metric that is close, but not equal, to that of pure hyperbolic space on our time-slice. Then by the RT formula, the boundary-anchored geodesic lengths now differ slightly from those of pure hyperbolic space. For a given boundary subregion $A$, the change in the geodesic length anchored at $A$ is related to the vacuum subtracted entropy of the state by
\begin{align}
    \Delta L(A) = L_\psi(A)-L_{0}(A) = \frac{S_{\psi}(A)-S_0(A)}{4G},
\end{align}
where $L_{\psi}(A)$ and $L_0(A)$ denotes the lengths of geodesics anchored at the end points of $A$ for states $\ket{\psi}_{\mathrm{CFT}}$ and $\ket{0}_{\mathrm{CFT}}$, respectively. 

This change in geodesic length corresponds to a change in the bulk metric
\begin{align}
    g_{ij}^{(0)}\mapsto g_{ij}=g_{ij}^{(0)}+ h_{ij},
\end{align} 
where $g^{(0)}_{ij}$ is the pure hyperbolic metric and $h_{ij}$ is the perturbation. For the linearized problem, the goal is to find $h_{ij}$ to leading order, with $h_{ij}$ viewed as a rank two symmetric tensor field on the hyperbolic background. Note that geodesics of the background metric remain geodesics of the perturbed metric to first order in $h$ since geodesics satisfy an extremality condition. Changes in entanglement due to variations in the minimal surface itself are of order $h^2$ \cite{Lashkari:2013koa}. 
The leading order change in geodesic length can then be written as
\begin{align}
    \Delta L(A) &= L_\psi(A)-L_0(A)\nonumber\\
    &= \int_{\gamma_A} \sqrt{(g^{(0)}_{ij}+h_{ij})\dot{\gamma}_A^i\dot{\gamma}_A^j}\ ds-\int_{\gamma_A}\sqrt{g^{(0)}_{ij}\dot{\gamma}_A^i\dot{\gamma}_A^j}\ ds\nonumber \\
    &= \frac{1}{2} \int_{\gamma_A}h_{ij} \dot{\gamma}_A^i\dot{\gamma}_A^j\ ds + \mathcal{O}(h^2),
    \label{eqn:leadingorderd}
\end{align}
where $\gamma_A$ is the geodesic of the hyperbolic background anchored at the end points of $A$ and $\dot{\gamma}_A^i$ denotes unit tangent vectors along $\gamma_A$. The tangent vectors are normalized such that
\begin{align}
g^{(0)}_{ij}\dot{\gamma}_A^i\dot{\gamma}_A^j =1.
\end{align} 

For simplicity, we will use $L(\gamma_A)$ and $L(A)$ interchangeably, often dropping the explicit dependence on $A$ and simply writing $L(\gamma)$ when there is no confusion, with the understanding that $\gamma$ is a boundary-anchored geodesic. 

To make concrete progress in this work, we simplify the full problem by taking the linearized bulk result for changes in geodesic length to be equal to the full change in boundary entanglement entropy,
\begin{equation}
     \Delta L(\gamma) \approx \frac 1 2 \int_{\gamma} \dot{\gamma}^i\dot{\gamma}^jh_{ij}\ ds.
\end{equation}
This approximation enables calculations; going beyond it may be technically non-trivial, but it is likely not a fundamental obstacle.

With this simplification, the length perturbation becomes the integrated longitudinal projection of the metric perturbation along a geodesic $\gamma$. This is precisely the \emph{tensor Radon transform} $R_2[h]$ of the metric perturbation $h_{ij}$ \cite{Czech:2016tqr}. Given a symmetric $2$-tensor field $h_{ij}$, the tensor Radon transform $R_2[h]$ defines a map from the space of boundary geodesics to the complex numbers given by
\begin{equation}
 R_2[h_{ij}](\gamma_A)\equiv\int_{\gamma_A} h_{ij}\dot{\gamma}_A^i(s)\dot{\gamma}_A^j(s)\ ds,
    \label{eqn:linearRdT}
\end{equation}
where $\gamma_A$ is the background geodesic anchored at the boundary of $A$.

As an aside, we note that there exist several related notions of Radon transform. A standard Radon transform on a Riemannian manifold is defined by integrating some quantity on a minimal co-dimension one surface, whereas an X-ray Radon transform is defined similarly, but for a dimension one surface, i.e., a geodesic. For two spatial dimensions, as is the case we consider here, the two definitions coincide. For more details on the Radon transform, see Appendix \ref{app:TRT}. 

Formally, the the bulk metric deformation can be recovered by inverting the tensor Radon transform. Schematically, we can write
\begin{equation}
    R_2^{-1}[2\Delta L] = R_2^{-1}[R_2[h_{ij}]]= h_{ij},
\end{equation}
where $R_2^{-1}$ denotes some (not yet properly defined) inverse Radon transform, and where $\Delta L$ denotes the collection of boundary anchored geodesic length deviations. %\textcolor{red}{(why a 2 inside the first argument?)}

Throughout, we work exclusively in the perturbative regime, to leading order in $h$, which allows us to relate geodesic length deformations to the Radon transform through~\eqref{eqn:leadingorderd}. It also ensures that the resulting geometric solution, when it exists, is uniquely determined by the boundary entropy data~\cite{Croke1990,PestovUhlmann2003}.\footnote{In general, we are not guaranteed a unique solution for the bulk metric, even if one exists, because the assumption that the manifold is simple breaks down for sufficiently large deviations from a constant curvature background~\cite{Porrati:2003na}. Working in the perturbative limit ensures that the Radon transform remains well-defined. See Appendix~\ref{app:TRT}.} We wish to comment here that despite restrictions to the perturbative regime, $h_{ij}$ can still capture highly non-trivial physics. Indeed, standard calculations of gravitational waves and the dynamics of typical stars, planets, and galaxies are all done in the weak-field regime.

Before we can proceed with the inversion process, we must give meaning to the inverse Radon transform. For this purpose, it is important to note that the Radon transform has a non-trivial kernel: given any vector field $\xi$ on $M$ such that  $\xi|_{\partial M}=0$, we necessarily have 
\begin{equation}
    R_2[\nabla_i\xi_j + \nabla_j \xi_i] = 0. 
\end{equation}
Physically, any Radon transform of a pure gauge deformation that reduces to the identity at the boundary is zero. The presence of this kernel is natural because the transform relates geodesic lengths (which are gauge invariant) to metric tensors (which are not), so the Radon transform can only be injective up to gauge.

In the presence of such a kernel, we must fix a gauge prescription in order to recover a metric tensor uniquely. We will use a prescription which we call the \emph{holomorphic gauge}~\cite{Monard2015}. In a crude sense, the holomorphic gauge preferentially reconstructs the trace part of the metric at the cost of diminishing non-zero contributions to the off-diagonal.\footnote{There exists other gauge fixing prescriptions as well. The most commonly considered prescription is known as the~\emph{solenoidal gauge}. See Appendix \ref{app:TRT}. We choose an alternative gauge prescription for various reasons of convenience. For more details on gauge fixing, see Appendix~\ref{app:gaugefix}.} This provides two independent gauge constraints in two spatial dimensions, which would allow us to proceed with the reconstruction. For a more detailed description of the gauge constraints and Radon transform on a hyperbolic background, see Appendix~\ref{app:TRTpoincare}. 

As a final comment, in the above discussion we have explicitly assumed that the boundary state corresponded to a well-defined bulk geometry. Below we will construct algorithmic machinery which can actually carry out the reconstruction in this case. However, we will also see that the algorithm can be applied to a more general class of states, with interesting results.

\section{Numerical Methods for Reconstruction}
\label{sec:numrec}
%This can be done for related cases \cite{}, but a rigorous reconstruction formula is still lacking for the inverse transform on a curved background.

The inversion formula for the flat-space scalar Radon transform is a well-known classical result in integral geometry~\cite{sharafutdinov1994integral}. Explicit reconstruction formulas for scalar and vector Radon transforms are also available for constant negative curvature backgrounds~\cite{PestovUhlmann2003,Venka2010}. However, there are currently no explicit reconstruction formulas available for higher rank tensors on curved backgrounds, although several results in the literature come close to a solution in various regimes~\cite{PestovUhlmann2003,UhlmannVasy,Monard2015,Monard2014,Venka2010,UhlmannVasy,BAL20051362}. In the absence of an exact analytic reconstruction formula, we instead draw inspiration from the general principles employed in seismology to study the Earth's interior~\cite{dahlen_tromp_1999} in developing our numerical method. 

In this section, we give a brief overview of the method. The full details of the discretization, gauge fixing, and solutions for the constrained least square problem can be found in Appendix~\ref{app:numrecon}.

\subsection{Discretization and Optimization Procedures}

The basic idea behind our numerical reconstruction is straightforward. We first discretize the bulk and boundary regions into a finite number of tiles. To each tile $\cT$ in the bulk, we associate a tensor $h_{ij}(\cT)$, and for each interval $A$ on the boundary, we associate a geodesic (of the background metric) $\gamma_{A}$ anchored on the endpoints of the interval. In two spatial dimensions, a rank-$2$ symmetric tensor has $3$ independent degrees of freedom. Each geodesic anchored at the end points of an interval generates a linear equation via the discretized version of the Radon transform (\ref{eqn:linearRdT}), defined by
\begin{align}
   2\Delta L(\gamma_{A}) =  R_2[h_{ij}] \approx \sum_{\cT}W^{ij}(\cT,\gamma_{A})h_{ij}(\cT),\label{eq:discrete_RT},
\end{align}
where the tensor $W^{ij}$ contains information about the direction of the tangent vectors $\dot{\gamma}_{A}$, as well as the arc length $\Delta s(\gamma_{A},\cT)$ of the geodesic segment that passes through each tile $\cT$. In equation~\eqref{eq:discrete_RT}, we sum over all bulk tiles $\cT$ and over repeated indices. Naturally, one has $W^{ij}(\cT, \gamma_{A})=0$ if the geodesic does not pass through a tile $\cT$. 

We can abbreviate equation \eqref{eq:discrete_RT} in matrix form as
\begin{align}
    \mathbf{b} = \mathbf{Wh},
\end{align}
where $\mathbf{W}$ and $\mathbf{h}$ are vectorized representations of $W^{ij}$ and $h_{ij}$, and where $\mathbf{b}$ denotes the corresponding geodesic length deformations. As a result, given a specific discretization, the discretized forward Radon transform can be written as a linear map $W: V_{B}\rightarrow V_{\gamma}$ from the space of tile-wise constant bulk tensor valued functions $V_B$ to the space of boundary anchored geodesic lengths $V_{\gamma}$. Both spaces are finite dimensional due to the discretization.

Since the forward Radon transform has a non-trivial kernel in the continuum limit, we must impose a gauge fixing condition to recover a unique solution. We give the full detail of the gauge fixing conditions and the accompanying partial differential equations in Appendix \ref{app:gaugefix}. To ensure the problem is well-posed, we set the holomorphic gauge constraints as discretized partial differential equations, which we formally write as
\begin{align}
    \mathbf{Ch} = \mathbf{0}.\label{eq:discrete_constraints}
\end{align}
Here, $\mathbf{C}$ denotes the constraint matrix representing the partial differential operator associated with the gauge constraint. Reconstruction of the metric perturbation then corresponds to finding the solution to the linear equations \eqref{eq:discrete_RT} above, subject to the linear constraints \eqref{eq:discrete_constraints}.

In practice, there does not always exist exact solutions $h_{ij}$ which satisfies the constrained system. This can be due to a variety of reasons, such as the presence of discretization errors, or if the boundary data is simply inconsistent with a geometric bulk, i.e., if the boundary entropy function fails to lie within the range of the forward Radon transform \cite{sharafutdinov1994integral,PestovUhlmann2003,Monard2015}. Instead of trying to look for an exact solution, it is more natural to look for the best-fit solution $\mathbf{h}_*$ which solves the constrained minimization problem
\begin{align}
\min_{\mathbf{h}}\|\mathbf{Wh}-\mathbf{b}\|, \nonumber\\
\quad \text{subject to}~\mathbf{Ch} = \mathbf{0}. \label{eq:lstsq0}
\end{align}
The objective function is linear and we are guaranteed a unique global minimum. Thus we will say that $\mathbf{h}_*$ is the optimal geometric solution corresponding to boundary data $\mathbf{b}$. We will also write $\mathbf{h}_*(\mathbf{b})$ when we need to denote the dependence of $\mathbf{h}_*$ on the initial boundary data.

Even with the existence of an optimal reconstruction $\mathbf{h}_*$, we do not expect generic boundary data to correspond to a geometric dual in general. A useful quantity is the relative error of reconstruction, which measures the tension between the best-fit solution and the actual data. We can consider various relative errors. If we know the exact bulk solution, say $\mathbf{h}_0$, then we can denote the bulk relative error as
\begin{align}
    \cE_{\mathrm{bulk}} = \frac{\|\mathbf{h}_*-\mathbf{h}_0\|}{\|\mathbf{h}_0\|},
\end{align}
where $\mathbf{h}_*$ is the bulk metric tensor reconstructed from the forward transform of $\mathbf{h}$. 

More commonly however, we do not have access to an a priori geometric state. Instead, we have a CFT state $\ket{\psi}_\mathrm{CFT}$ from which we can extract discretized boundary data $\mathbf{b}_\psi$. In this case, we can likewise consider the boundary relative error, defined by
\begin{align}
    \cE_{\mathrm{bdy}} = \frac{\|\mathbf{W}[\mathbf{h}_*(\mathbf{b}_\psi)] -  \mathbf{b}_\psi\|}{\mathbf{\|b_\psi\|}}.
    \label{eqn:reconerr}
\end{align}
The boundary relative error is simply the normalized distance from $\mathbf{b}_\psi$ to the subspace of boundary data vectors with geometric duals, which is the same quantity minimized by the constrained least squares problem~\eqref{eq:lstsq0}. The boundary relative error therefore serves to quantify the degree to which a state is geometric or non-geometric. We will discuss in greater detail in section~\ref{sec:geodetect} how the relative errors can be used to distinguish geometric data from non-geometric data on the boundary. 

To ensure the reliability of the reconstruction algorithm for the inverse tensor Radon transform, we also perform the numerical inversion of boundary data $\mathbf{b}$ whose bulk tensor field is known. We produce such boundary data by preparing various known bulk tensor valued functions $\mathbf{h}_0$ in the holomorphic gauge then generating its corresponding geodesic data $\mathbf{b_0}$ through a forward tensor Radon transform. Subsequently, we apply the numerical reconstruction to the geodesic data and compare the reconstructed $\mathbf{h_*}(\mathbf{b_0})$ to the original test function $\mathbf{h}_0$. We find remarkable agreement in our reconstructions. Absent rigorous analytic convergence guarantees, the successful benchmarking of the algorithm on known cases serve to provide confidence in the fidelity of the reconstruction. Details of this benchmarking process are given in Appendix \ref{app:benchmark}.

\section{Reconstructed geometries} 
\label{sec:recongeo}
The forward tensor Radon transform is generally neither injective nor surjective. Therefore, not all boundary data can be interpreted as the Radon transform of a bulk tensor field.  However, it may be possible to derive necessary and sufficient characterization of what boundary data corresponds to a bulk tensor field. Such a criterion is known as a \emph{range characterization} of the tensor Radon transform. Ranges characterizations of various transforms have been discussed extensively for scalar and vector cases on curved backgrounds and tensor cases on flat background~\cite{sharafutdinov1994integral,PestovUhlmann2003,Monard2015}. However, it remains an active topic of research for transforms on curved background for higher rank tensor fields.

Although we lack a rigorous analytic characterization for the tensor Radon transform, our numerical methods can still effectively capture the parts of the entanglement data that do not lie within the range of the tensor Radon transform. Since the Radon transform is linear, any contribution that cannot be fitted to a bulk tensor field in the global best-fit reconstruction effectively captures the non-geometric contribution for the discrete reconstruction. More specifically, the relative boundary error~\eqref{eqn:reconerr} serves as a indicator for the fraction of the boundary entropy data which does not lie within the range, i.e., the fraction which can be considered non-geometric.

In the upcoming sections, we reconstruct geometries from boundary entanglement entropies generated by holographic systems as well as those generated numerically from a $1$d free fermion system.  We then discuss how their relative boundary errors $\cE_{\rm bdy}$ can be used as a standard to distinguish states that have a well-defined classical dual geometry from the ones that do not. This provides a direct quantitative condition for whether a state is geometric.

For entanglement entropies generated by holographic systems, we expect a weakly coupled gravity dual, and therefore spacetime geometries that are generically classical at low energies. The same can not be expected for a generic many-body quantum system at criticality~\cite{Heemskerk:2009pn}, although they may still capture interesting features using a dual spacetime prescription. For instance, in a free fermion system, the conformal field theory has neither strong coupling nor a large central charge. Although it is unclear what the dual bulk description would be, it is generally expected that any geometric description must be one where gravity is strongly coupled, the leading order RT formula does not apply, and the spacetime is dominated by quantum effects. Therefore, the reconstruction is likely poor in the absence of a large number of symmetries. We will find that numerical evidence support these basic intuitions.

It is difficult to present certain reconstructions from dynamical systems in a paper. A list of the animated reconstructions for various dynamical processes are  \href{https://www.youtube.com/playlist?list=PLCjJ3kjqxOfw1aIa5c0X6KSpox1-AjM5b}{linked here}.\footnote{Link: https://www.youtube.com/playlist?list=PLCjJ3kjqxOfw1aIa5c0X6KSpox1-AjM5b}

\subsection{Holographic Reconstructions}
\label{subsec:holorecon}
As a benchmark for holographic geometry reconstructions, we first reconstruct the metric for the thermal AdS geometry in the linearized regime. The entropy data we use are generated by the minimal geodesic lengths in a BTZ geometry~\cite{Banados:1992wn} using the Ryu-Takayanagi formula. Since the data corresponds to a bulk geometry by construction, the reconstruction should show good agreement with said geometry at linear order. This includes the correct qualitative behaviour in the bulk, and a positive metric perturbation present deeper into the bulk. It must also afford relatively small reconstruction errors, which can be attributed to factors such as discretization errors, approximations made in linearization, and working on a fixed hyperbolic background despite the changes in bulk geometry.

%thermal/btz geometry
\begin{figure}
    \centering
    \includegraphics[width=0.7\textwidth]{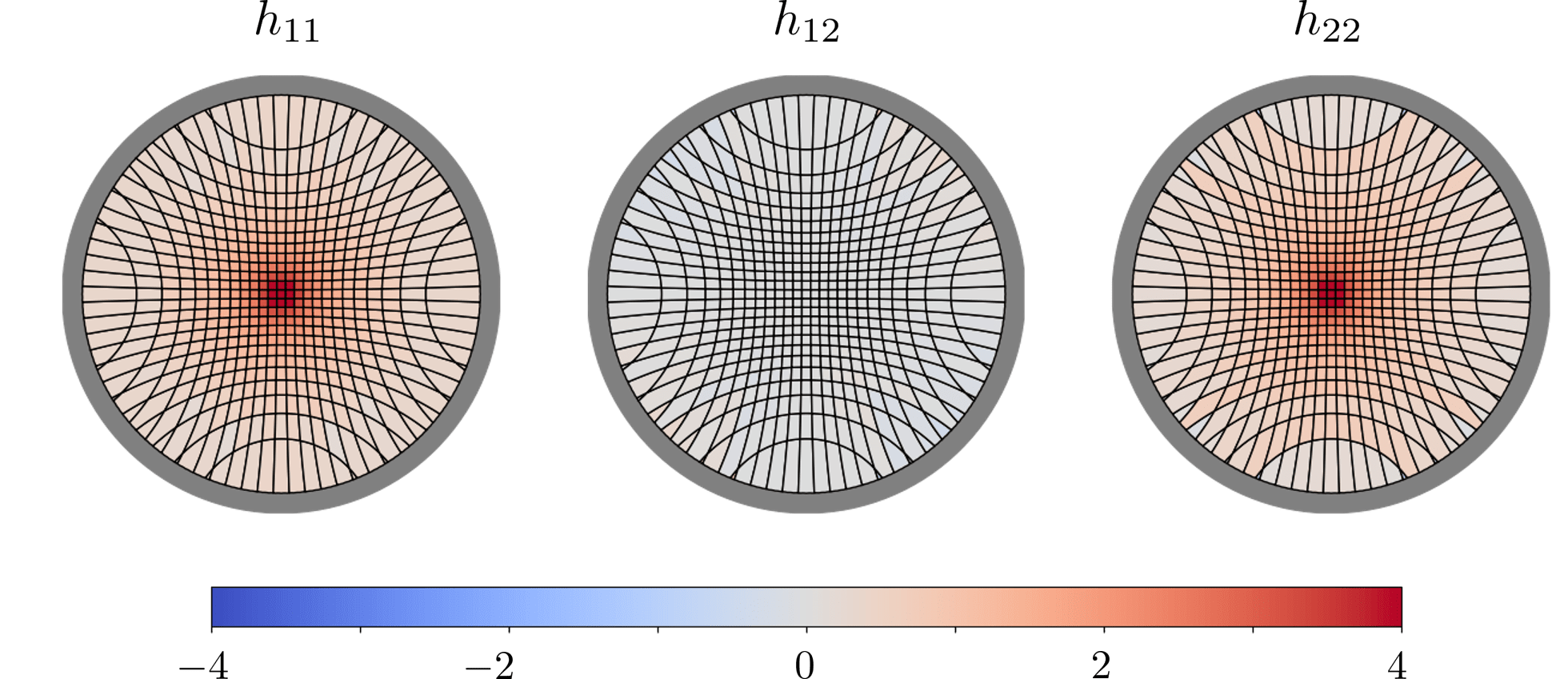}
    \caption{Thermal state reconstruction. Plot shows the individual tensor components $h_{11},h_{12},h_{22}$ of the metric perturbation from left to right, respectively. The boundary relative errors are $\cE_{\rm bdy}\lesssim 10^{-2}$ across a wide range of temperatures.}
    \label{fig:TAdSRecon}
\end{figure}

Indeed, as shown in Figure~\ref{fig:TAdSRecon}, we find good agreement with our expectations in the reconstruction. By probing deeper into the bulk, the geodesics for a thermal geometry travel through larger distances, resulting in a net positive change in the metric perturbation. Note that at linear order, we can not detect a change in bulk topology, which the entanglement data dual to a BTZ geometry should predict. This is because of the fixed background metric.\footnote{Suppose the background metric is updated using the new found metric perturbation. Then we should recover the fact that the corrected geodesics avoid the central region of the bulk. However, geodesic avoidance alone does not necessarily indicate the formation of a horizon. For instance, adding a single massive particle in AdS will lead to a back-reacted geometry where geodesics avoid the region near the inserted mass.} 

Motivated by entanglement dynamics in holography, we can also construct heuristics for the entanglement growth of a boundary theory. For instance, under a global quench, we expect qualitative growth of entanglement to be captured by
\begin{equation}
    S_A(t) \approx \min \{svt, s|A|\}\label{eq:vollaw}
\end{equation}
for $t\ge 0$, where entanglement for any region will grow linearly in time after the initial stage, until the entropy satisfies a volume law~\cite{Liu:2013qca,Couch:2019zni}. Here $s$ denotes the entropy density, $v$ the speed of entropy growth, and $t$ the time that has elapsed since the quench. The size of the boundary interval is denoted by $|A|$. The corresponding metric and curvature perturbations are shown in Figure~\ref{fig:thermalization}.

\begin{figure}
\centering
\begin{subfigure}{0.7\textwidth}
  \includegraphics[width=\linewidth]{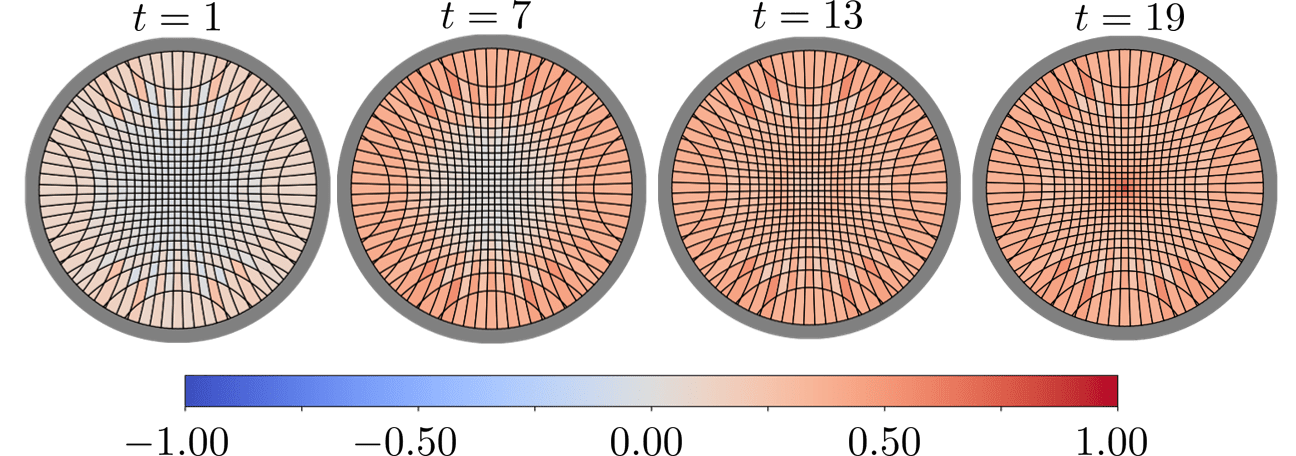}  
  \caption{Component $h_{11}$.}
  \label{fig:sub-therm-h11}
  \vspace{0.5cm}
\end{subfigure}
\begin{subfigure}{0.7\textwidth}
  \includegraphics[width=\linewidth]{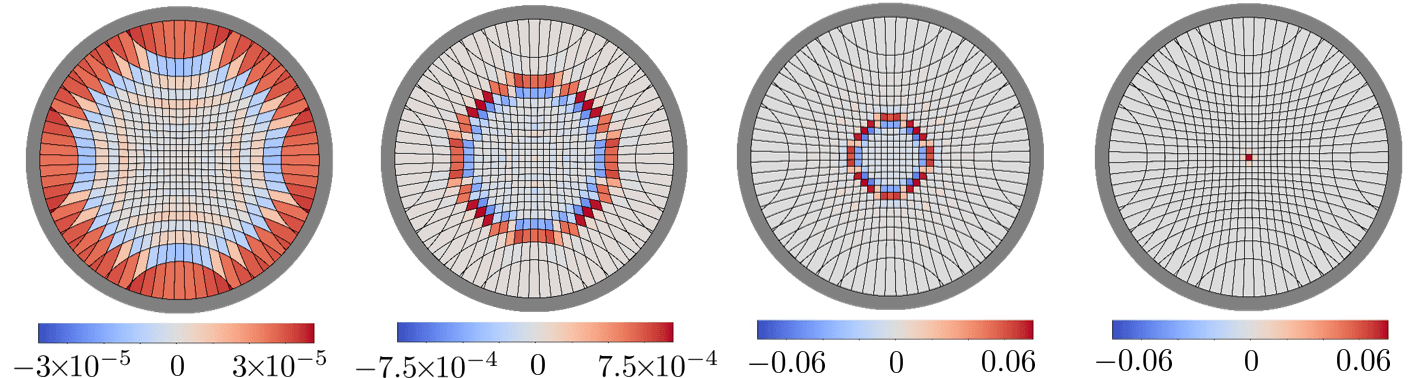}  
  \caption{Linearized scalar curvature perturbation. }
  \label{fig:sub-therm-ricci}
\end{subfigure}
\caption{Metric perturbation from volume law entropy growth~\eqref{eq:vollaw}. Plots are ordered from left to right as time increases. We only give $h_{11}$ for the sake of clarity, because $h_{12}\approx 0$ and $h_{22}\approx h_{11}$. The boundary relative error is $\cE_{\mathrm{bdy}}\approx 0.03$.}
\label{fig:thermalization}
\end{figure}

As the system thermalizes, larger subregions have a volume law entropy. The wavefront of the entanglement spread is reflected in the bulk as a spherically symmetric perturbation moving from the boundary to the center. Assuming Einstein gravity, which holds for holographic CFTs, the curvature perturbation $\delta R$ also reflects the bulk matter distribution through the linearized Hamiltonian constraint for each instance of time \cite{Czech:2016tqr}. Therefore, this thermalization process is consistent with the collapse of a spherical shell of matter~\cite{Liu:2013qca}. 

\subsubsection{Mixture of Thermal States}
There are also instances where we do not expect a well-defined geometry to exist. For example, when the state is taken from a theory where the bulk is strongly coupled and/or the state is a macroscopic superposition of certain classical geometries, quantum gravitational effects can dominate, leading to a breakdown of the classical geometric description.

In this section, as an example of a potentially non-geometric holographic state, we look at mixtures of thermal AdS geometries at various temperatures. We will consider states of the form
\begin{equation}
\rho = p\rho(T_1) + (1-p) \rho(T_2),
\end{equation}
where $\rho(T_i)$ are thermal states of a CFT with distinct temperatures $T_1$ and $T_2$. From~\cite{Almheiri:2016blp}, the von Neumann entropy of the mixture is estimated as
\begin{equation}
    S(\rho_A) = S(\rho_{1,A})+S(\rho_{2,A})+H(p), 
\end{equation}
where we write $\rho_{i,A}$ to denote the reduced state of $\rho(T_i)$ on a boundary region $A$, and where
\begin{equation}
    H(p) = -p\log p -(1-p)\log(1-p)
\end{equation}
is a Shannon-like term that corresponds to the entropy of mixing~\cite{Almheiri:2014lwa}. Physically, such a state can be created by superposing two thermofield double states at different temperatures, and then tracing out one side of the wormhole. For CFTs with strong bulk gravity where $G_N \approx 1$, we find the superposition incurs a large error, indicating non-geometric configurations in the bulk. However, for a CFT with a weakly coupled dual where $G_N\ll 1$, the geometry smoothly interpolates between the two temperatures at the linearized level, consistent with our expectation that the entropy operator is proportional to the area operator at leading order in $N$. In the language of Radon transform, this is caused by the entropy of mixing $H(p)$ being a non-geometrical contribution.

\begin{figure}[H]
    \centering
    \includegraphics[width=0.7\textwidth]{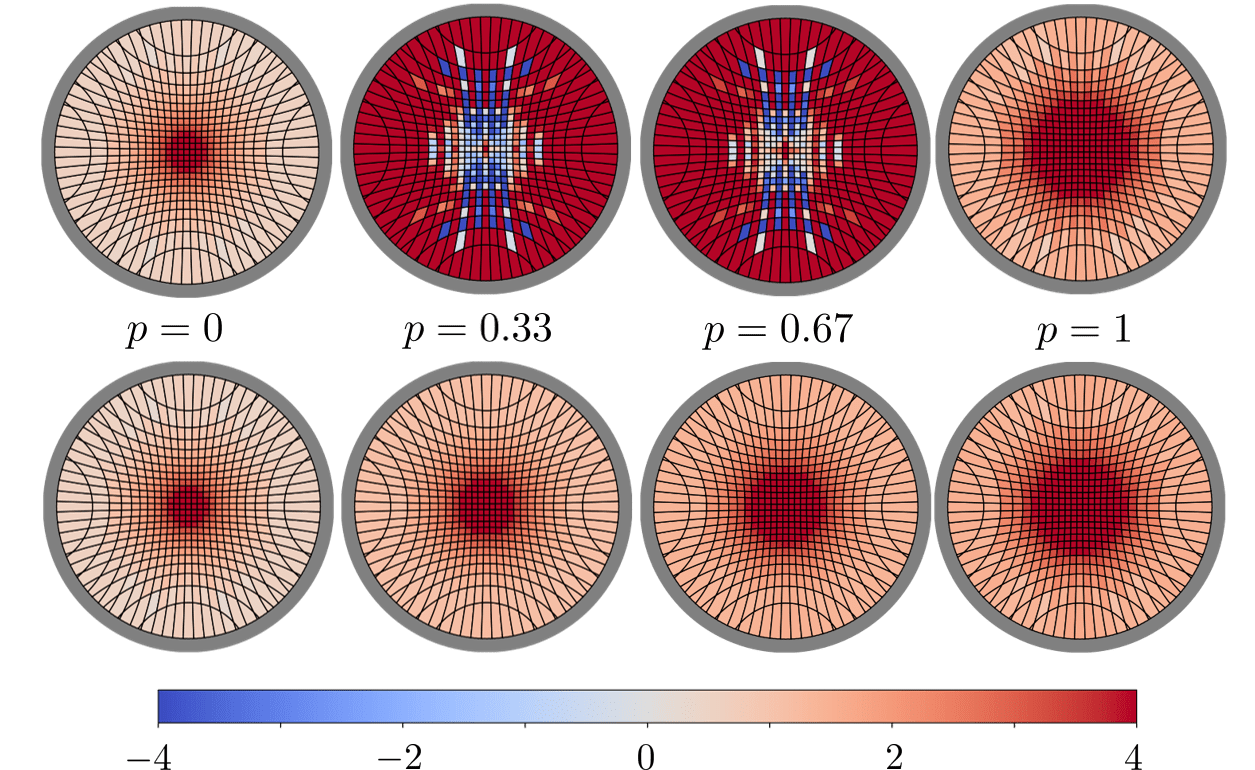}
    \caption{Component $h_{11}$ of a mixture of two thermal geometries at two distinct temperatures. Top diagram for $G_N \approx 1$ and bottom for $G_N\ll 1$, both with $L_{\rm AdS} = 1$. From left to right, we increase the mixing ratio $p$ and the geometry transitions smoothly from one at lower temperature to the other at higher temperature when the gravitational coupling is weak. The reconstruction is dominated by artifacts and have large error if the coupling is strong, leading to a contribution $G_N H(p)$ in the leading order.}
    \label{fig:thermalsuperpose}
\end{figure}

\subsection{1D Free Fermion}

Given that the reconstruction procedure relies only on the von Neumann entropy of a state, we may also apply it to various quantum many-body systems where we do not necessarily expect the conformal field theory to be dual to a semi-classical bulk theory with weakly coupled gravity. The lack of a semi-classical geometric dual will generally be reflected in the presence of a large boundary reconstruction error. One example where the separation between geometric and non-geometric states is particularly manifest is the contrast between the reconstruction of holographically motivated data and that of a $1$d free fermion.

Generic low-energy states for free fermion systems are generally believed to have a highly quantum bulk because the conformal field theory is non-interacting, and has a small central charge. Heuristically, the strong-weak nature of the holographic duality suggests that such systems should result in a strongly coupled bulk geometry, if such geometries even exist. In this section, we give a few examples indicating that generic excited states in the free fermion system are indeed non-geometric. However, we shall also see that certain large-scale geometric properties may nevertheless be present, even if the overall state is ostensibly non-geometric.

For our reconstruction, we consider a $100$-site massless free fermion system with periodic boundary conditions, and with Hamiltonian
\begin{equation}
    \hat{H}_0 = -\sum_i \hat{a}^{\dagger}_i \hat{a}_{i+1}+ \mathrm{h.c.},
    \label{eqn:ffham}
\end{equation}
where, as usual, the creation and annihilation operators satisfy the canonical anti-commutation relations 
\begin{align}
\{\hat{a}_i, \hat{a}_j^{\dagger}\}=\delta_{ij},\quad \{\hat{a}_i, \hat{a}_j\} = 0,\quad \{\hat{a}_i^{\dagger}, \hat{a}_j^{\dagger}\}=0.
\end{align}
We shall also consider various quench dynamics and deformations of the free fermion system.

The free fermion system is described by a $(1+1)$-d conformal field theory in the thermodynamic limit, with central charge $c=1/2$ \cite{DiFrancesco:639405}. Here we study directly the regulated lattice model. Any such model of non-interacting fermionic modes can be studied efficiently numerically with a cost scaling polynomially with the number of fermion modes $N$. In particular, we can compute all of the single interval entanglement entropies via~\cite{PeschelEisler2009}. The key point is that the state of the system is Gaussian. In particular, given a subset $A$ of the fermions, the reduced density matrix is $\rho_A \propto e^{- K_A}$ with $K_A$ quadratic in $\hat{a}$ and $\hat{a}^\dagger$,
\begin{equation}
    K_A = \hat{a}^\dagger k_A \hat{a}.
\end{equation}
Here $k_A$ is an $|A| \times |A|$ Hermitian matrix that determines all correlation functions of fermions in $A$. It is related to the 2-point function via
\begin{equation}
    G^A_{ij} = \langle \hat{a}_i^\dagger \hat{a}_j \rangle \big|_{i,j \in A} = \left(\frac{1}{e^{k^{\mathbb{T}}_A}+1} \right)_{ij}.
\end{equation}
The entropy of $A$ is then determined by the eigenvalues of $k_A$, but there is also a direct formula in $G^A$:
\begin{equation}
    S(A) = - \text{tr}\left[ G^A \ln G^A  + (1-G^A) \ln (1-G^A) \right].
\end{equation}

To fix our background geometry, we numerically compute the ground state entanglement of the critical Hamiltonian $\hat{H}_0$. To consider non-vacuum emergent geometries, we consider states $\ket{\psi} \ne \ket{0}$, which are not necessarily energy eigenstates of $\hat{H}_0$. For instance, these excited states can be generated by first deforming the Hamiltonian away from criticality through a mass deformation
\begin{align}
\hat{H}_0 \mapsto \hat{H}_0 +m\hat{H}_1,
\end{align}
and then finding the ground state $|\psi_m\rangle$ of the perturbed Hamiltonian. For small $m$, the new ground state $\ket{\psi_m}$ is generically an excited state with relatively low energy in the unperturbed system.

We will study reconstructions in both the static and dynamical cases. For the static case, we reconstruct the emergent geometry of the new ground states $\ket{\psi_m}$ by finding the best-fit geometry using the discrete Radon transform. We do this for a number of distinct values of $m$. For the dynamical case, we consider the quench dynamics corresponding to the deformation $\hat{H}_1$. We start with some fixed deformed ground state $\ket{\psi_m}$ and then time evolve the state with the free Hamiltonian $\hat{H}_0$ and reconstruct the geometry that corresponds to each time step: $\ket{\psi_m(t)} = \exp(-it\hat{H}_0)\ket{\psi_m}$.

\subsubsection{Local Deformations}
In this section and the next, we will consider deformations of the form
\begin{equation}
    \hat{H}_1 = \sum_{i\in S}\hat{a}^{\dagger}_i\hat{a}_{i+1}+\mathrm{h.c.},\label{eq:mass_deformation}
\end{equation}
where $m$ is a small positive parameter, and where $S$ is a set of sites where we introduce such deformations. The perturbation serves to deform the ground state wavefunction around the sites near $S$. We label the sites counter-clockwise from $0$ to $99$, starting with site $0$ aligned with the positive $x$-axis (i.e., site $i$ sits on the unit circle at angle $\theta_i = 2\pi i/100$). We will consider local perturbations where $S$ consists of one or two sites in this section, and more global perturbations in the next section.

Figure~\ref{fig:singlesite_nosmooth} shows the best-fit geometry arising from the ground state $\ket{\psi_m}$ corresponding to a local mass deformation at a single site $S=\{0\}$, located along the positive $x$-axis. We can see that, in contrast to the holographic reconstructions in section~\ref{subsec:holorecon}, the geometries shown in Figure~\ref{fig:singlesite_nosmooth} are heavily dominated by noise and local artifacts. The corresponding relative error of reconstruction is also larger by more than an order of magnitude as compared to the holographic cases (see Figure~\ref{fig:histogram_err} in section~\ref{sec:geodetect} for a more comprehensive comparison of relative errors). The large error of reconstruction can be seen as a hallmark of the fact that the underlying state is non-geometric.

\begin{figure}[H]
\centering
  \includegraphics[width=0.7\linewidth]{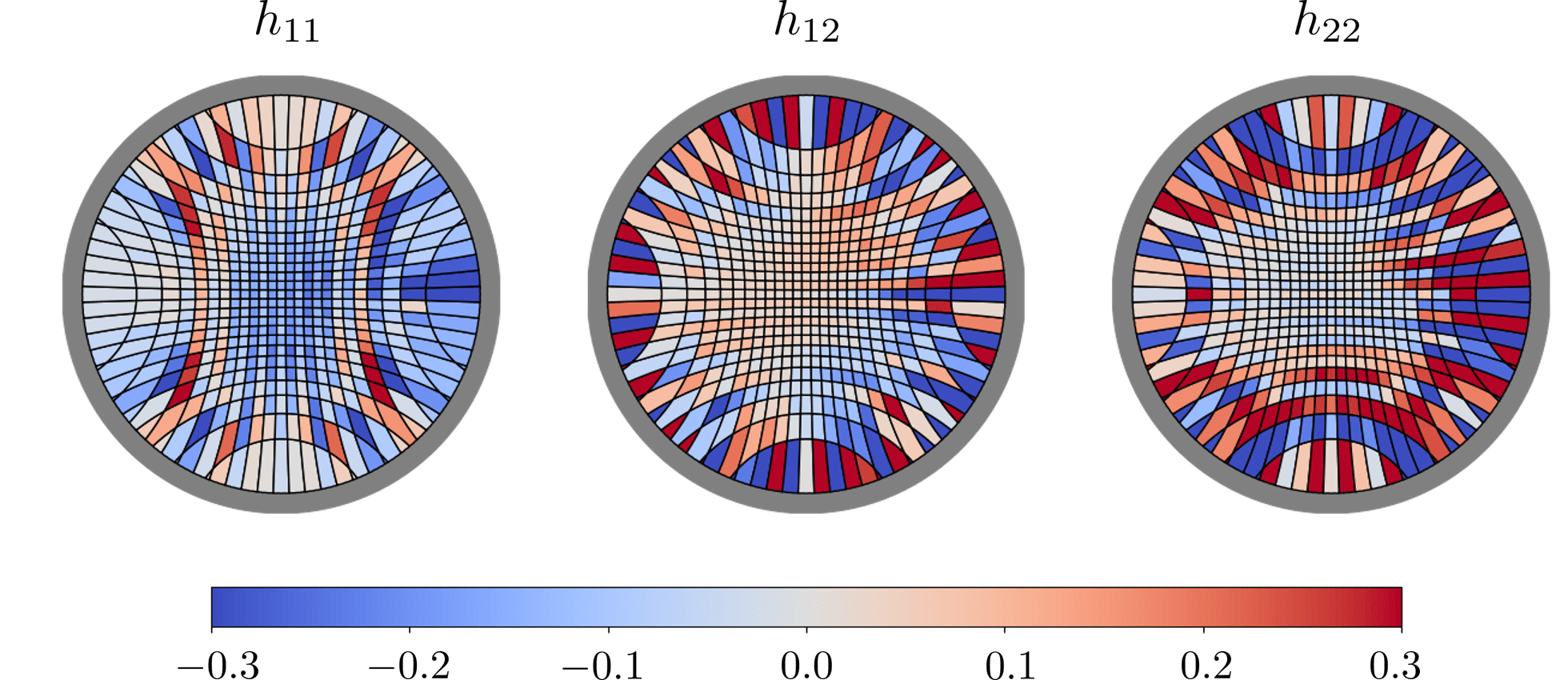}  
  \caption{Components of the reconstructed metric tensor perturbation corresponding to the ground state of a mass deformed Hamiltonian, with the deformation located at a single site located along the positive $x$-axis. The boundary relative error is $\cE_{\mathrm{bdy}} \approx 0.62$.}
  \label{fig:singlesite_nosmooth}
\end{figure}
Note that the lack of geometric features in the reconstruction does not necessarily indicate that the reconstruction procedure is flawed. In fact, non-geometric features are expected because not all boundary data should produce geometric reconstructions. Our reconstruction procedure has equipped us with information of telling apart when that will happen. We will discuss this further in section~\ref{sec:geodetect}. 

Nevertheless, there are cases where some large-scale geometric features can be extracted from the plot. This is especially clear in the dynamical case, where we consider the quench dynamics obtained by evolving the deformed ground state $\ket{\psi}$ using the free fermion Hamiltonian $\hat{H}_0$. The reconstruction, performed timeslice by timeslice, is shown in Figure~\ref{fig:singlesite_nosmooth_quench} (only the dominant component $h_{11}$ is shown). While it may not be entirely clear, Figure~\ref{fig:singlesite_nosmooth_quench} reveals a large scale shockwave that originates from the deformation site and then travels across the bulk before being reflected at the left boundary.

\begin{figure}[H]
\centering
  \includegraphics[width=0.7\linewidth]{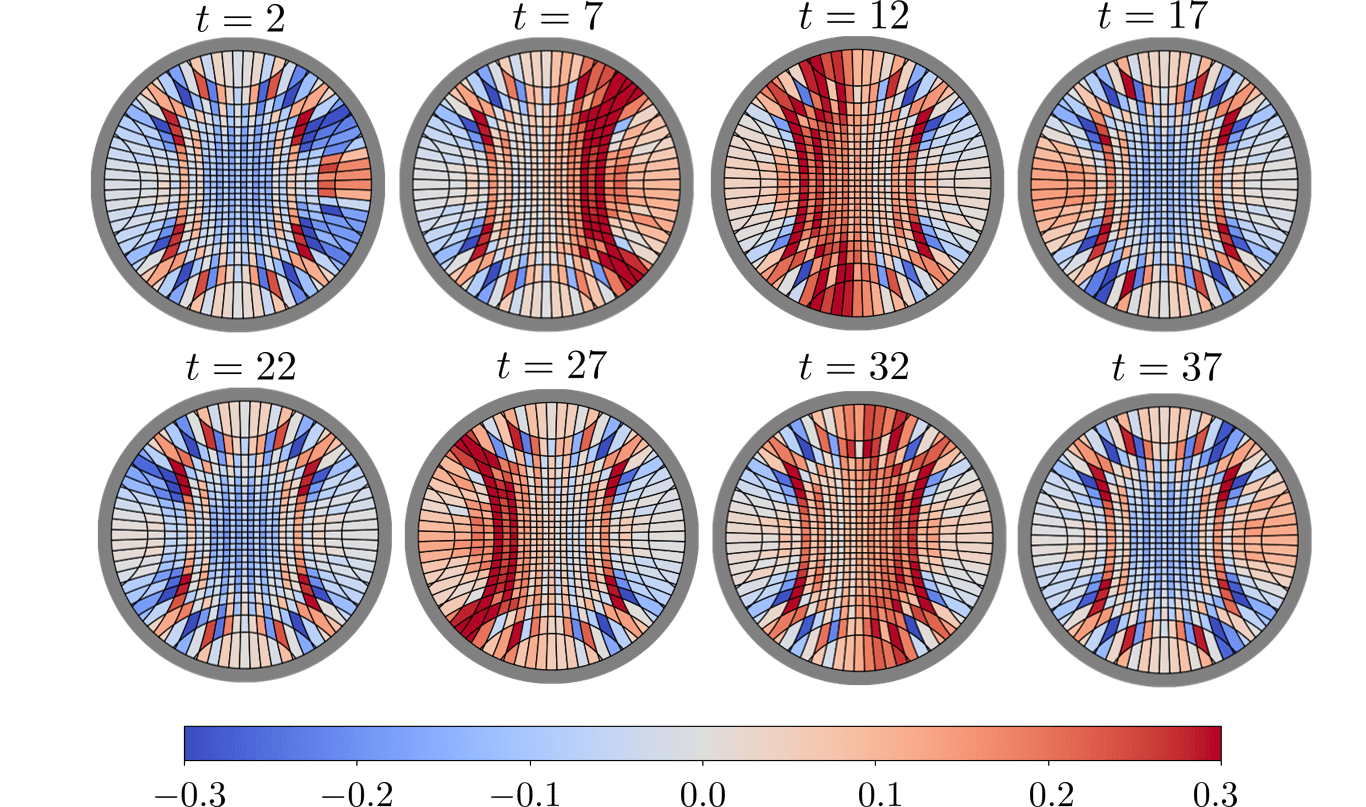}  
  \caption{Quenched time dynamics of the reconstructed geometry corresponding to a single mass deformation along the $x$-axis (see Figure~\ref{fig:singlesite_nosmooth} for the $t=0$ geometries). Only the dominant component $h_{11}$ is shown here. The average reconstruction error over time is $\cE_{\mathrm{bdy}}\approx 0.52$.}
  \label{fig:singlesite_nosmooth_quench}
\end{figure}

The large-scale geometry is heavily obscured by non-geometric noise in Figure~\ref{fig:singlesite_nosmooth_quench}. 
To better extract the large-scale features that we consider to be relevant, we must filter out the small-scale artifacts what we consider to be ``noise''. 

To more clearly extract the underlying large-scale features of what appears to be waves generated by the perturbations, we perform some pre-processing of the boundary data to smooth out the small-scale noise. This is done by averaging the entanglement entropies of two neighbouring intervals of the same size. More precisely, suppose $S(i,j)$ is the von Neumann entropy on the interval from site $i$ to site $j$. Then the smoothing procedure we applying is similar to a filter by convolving with a 2-site window function such that 
\begin{equation}
    S(i,j) \rightarrow \frac{S(i,j)+S(i+1,j+1)}{2}.
\end{equation}

This produces a less noisy reconstruction and significantly reduces the amount of non-geometric contributions to boundary relative errors. Intuitively, the smoothing procedure acts as a low-pass filter in the space of boundary intervals. This serves to remove the short distance artifacts normally associated with non-geometricality. While we used a specific filter in this example, one may apply more general filter constructions for other purposes.\footnote{It is reasonable to suspect that the two site averaging of entanglement entropy works well as a filter here because UV details in the free fermion Hamiltonian (\ref{eqn:ffham}) with Fermi momentum $\pm \pi/2$ contributed to non-geometric noise. We might therefore expect that we can remove such non-geometric contributions through coarse-graining by grouping together adjacent sites in pairs. While this grouping does indeed remove some non-geometric noise, it does not remove it entirely: the reconstruction error after grouping is $0.1\lesssim \mathcal{E}\lesssim 0.3$, which is in between what is shown in Figure~\ref{fig:singlesite_nosmooth_quench} and Figure~\ref{fig:singlesite_smooth_quench}. Given that the smoothed data reconstruction error is still significantly larger than that of completely geometric data, it is not clear if the non-geometric contribution completely reside in the UV. }

We wish to emphasize that this filtering procedure is not a part of our reconstruction process. It is merely here to assist us in gaining some intuition regarding the qualitative physical behaviour of this type of dynamical process. One could imagine that the filtering reveals what the bulk physical process might have looked like, had the state actually been geometric.

The corresponding smoothed geometries are shown in Figure~\ref{fig:singlesite_smooth_quench}. It can be seen that smoothing significantly reduces local noise, and clearly reveals large-scale time dynamics associated with the quench that looks like a (shock)wave.

\begin{figure}[H]
\centering
  \includegraphics[width=0.7\linewidth]{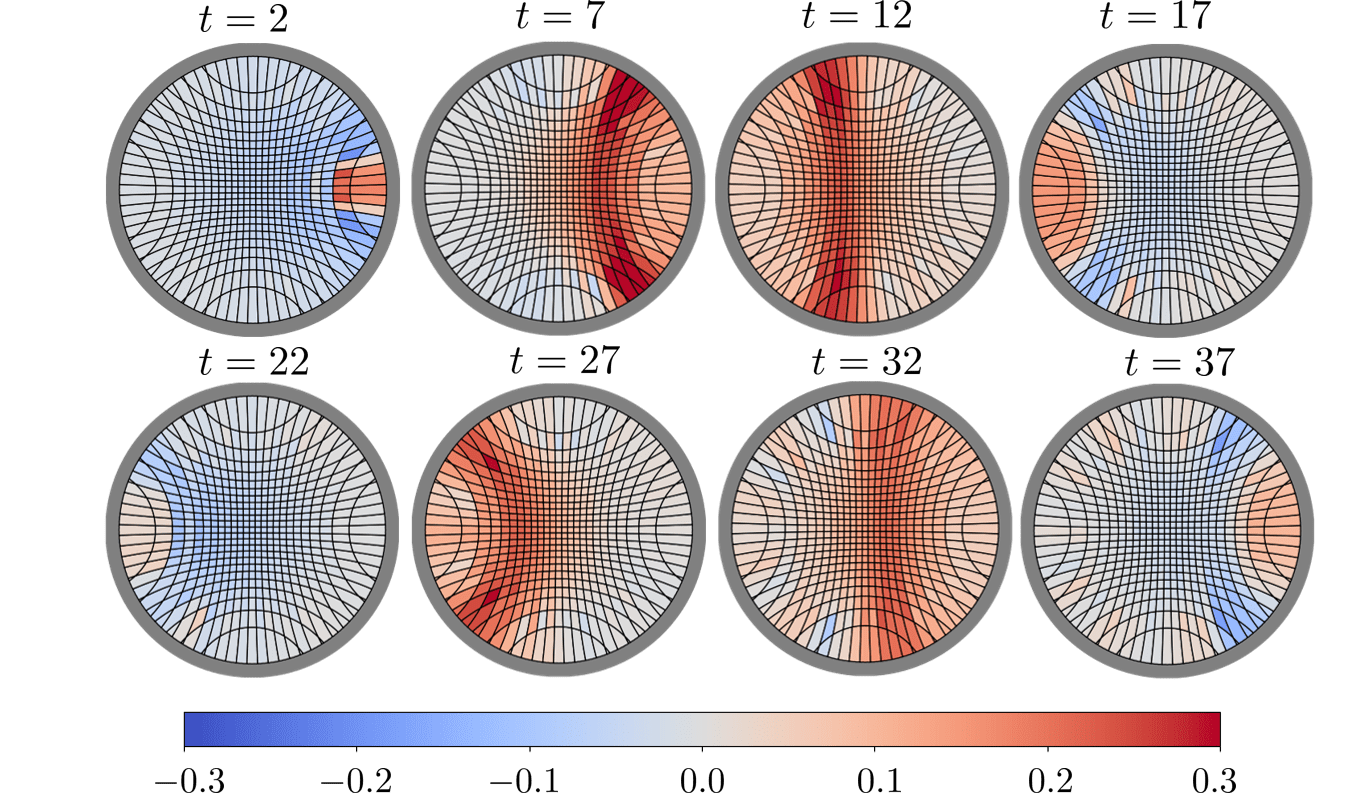}  
  \caption{Quenched time dynamics of the reconstructed geometry corresponding to a single mass deformation along the $x$-axis. The boundary data is pre-processed by smoothing to remove small-scale details (see Figure~\ref{fig:singlesite_nosmooth_quench} for the corresponding unsmoothed reconstruction). Only the dominant component $h_{11}$ is shown here. The relative error is $0.03 \lesssim\cE_{\mathrm{bdy}} \lesssim 0.1$.}
  \label{fig:singlesite_smooth_quench}
\end{figure}

As another example, the quench dynamics of a state with two distinct deformations at sites $S=\{0,30\}$ is shown in Figure~\ref{fig:twosite_unsmooth_quench} (unsmoothed) and Figure~\ref{fig:twosite_smooth_quench} (smoothed).

\begin{figure}[H]
\centering
  \includegraphics[width=0.7\linewidth]{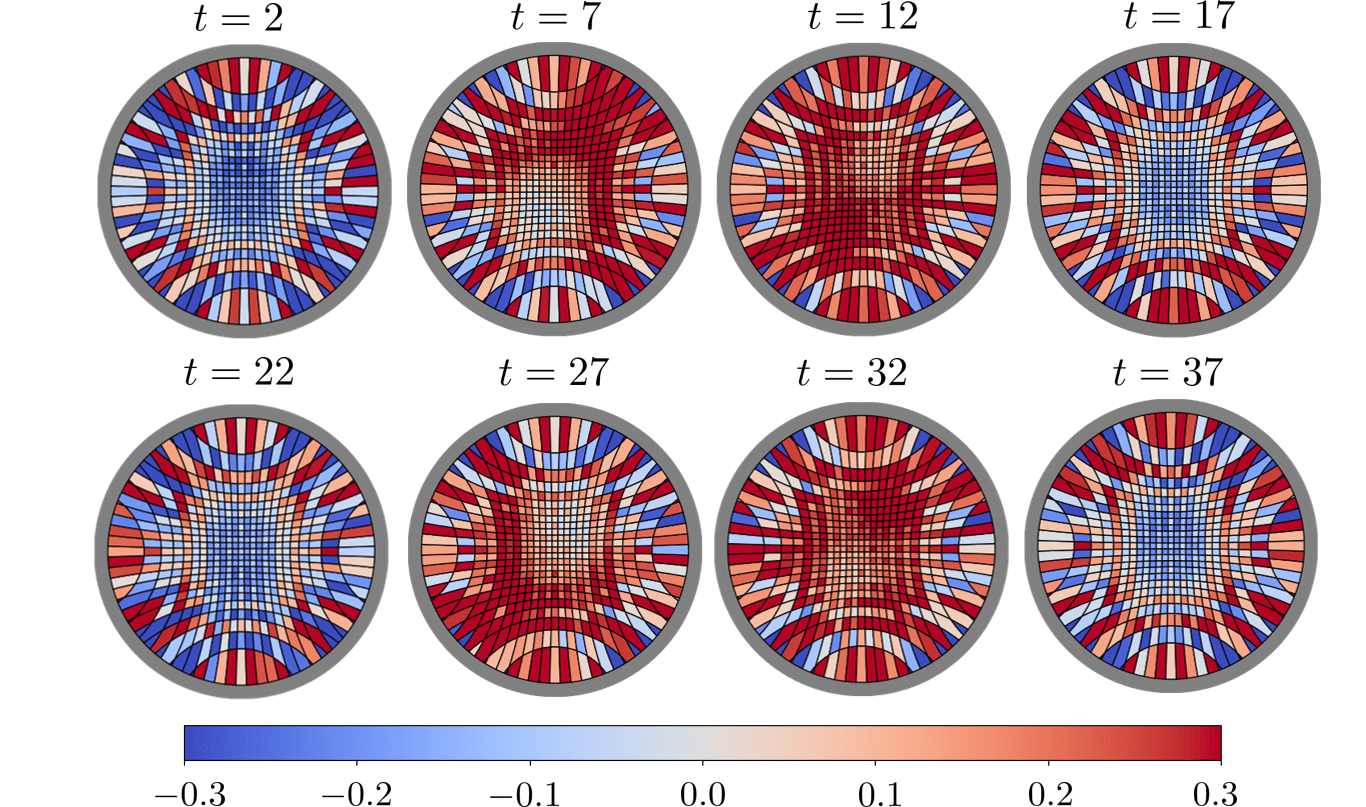}  
  \caption{Quenched time dynamics of the reconstructed geometry corresponding to a mass deformation at two distinct sites ($S = \{0,30\}$). The trace of $h_{ij}$ is shown here. The boundary relative error is $\cE_{\mathrm{bdy}} \approx 0.41$.}
  \label{fig:twosite_unsmooth_quench}
\end{figure}

\begin{figure}[H]
\centering
  \includegraphics[width=0.7\linewidth]{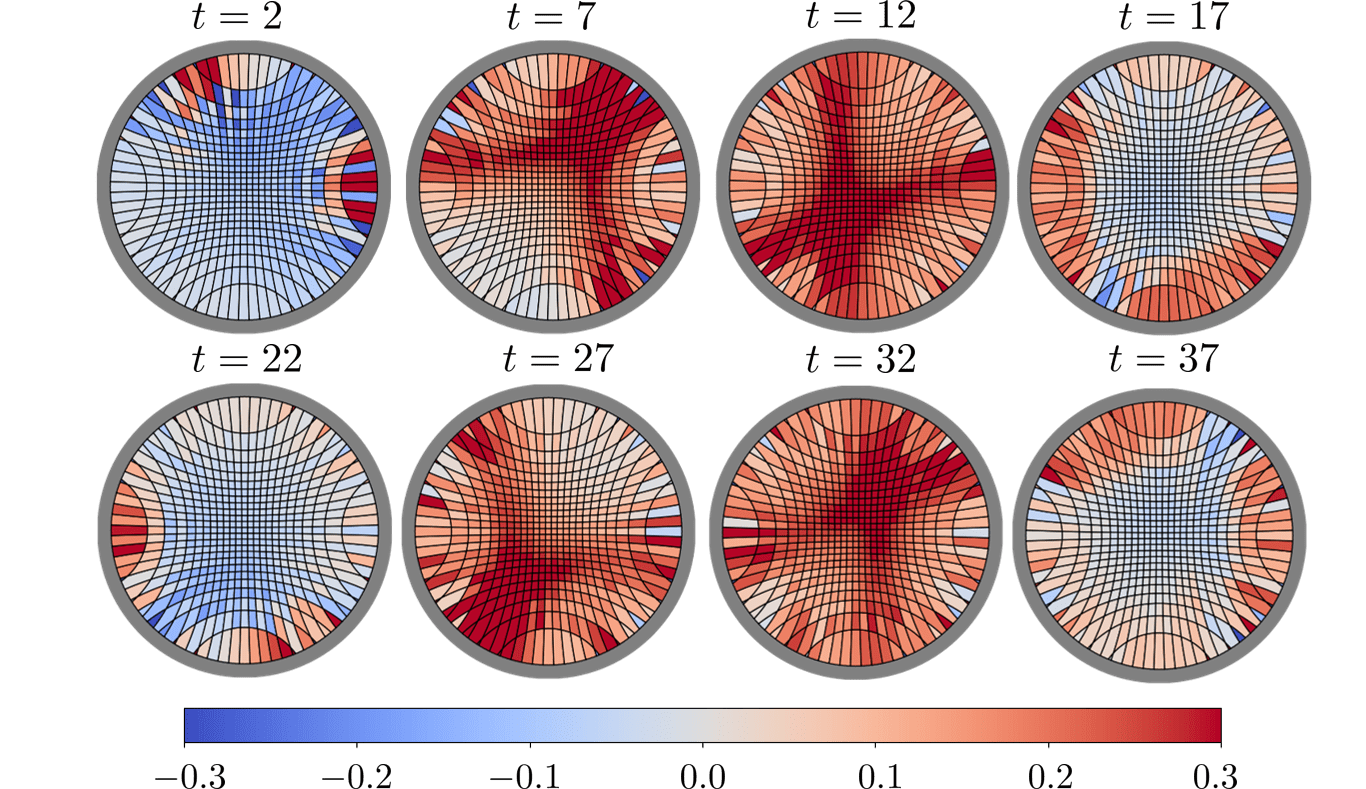}  
  \caption{Quenched time dynamics of the reconstructed geometry corresponding to a mass deformation at two distinct sites ($S = \{0,30\}$). The boundary data is pre-processed by smoothing to remove small-scale details. The trace of $h_{ij}$ is shown here. The boundary relative error is $\cE_{\mathrm{bdy}} \approx 0.05$.}
  \label{fig:twosite_smooth_quench}
\end{figure}

In all cases, we see that the free fermion Hamiltonian will generate seemingly non-interacting waves that traverse through the hypothetical bulk spacetime. Since the underlying time dynamics is integrable, the same entanglement feature recurs after the waves traverse the entire system; we show one such iteration in our figures.

\subsubsection{Global Deformations}
Similar to the previous analysis for local deformations, we can also consider global deformations in the same vein. In this section, we will consider Hamiltonian perturbations of the same form as~\eqref{eq:mass_deformation}, but now with deformations located at every other site, i.e., $S=\{0,2,4,\cdots, 98\}$.

In Figure~\ref{fig:global_massdef}, we show the geometries reconstructed from the ground state $\ket{\psi_m}$ of a Hamiltonian with the aforementioned global deformations, plotted across a range of $m$ values. Again, we find the overall geometry to be highly dominated by noise, with a large relative error of reconstruction indicating that the underlying state is non-geometric. The relative error of reconstruction generally becomes worse with increasing values of $m$ (see Figure~\ref{fig:mass_def_err}). Qualitatively, the reconstructions for the different values of $m$ appear similar, the main difference being the overall magnitude of the metric perturbation.

\begin{figure}[H]
\centering
  \includegraphics[width=0.7\linewidth]{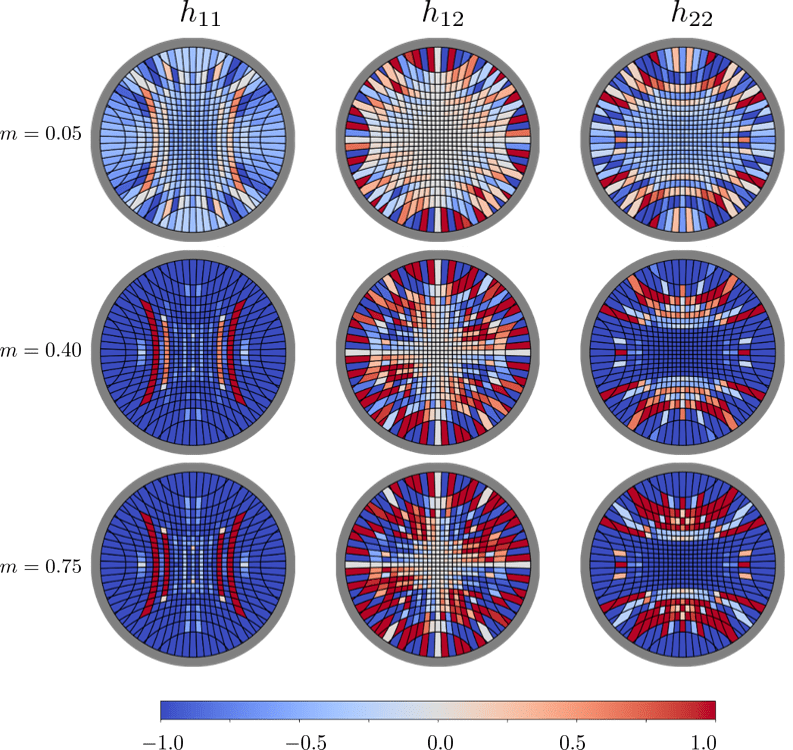}  
  \caption{Components of the reconstructed metric tensor perturbation corresponding to the ground state of a globally mass deformed Hamiltonian (with $m$ values as shown), with deformations located at every other site. The relative errors are approximately $\cE_{\mathrm{bdy}} \approx 0.25$ for this set of plots (see Figure~\ref{fig:mass_def_err}).}
  \label{fig:global_massdef}
\end{figure}

Looking at the corresponding quench dynamics (see Figure~\ref{fig:global_massdef_quench_unsmooth} for the unsmoothed reconstruction and Figure~\ref{fig:global_massdef_quench_smooth} for the smoothed version), we see that the large-scale geometry involves a spread of entanglement that is qualitatively similar to the configuration obtained in the thermalization scenario considered in section~\ref{subsec:holorecon}. However, in this case the bulk ``matter'' is non-interacting. The integrability of free fermion Hamiltonian ensures that the falling shockwave returns to the boundary after some finite time instead of collapsing into a steady configuration.

\begin{figure}[H]
\centering
  \includegraphics[width=0.7\linewidth]{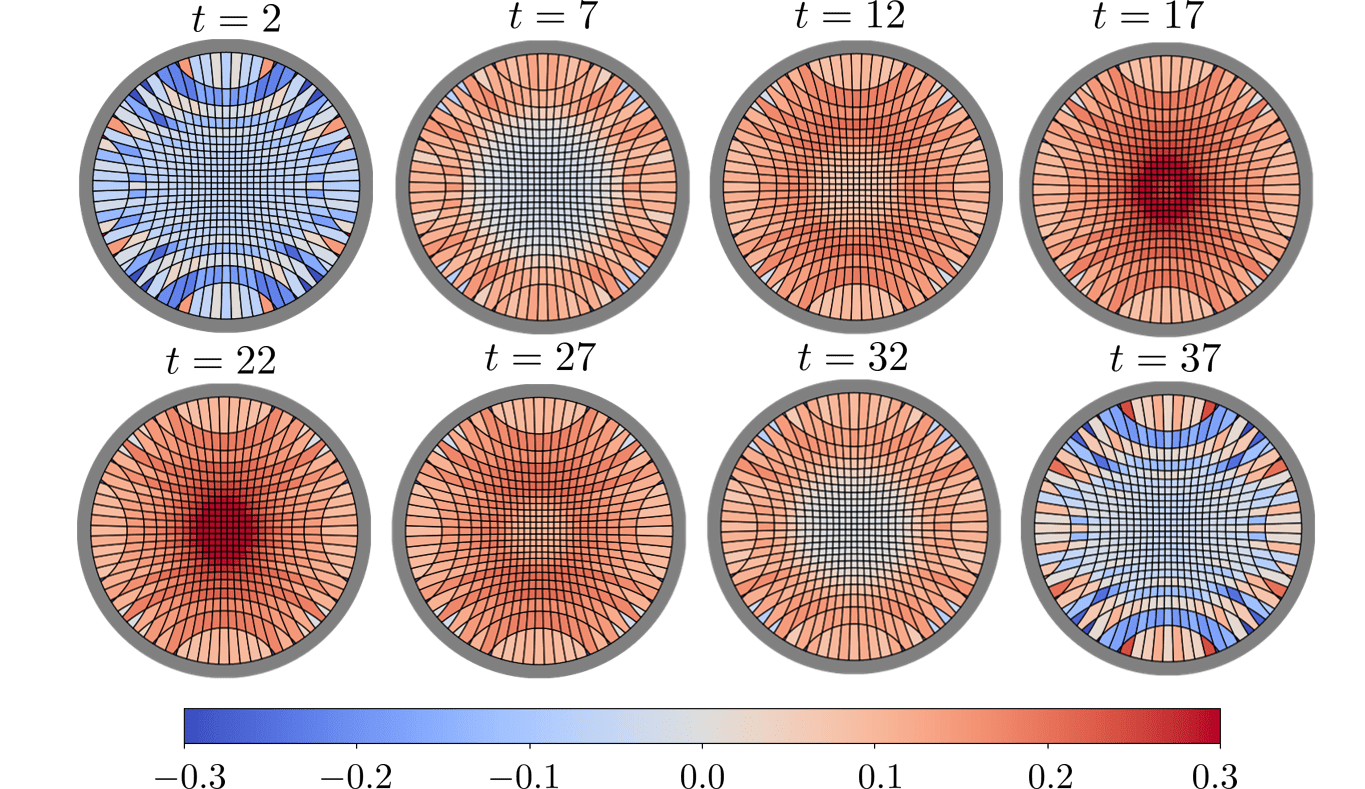}  
  \caption{Quenched time dynamics of the reconstructed geometry corresponding to a global mass deformation (see Figure~\ref{fig:global_massdef} for the $t=0$ geometries). The component $h_{22}$ is shown here. The boundary relative error is maximal $\cE_{\mathrm{bdy}}\approx 0.22$ near $t=0$ and is minimal $\cE_{\mathrm{bdy}} \approx 0.006$ around $t=20$.}
  \label{fig:global_massdef_quench_unsmooth}
\end{figure}

\begin{figure}[H]
\centering
  \includegraphics[width=0.7\linewidth]{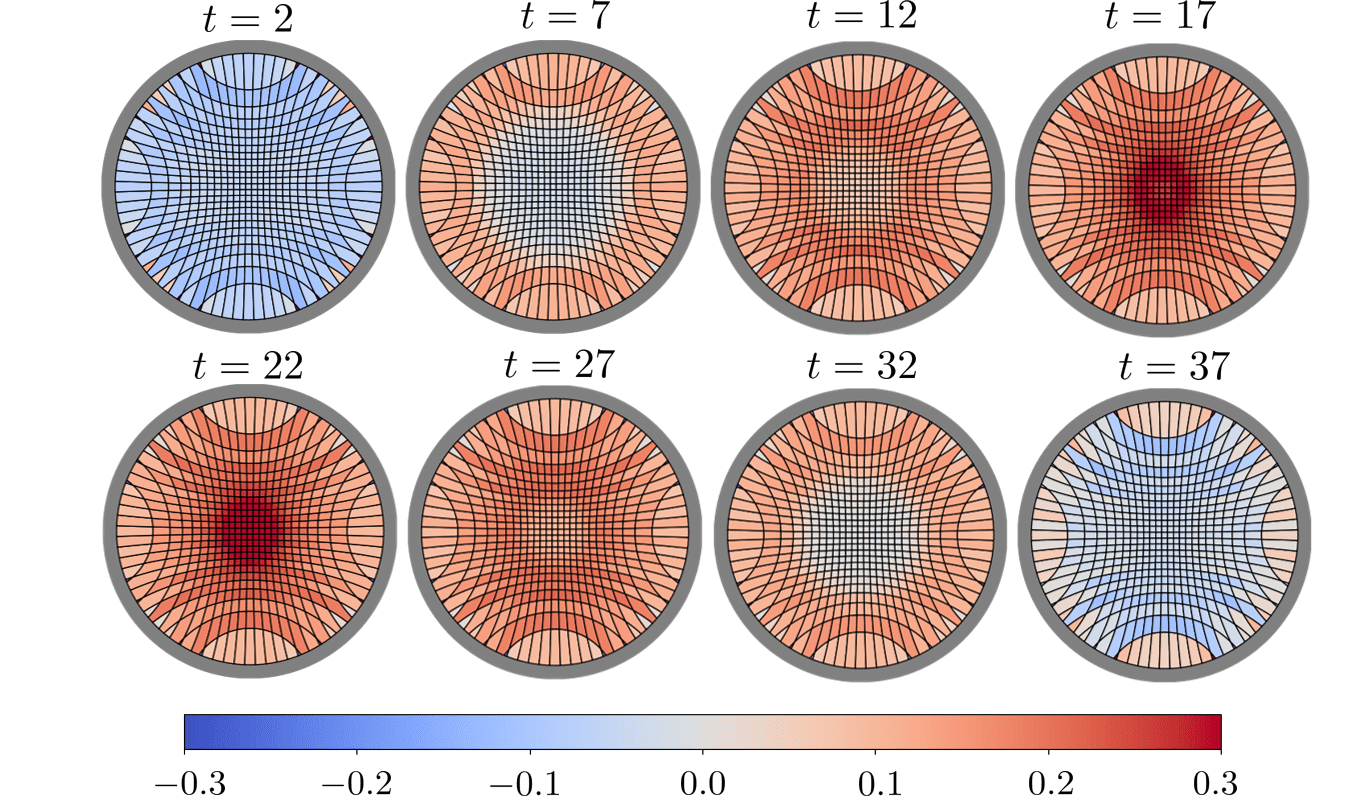}  
  \caption{Quenched time dynamics of the reconstructed geometry corresponding to a global mass deformation. The boundary data is pre-processed by smoothing to remove small-scale details. The dominant component $h_{22}$ is shown here. The boundary relative error is largely similar to the unsmoothed version except at the initial/final times, where it is significantly reduced by the smoothing.}
  \label{fig:global_massdef_quench_smooth}
\end{figure}

\subsection{Random Disorder}

Finally, we can take a look at the geometries arising from a random perturbation of the free fermion system. Disorder is introduced by adding a random external field at each site of the spin chain
\begin{equation}
    \hat{H}_{\rm disorder} = \hat{H}_0+\sum_i w_i\hat{a}^{\dagger}_i\hat{a}_i,
\end{equation}
where each parameter $w_i$ is a random parameter chosen i.i.d. from a uniform random distribution over the interval $[-0.1, 0.1]$.

In Figure~\ref{fig:random_disorder}, we show a generic sampling of ground states obtained from such random disorder. As would be naively expected, the resulting ground states have generically large relative errors, with no discernible large scale features (quench dynamics also reveal no discernible large-scale patterns) or symmetries.

\begin{figure}[H]
\centering
\includegraphics[width=0.7\linewidth]{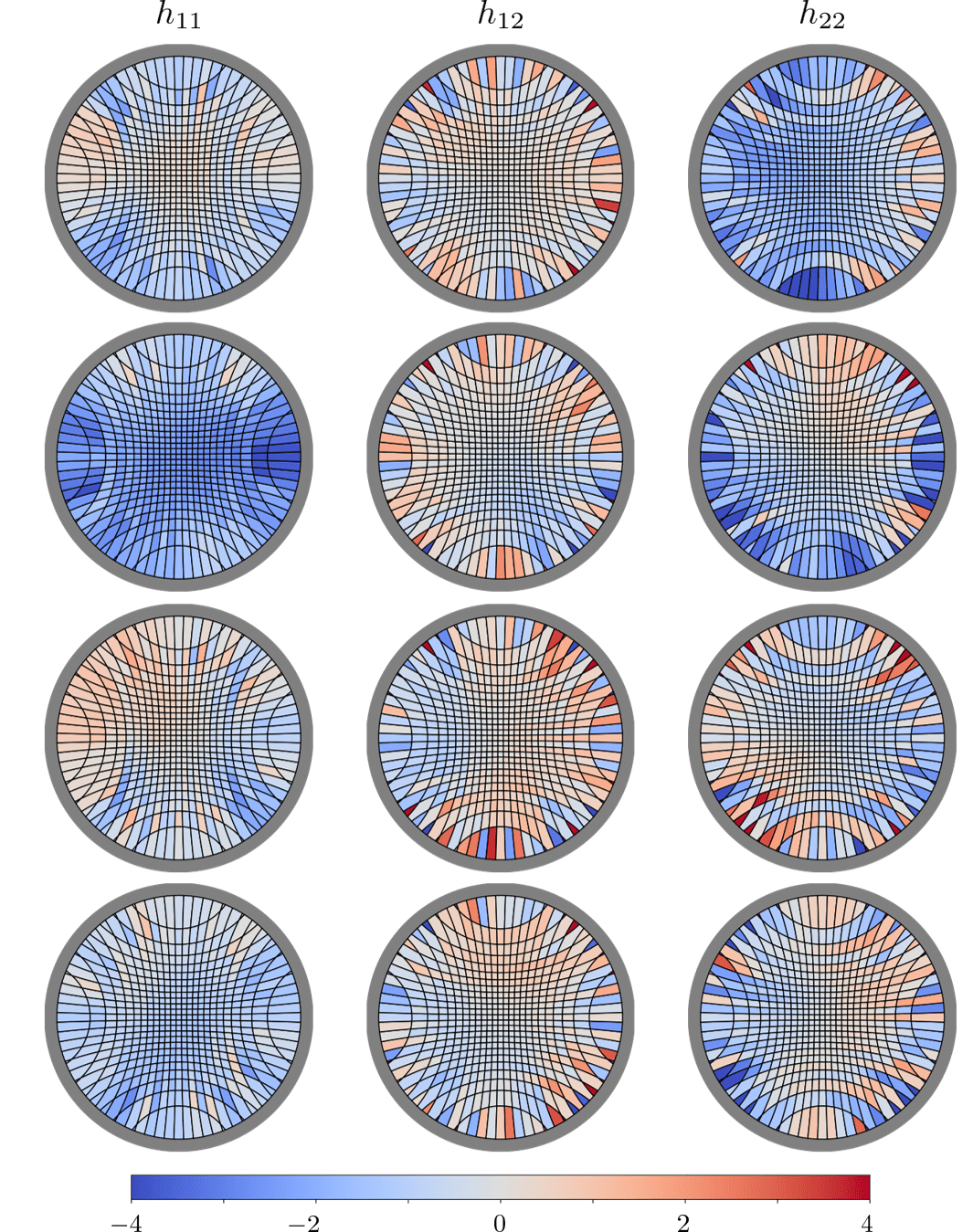}  
  \caption{A bulk best-fit reconstruction of a state generated with random disorder. Each row is a reconstruction of a particular instance with random disorder. The boundary relative errors vary across a wide range of values, but typically $0.05\lesssim \cE_{\mathrm{bdy}} \lesssim 1$ (see Figure~\ref{fig:histogram_err}).}
  \label{fig:random_disorder}
\end{figure}

\section{Geometry Detection}
\label{sec:geodetect}
In this section, we summarize the results of the previous reconstructions and comment on their similarities and differences. We also emphasize that a small relative error of reconstruction is indicative of a classical bulk geometry in which boundary entropies are computed via the RT formula, whereas a large error may indicate some combination of (1) no classical geometry, (2) no classical geometry near the AdS background, or (3) a classical geometry but where entropies have non-trivial contributions from sources other than the area of RT surfaces, e.g. higher order corrections. For the purpose of this discussion, we refer to all three of these negative cases as non-geometric, but it would clearly be desirable to distinguish them further in future work.

From the results of the mass deformation, we confirm that generic low-energy states of a free fermion system do not appear to have a good geometric reconstruction. Comparing the ground states of the mass deformed $1$d free fermion with that of the thermal AdS state, we see that the relative errors for the $1$d free fermion are significantly larger than those of the holographic data (see Figures~\ref{fig:histogram_err},\ref{fig:mass_def_err},\ref{fig:quench_err_time}). Smoothing of the boundary data reduces the relative error of reconstruction by a significant amount (as would be expected due to the reduction of small-scale artifacts). However, even with smoothing, the level of error is clearly distinguishable from the reconstructions of known geometric states such as thermal AdS.

In particular, ground states of the locally deformed $1$d free fermion Hamiltonian have a relative error that is of order $\sim 1$. Smoothing of the boundary data brings this down to $\sim 10^{-1}$, which is still an order of magnitude larger in comparison to the thermal AdS reconstructions, which have a relative error on the order of $\sim 10^{-2}$. The unsmoothed relative error for the global deformation of the $1$d free fermion hovers around $\sim 10^{-1}$. Smoothing brings this down significantly to $\sim 10^{-2}$. The most likely explanation for the pronounced effect of smoothing here is due to the global symmetry present in the globally deformed state. We also see that the relative errors for reconstructions corresponding to random disorder in the $1$d free fermion tends to interpolate between the results for local and global deformations across different random instances, with the best case relative errors being $\lesssim 10^{-1}$ and the worst case being $\sim 1$. A summary of these results is plotted in Figure~\ref{fig:histogram_err}, which shows a histogram of the number of instances for each type of reconstruction, as we vary some of the relevant parameters (i.e., temperature, mass, etc). In Figure~\ref{fig:mass_def_err}, we also plot the relative error of reconstruction as a function of the mass deformation parameter $m$. As would be expected, we see that a larger mass deformation, and hence a larger deviation from criticality, contributes positively to the relative error of reconstruction regardless of smoothing. 

\begin{figure}[H]
\begin{subfigure}[t]{0.48\textwidth}
  \centering
  \includegraphics[width=\linewidth]{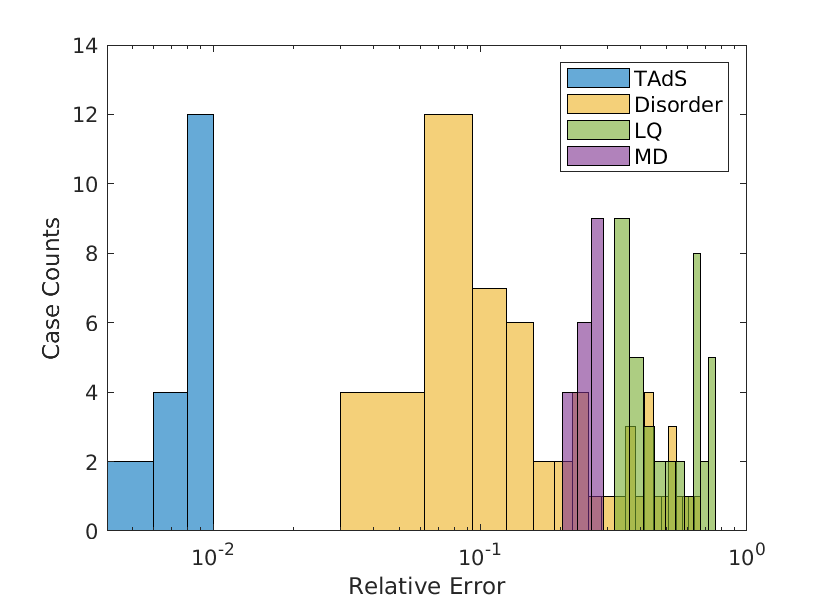}
  \caption{A histogram of relative errors from different boundary data: Thermal AdS (TAdS), global mass deformation (MD), local quench without smoothing (LQ), and system with random disorder (Disorder). Boundary relative error $\cE_{\mathrm{bdy}}$ plotted on the $x$-axis. }
  \label{fig:histogram_err}
\end{subfigure}
\hfill
\begin{subfigure}[t]{0.48\textwidth}
  \centering
  \includegraphics[width=\linewidth]{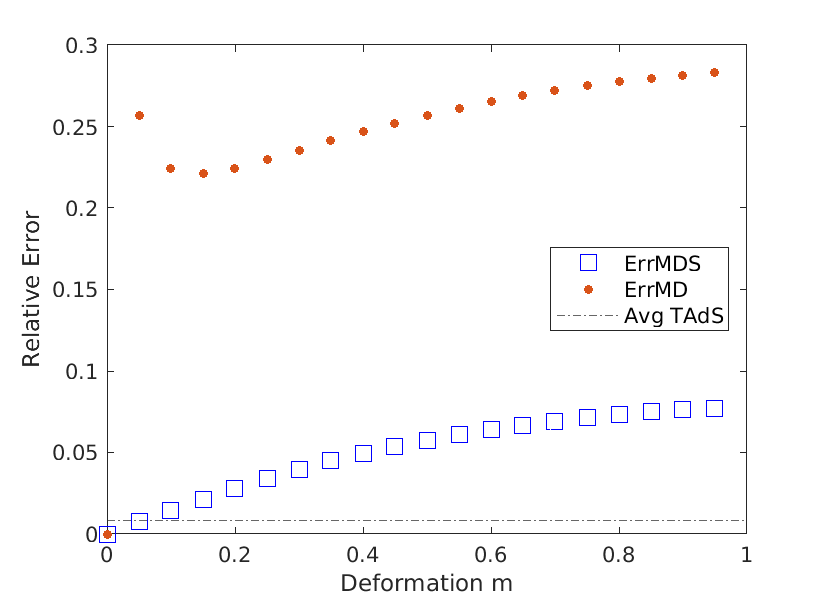}  
  \caption{Global mass deformation errors, as a function of the deformation parameter $m$. Red dots denote the boundary relative error of the original fermion data. Blue squares are relative errors after smoothing.}
  \label{fig:mass_def_err}
\end{subfigure}
\caption{}
\end{figure}

A summary of the time dynamics of the relative errors is shown in Figure~\ref{fig:quench_err_time}. For the dynamical reconstructions, we see that the relative error generally varies with time.  This is most noticeable with the global quench dynamics, where the relative error tends to decrease as entanglement spreads. Although one starts with a mass deformed state with large relative error at $t=0$, subsequent evolution can effectively wash out non-geometric features the system begins to thermalize. The state at $t\approx 20$ captures basic entanglement properties of a thermal state, thus the reconstruction is thermal AdS-like with small relative errors. Nevertheless, the state cannot actually thermalize due to integrability; the system recurs at later times and the relative error rebounds. 

At this point, it is not completely clear why the global quench states have such small relative error at intermediate times. A likely guess that the enhanced symmetries for the globally deformed states contribute to a more geometrical reconstruction, with the effect being especially pronounced at intermediate times when the system is maximally thermal, before it is kicked back by integrability. Such behaviour is not seen in the local quench disorder, where the magnitude of the relative error is larger, and remains  stable across the time evolution. However, symmetry is likely not the full answer either, since otherwise we would expect the disordered models to generically have larger relative error compared to the global mass deformed state, whereas we observer the disordered model to be peaked at smaller errors. While the full characterization for which states appear more geometric than others is currently unknown, it seems likely that there are many contributing factors to a successful reconstruction, among them properties like symmetry and thermality.

\begin{figure}[H]
  \centering
  \includegraphics[width=0.6\linewidth]{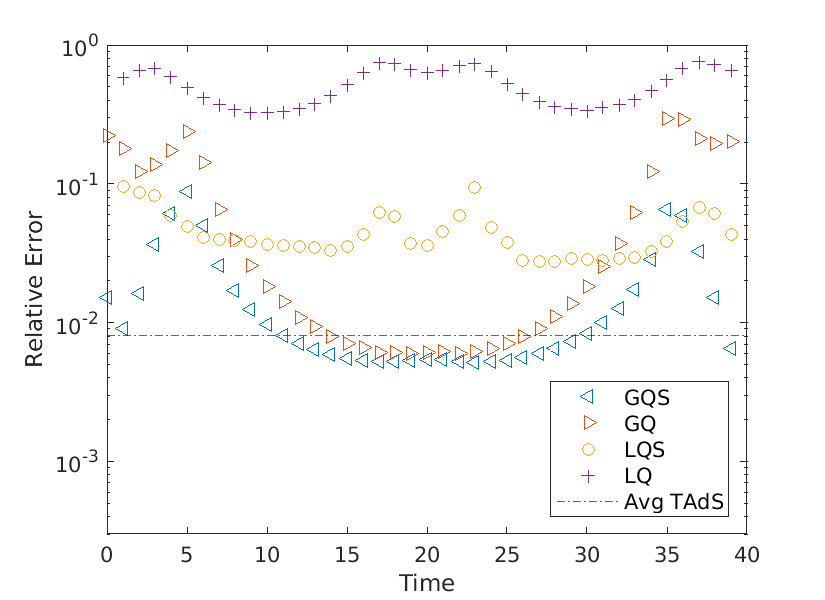}  
  \caption{Relative boundary error as a function of time for a free fermion system after a quench. GQS, GQ, LQS, LQ denote global quench with smoothing, global quench without smoothing, local quench with smoothing, and local quench without smoothing, respectively. Dashed line marks the average relative error for the thermal AdS entropy data.}
  \label{fig:quench_err_time}
\end{figure}

Finally, we show the relative errors corresponding to the superposition of thermal states in Figure~\ref{fig:superpose_err}. We find two regimes for the geometricality of the superposition, depending on the magnitude of the mixing term $G_NH(p)$. We find the superposition to be non-geometric when the entropy of mixing provides a significant correction to the geodesic lengths. This implies a generally non-geometrical construction when $G_NH(p)\approx 1$, as would be the case when the gravitational coupling $G_N$ is strong, or when the entropy of mixing is made large by superposing a large number of distinct geometries.

Conversely, in the weak coupling limit with few distinct superpositions, the contribution of the mixing term $G_NH(p)$ to the geodesic lengths becomes negligible, causing the entropy operator to be approximately linear in this regime. This leads to a smooth, geometric interpolation between the two distinct geometries as we tune the probability of mixing. The clear separation between the two cases is clearly illustrated in Figure~\ref{fig:superpose_err}, where we plot the relative errors for a state constructed as a mixture of two distinct thermal states.

\begin{figure}[H]
  \centering
  \includegraphics[width=0.6\linewidth]{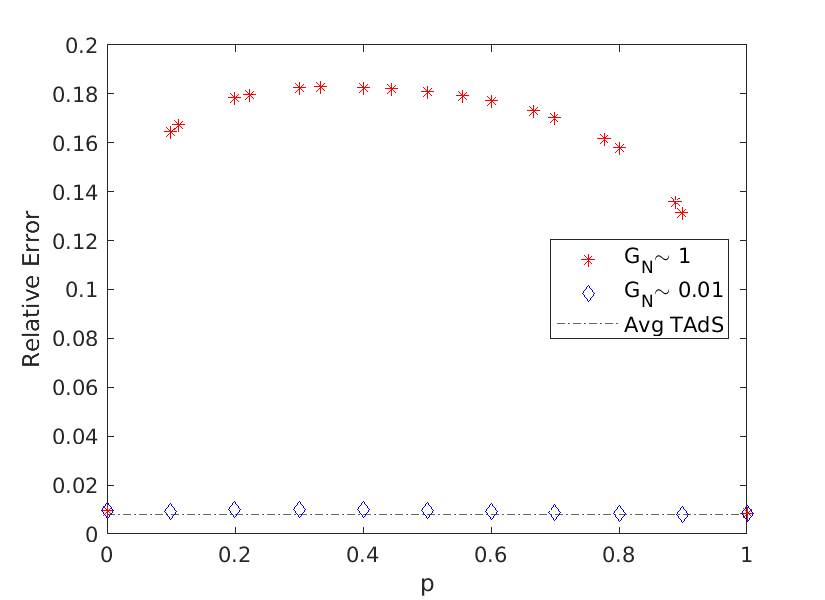}  
  \caption{Boundary relative errors for superpositions of thermal AdS geometries. }
  \label{fig:superpose_err}
\end{figure}

In summary, we find that the boundary relative error provides a useful measure for the extent to which a state is geometric or non-geometric. Through the discretized Radon transform, we confirm some existing expectations for the geometric nature of certain holographic states, as well as states arising from many-body systems that are generally believed to be non-geometric. We find that generic low-lying energy states of a free fermion system are indeed dominated by non-geometric contributions as indicated by our algorithm. However, there also exists states in these systems that have small relative error, such as the configurations following a global quench. It is suspected that the enhanced symmetries for the globally deformed states contributes to a more geometrical reconstruction. Moreover, quench dynamics reveal large-scale patterns of geometric evolution, despite the states themselves being non-geometric as characterized by the relative error. Further advances in the tensor Radon transform and its range characterization may help us understand what kind of properties in the boundary data lead to good reconstructions, and the nature of the large-scale behavior revealed by the numerical transform.

\iffalse
\begin{figure}[h]
    \centering
    \includegraphics{err_plot_ads_fferm.png}
    \caption{Relative reconstruction errors of the thermal AdS data, generated from holographic geometries, compared to the free fermion local quench data. LQ denotes the error of the original local quench data and LQS denotes the error for a smoothed version of the quench, where the entanglement entropies from the similar size intervals that share a common end point are averaged.}
    \label{fig:reconerrhist}
\end{figure}
\fi

\section{Discussion}
\label{sec:discussion}

In this work, we took some first steps in addressing the explicit bulk metric reconstruction problem in holography, as well as the question of whether non-geometrical bulk states can be detected from boundary data alone. Motivated by the tensor Radon transform, we provided - and explicitly implemented - an algorithm that reconstructs the bulk metric tensor without any \textit{a priori} assumptions on the symmetries or form of the metric tensors, other than the fact that they are perturbatively close to AdS. We applied this reconstruction to entanglement data from holographic systems with large $N$, as well as to that of a $1d$ free fermion system. The reconstruction was also applied, time-slice by time-slice, to time-evolving states. We confirmed that, assuming the linearized Einstein's equations, thermalization following a global quench in holographic systems is consistent with a shell of in-falling matter in the bulk that finally settles into a state that appears to be gravitationally bound deep inside the bulk. This kind of behaviour is absent in free fermion systems where the corresponding shockwave of ``in-falling matter'' repeatedly oscillates between the boundary and the bulk due to the integrability of the system. 

We also provided a partial answer to whether a given state in the conformal field theory has a well-defined geometric dual on the gravity side. We find that boundary reconstruction errors provide a quantitative measure that distinguished ``geometrical'' states, such as the ones we find in large $N$ theories with semi-classical duals, from ``non-geometrical'' states like the generic excited states of a free fermion spin chain. This is a precise, albeit coarse, way of understanding what portion of the boundary data lies outside the range of the tensor Radon transform, and therefore cannot be interpreted as a tensor field on a hyperbolic background. In the instances we have examined, this measure was an effective indicator of non-geometricality.

Finally, our initial attempts in developing this algorithm indicate that efficient numerical reconstructions of the bulk metric tensor from entanglement entropy data, at least at the linearized level, are achievable. This is partly because the optimization problem we consider is linear in nature and can be solved in polynomial time with low computational power. As such, it provides an efficiently computable tomographic procedure that translates boundary entropy data, where the underlying spacetime and gravitational dynamics are hidden, to a setting where it is manifest. Although a wealth of previous analytic results are available, the flexibility of numerical studies is also invaluable, as we have found in various areas of physics from quantum many-body systems to numerical general relativity. As we start to move away from the more tractable dynamics of quantum systems that carry simplifying assumptions such as symmetries, the full force of numerical methods will prove to be extremely helpful. Here we provide one such preliminary construction which can hopefully provide a stepping stone for future advances.

Further work is clearly required to improve upon our first efforts. Here we list but a few major directions where progress can be made. 

On the front of new results in mathematics, especially related to tensor Radon transform:
\begin{enumerate}[label=\roman*]
    \item To put our geometry detector on a firmer mathematical footing, one needs a rigorous characterization of the range of tensor Radon transforms on curved backgrounds, in particular, the hyperbolic background. In the language of holographers, we are in need of a set of necessary\cite{Bao:2015bfa,Bao:2019bib} and sufficient conditions that characterizes what type of quantum states have semi-classical dual geometries. A complete characterization of the range of Radon transform on hyperbolic space, for example, precisely provides such conditions that checks whether a set of entanglement data is dual to a semi-classical geometry close to AdS. 
    
    \item It is also crucial to obtain an explicit reconstruction formula in closed form. Such formulae are known for the Euclidean background, and for the case of scalar or vector Radon transform on curved backgrounds. Development of these results will further enable calculations in the continuum limit, which is relevant for AdS/CFT. 
   \end{enumerate} 
   
On the front of new results in physics:
\begin{enumerate}[label=\roman*]

    \item We hope to extend this method to higher dimensions, which so far has faced the most obstacles in bulk metric reconstruction. Because area variation can already be cast as a tensor Radon transform in arbitrary dimensions \cite{Czech:2016tqr}, the idea of discretization followed by gauge fixing and linear optimization may simply be extended to minimal surfaces. Similarly, X-ray transforms in higher dimensions that makes use of correlation data can also be viable \cite{Porrati:2003na}. This may provide a numerically computable alternative to other methods \cite{Myers:2014jia,Balasubramanian:2018uus,Bao:2019bib}. 
    
    \item Although we examine dynamical cases, we only reconstruct the spatial metrics for each individual time slice. Reconstruction of the full linearized space-time metric perturbation against the AdS background should also be possible by gluing together these spatial slices. However, one has to be careful about gauge fixing and the role played by the time coordinate. More work is needed to clarify these constructions. 
    
    \item It is also worthwhile to go beyond linearized level. A known approach is to apply the inverse Radon transform iteratively, such that the background geometry is updated using the reconstructed metric perturbation $h_{ij}$. Such method is used in the geophysics community\cite{dahlen_tromp_1999,Ved}. However, we have to be careful about the change in extremal surface beyond the linear level, especially in the dynamical cases. This is worthwhile though, since working beyond the linearized level should allow a better understanding of non-trivial topologies, for example. As we have seen in this work, they are invisible at the linearized level despite our expectation that collapse of bulk matter should lead to such changes in geometry. 
    
    \item Reconstruction of entanglement data from other quantum systems. Thus far, we have only seen limited application of this reconstruction algorithm, largely limited by the availability of entanglement data. Nevertheless, it may be possible to compute these entanglement approximately for smaller but more interesting quantum many-body systems using tensor networks or other classical techniques. It may also be worthwhile to obtain data for geodesic lengths via other more accessible data, such as correlation functions or mutual information. 
    
    \item In light of the difficulty in computing von Neumann entropies for quantum systems, we can explore the possibility of using quantum Renyi entropies, which also have a geometric interpretation in holography \cite{Dong:2016fnf}. A prominent advantage is its computability using numerics and measurability in actual quantum systems. Reconstructions using such data can enable us to directly  acquire entropy data from quantum simulations and experimental setups. 
    
    \item Similar techniques using Radon transforms to recover the metric tensor of geometry emerged from entanglement are also applicable to near-flat manifolds outside the context of AdS/CFT. In fact, the Radon transform on flat-space is much better understood. Similar reconstructions should be possible for constructions like \cite{Cao:2017hrv} where, for instance, the relevant quantum states can be low energy states of a gapped local Hamiltonian.
    \end{enumerate}
 
   Finally, there is also much progress to be made in our reconstruction algorithm and related numerics. 
   \begin{enumerate}[label=\roman*]    
    \item One area of improvement is a data-driven modelling. For our current work, we fixed a particular tiling that is easy to implement. However, it has been shown that a tiling depending on the boundary data can lead to improvement for bulk reconstruction \cite{Lekic}. These include Voronoi partitions or improved modelling using Bayesian inference \cite{Tarantola}.
    
    \item Our discretization, constraint, and interpolation methods are all extremely simplistic, as appropriate for a proof of concept implementation. Numerous improvements can be made to improve the accuracy of the reconstruction algorithm, for example: proper triangular meshes, finite-element methods, better regularization techniques, etc. We expect that the reconstruction procedure here can be drastically improved in fidelity by adapting proper numerical techniques.
\end{enumerate}

%\iffalse
%Implications and future work
%\begin{itemize}
%    \item Range characterization and reconstruction formula (optimally, %a singular value decomposition as in the scalar case)
%    \item Better numerics, algorithms
%    \item Connection with experiments and more manybody physics data, %perhaps for interacting CFTs
%    \item iterative approach to metric reconstruction
%    \item linearized metric perturbation for dynamical spacetime
%    \item Numerical gravity in AdS/CFT
%    \item Recon using Renyi entropies and feasible experimental %measurements
%    \item systems that are gapped, or using other dictionaries such as %correlation
%    \item others?
%\end{itemize}
%\fi

\section*{Acknowledgements}
We thank Ping Gao, Vedran Lekic, Francois Monard, Chris Pattison, Gunther Uhlmann, Zhi-Cheng Yang, and Larry Zeng for helpful discussions.  C.C., X.L.Q., and B.G.S. acknowledge the support by the DOE Office of Science, Office of High Energy Physics, through the grant de-sc0019380. C.C. is also supported by the U.S. Department of Defense and NIST through the Hartree Postdoctoral Fellowship at QuICS. C.C. and B.G.S. acknowledge the support by the Simons Foundation as part of the It From Qubit Collaboration. ET acknowledge funding provided by the Institute for Quantum Information and Matter, an NSF Physics Frontiers Center (NSF Grant
PHY-1733907), the Simons Foundation It from Qubit Collaboration, the DOE QuantISED
program (DE-SC0018407), and the Air Force Office of Scientific Research (FA9550-19-1-
0360). ET also acknowledges the support of the Natural Sciences and Engineering Research
Council of Canada (NSERC).

\appendix
\section{The Tensor Radon Transform}
\label{app:TRT}
%\textcolor{red}{(Charles) I am wondering if we should move the math definitions and details into the appendix and give them the main idea in the body.} \ET{Certainly a lot of this belongs in this appendix. However, the main gist of the Radon transform, including a key summary of the properties and definitions, should probably be somewhere in the main body. Any details can probably be relegated to the appendix here.}
%\ET{Need to add references. Otherwise complete.}

Here, let us formally define the geodesic tensor Radon transform. We begin with an introduction to the tensor Radon transform in general, but will quickly specialize to the special case of the $2$-tensor Radon transform on the Poincare disk (with finite cutoff). 

In short, the $m$-tensor geodesic Radon transform $R_m$ is a map which takes a symmetric $m$-tensor field defined on a sufficiently well-behaved Riemannian manifold $M$ (with boundary) to the space of geodesics on that manifold.

\subsection{General Definitions}

Let $(M,g)$ be an $n$-dimensional Riemmanian manifold with boundary $\partial M$ and metric $g$. We say that $(M,g)$ is a \emph{simple} manifold if $\partial M$ is strictly convex\footnote{Namely, given any two points $p,q \in \partial M$, there exists a geodesic segment connecting $p$ and $q$ that meets $\partial M$ only at $p$ and $q$.} and any two points in $M$ are connected by a unique geodesic segment which depends smoothly on the endpoints~\cite{pestov-uhlmann2003}. Alternatively, a simple manifold is one in which the boundary is strictly convex, and where the exponential map $\exp_p:\exp_p^{-1}(M)\rightarrow M$ is a diffeomorphism for every $p\in M$. Fixing some $p\in M$, we may identify a simple manifold with a strictly convex domain $\Omega$ of $\mathbb{R}^n$.

The simplicity of a manifold is a sufficient condition for the geodesic Radon transform to be well-defined~\cite{tomography_survey}. There are more general conditions available, but simplicity will generally be sufficient for our purposes. From now on, unless otherwise stated, all of our manifolds will be assumed to be simple.

Let $SM$ denote the unit circle bundle of $M$. The bundle $SM$ is the collection of all pairs $(p,v)$, where $p\in M$ and where $v \in T_pM$ is a unit tangent vector at $p$. The boundary of the unit circle bundle, denoted $\partial SM$, consists of all such pairs where $p \in \partial M$. The boundary of the unit circle bundle naturally splits into two components, $\partial_{+}SM$ consisting of all the inward pointing vectors, and $\partial_{-}SM$ consisting of all the outward pointing vectors. We will define both components to be closed, i.e., vectors tangent to $\partial M$ will be in both $\partial_{+}SM$ and $\partial_{-}SM$. 

Given $(p,v)\in \partial_{+}SM$, let $\gamma_{p,v}:[0,\tau(p,v)] \rightarrow M$ denote the unique unit speed geodesic through $(p,v)$, i.e., the unique geodesic such that
\begin{align}
\gamma_{p,v}(0) = p,\qquad\text{and}\qquad \dot{\gamma}_{p,v}(0)=v.
\end{align}
The parameter $\tau(p,v)$ denotes the \emph{exit time}\footnote{Or since $\gamma_{p,v}$ is unit speed parametrized, the total arclength of $\gamma_{p,v}$.} of $\gamma_{p,v}$, i.e., the first non-zero time such that $\gamma_{p,v}(\tau) \in \partial M$. Note the exit time is well-defined under the assumption that the underlying manifold is simple.

Now, let $f_{i_1\cdots i_m}$ be a smooth, symmetric (covariant) $m$-tensor field on $M$. Then the Radon transform of $f$ is defined by
\begin{align}
    R_m[f](p,v) = \int_0^{\tau(p,v)}f_{i_1\cdots i_m}(\gamma_{p,v}(s))\,\dot{\gamma}_{p,v}^{i_1}(s)\cdots \dot{\gamma}_{p,v}^{i_m}(s)\ ds.
\end{align}
Thus, the Radon transform is a map $R_m:S_m(M) \rightarrow C^\infty(\partial_{+}SM)$ which takes the space $S_m(M)$ of smooth, symmetric (covariant) $m$-tensors on $M$ to the space $C^\infty(\partial_{+}SM)$ of smooth functions on the inward pointing boundary unit circle bundle component $\partial_{+}SM$. Since we can uniquely identify each $(p,v)\in \partial_+ SM$ with a corresponding unit speed geodesic $\gamma_{p,v}$, it will be often convenient to consider the Radon transform as a map on the space of boundary anchored geodesics. 

\subsection{$s$-Injectivity}

There are a few natural questions we may ask for the Radon transform, the foremost being the surjectivity and the injectivity of the transform. We will not comment much on the range of the tensor Radon transform, except to note that the tensor Radon transform is generally not surjective. This, of course, corresponds to the well known fact that not all boundary states have a well-defined bulk dual. A useful analytic characterization of the range remains an open problem for the tensor Radon transform on generic manifolds.

Let us now consider the problem of injectivity. The tensor Radon transform has a natural kernel. Let $(M,g)$ be an $n$-dimensional simple Riemannian manifold. We will let $dV$ denote the canonical volume form, which is locally given by 
\begin{align}
    dV = \sqrt{|g|}\, dx^1\wedge \cdots \wedge dx^n,
\end{align}
where $(x^1,\cdots ,x^n)$ is some oriented chart, and where $|g|$ is the determinant of the metric $g_{ij}$ in that chart. We will let $\nabla$ denote the Levi-Civita connection on $M$. Now, let $S_mM$ denote the space of (covariant) symmetric $m$-tensors on $M$. We will define the \emph{inner derivative}\footnote{Despite the confusingly similar notation, $\mathbf{d}$ is \emph{not} the exterior derivative $d$, which acts on forms and not symmetric tensors. Unfortunately, the notation is somewhat well-established in the integral geometry community, and so we will stick to it. We will never need to use the exterior derivative in this paper, but nevertheless we will denote the inner derivative with boldface font $\mathbf{d}$ as a reminder that it is not the exterior derivative.} $\mathbf{d}:S_mM \rightarrow S_{m+1}M$ by
\begin{align}
    \mathbf{d}f = \sigma\nabla f,
\end{align}
where $\sigma$ denotes complete symmetrization. In local coordinates, we simply have
\begin{align}
    (\mathbf{d}f)_{i_1\cdots i_{m+1}} &= f_{(i_1 \cdots i_m ; i_{m+1})},
\end{align}
where parentheses indicate the complete symmetrization of the contained indices as usual. We will likewise define the \emph{divergence}\footnote{Again, $\boldsymbol{\delta}$ is not the co-exterior derivative. We will use the same boldface convention.} $\boldsymbol{\delta}:S_mM \rightarrow S_{m-1}M$ to be
\begin{align}
    \boldsymbol{\delta}f = \mathrm{Tr}_{m,m+1}(\nabla f),
\end{align}
where $\mathrm{Tr}_{m,m+1}$ denotes (Riemannian) contraction between the $m$th and $(m+1)$th arguments. In local coordinates, we have
\begin{align}
    (\boldsymbol{\delta}f)_{i_1,\cdots,i_{m-1}} = f_{i_1,\cdots ,i_{m-1}, j;k}g^{jk}.
\end{align}

The operators $\mathbf{d}$ and $-\boldsymbol{\delta}$ are adjoint for compactly supported symmetric tensor fields on $M$. More generally, for any compact region $D\subseteq M$, we have
\begin{align}
    \int_D \langle \mathbf{d}u,v \rangle + \langle u,\boldsymbol{\delta} v \rangle\ dV = \int_{\partial D} \langle i_{\nu}u,v \rangle\ dS,
\end{align}
where $u$ and $v$ are (sufficiently smooth) symmetric tensors of the appropriate orders, $i_{\nu}$ denotes interior multiplication with respect to an outward pointing normal $\nu$, and where $dS$ is the induced volume form on $\partial D$. The inner product $\langle\cdot ,\cdot\rangle$ on $S_m M$ is given by complete contraction, i.e.,
\begin{align}
    \langle u, v\rangle = g^{i_1j_1}\cdots g^{i_mj_m}u_{i_1,\cdots, i_m}v_{j_1,\cdots ,j_m}.
\end{align}

The significance of the operators $\boldsymbol{\delta}$ and $\mathbf{d}$ are as follows. Let $H^k(S_mM)$ denote the Sobolev space of $m$-symmetric tensors, i.e., the space of all sections which are $k$-times (weakly) differentiable, and such that all derivatives are locally square integrable. Each $H^k(S_mM)$ can be given the structure of a Hilbert space when $M$ is compact. Then we have the generalized Helmholtz decomposition as follows:

\begin{theorem}[Generalized Helmholtz Decomposition~\cite{sharafutdinov1994integral}]\label{thm:helmholtz}
Let $(M,g)$ be a compact Riemannian manifold with boundary. Let $k\ge 1$ and $m\ge 0$ be integers. Given any section $f\in H^k(S_mM)$, there exists uniquely determined $f^s \in H^k(S_mM)$ and $v \in H^{k+1}(S_{m-1}M)$ such that
\begin{align}
    f = f^s + \boldsymbol{d}v,\qquad\text{where}\quad \boldsymbol{\delta}f^s = 0,\quad\text{and}\quad v|_{\partial M}=0.
\end{align}
The fields $f^s$ and $\boldsymbol{d}v$ are called the \emph{solenoidal} and $\emph{potential}$ parts of $f$.
\end{theorem}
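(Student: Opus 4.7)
The plan is to reduce the decomposition to an elliptic boundary-value problem for the potential $v$. If $f=f^s+\mathbf{d}v$ with $\boldsymbol{\delta}f^s=0$, then applying $\boldsymbol{\delta}$ to both sides gives the necessary condition
\[
\boldsymbol{\delta}\mathbf{d}\,v \;=\; \boldsymbol{\delta}f,\qquad v|_{\partial M}=0.
\]
Conversely, if this Dirichlet problem admits a unique solution $v\in H^{k+1}(S_{m-1}M)$ for every right-hand side $\boldsymbol{\delta}f\in H^{k-1}(S_{m-1}M)$, then setting $f^s:=f-\mathbf{d}v$ produces a decomposition with $f^s\in H^k(S_mM)$ and $\boldsymbol{\delta}f^s=0$ built in. The whole theorem therefore reduces to well-posedness of this BVP. (The case $m=0$ is trivial: $\boldsymbol{\delta}f$ lives in the zero bundle $S_{-1}M$, and one takes $v=0$, $f^s=f$.)

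The first step is to verify that $L:=\boldsymbol{\delta}\mathbf{d}$ is a strongly elliptic second-order operator on the bundle $S_{m-1}M$. Its principal symbol is the composition of symmetrized tensoring with $\xi$ (the symbol of $\mathbf{d}$) and contraction against $\xi$ (the symbol of $\boldsymbol{\delta}$), and a short algebraic computation shows this is a positive-definite endomorphism of $S_{m-1,x}M$ for every nonzero cotangent vector $\xi$. The Dirichlet condition $v|_{\partial M}=0$ satisfies the complementing (Lopatinskii--Shapiro) condition, so classical elliptic boundary regularity yields both the a priori estimate
\[
\|v\|_{H^{k+1}} \;\lesssim\; \|Lv\|_{H^{k-1}} + \|v\|_{L^2}
\]
and the Fredholm property: $L$, viewed as a map from $\{v\in H^{k+1}:v|_{\partial M}=0\}$ to $H^{k-1}(S_{m-1}M)$, is Fredholm of index zero.

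The next step is to show $\ker L=\{0\}$ in this setup, which upgrades Fredholm index zero to bijectivity. The adjoint identity from the excerpt, applied with its $u$ set to $v$ and its second argument set to $\mathbf{d}v$, gives
\[
\int_M \langle \mathbf{d}v,\mathbf{d}v\rangle\,dV + \int_M \langle v,\boldsymbol{\delta}\mathbf{d}v\rangle\,dV \;=\; \int_{\partial M}\langle i_\nu v,\mathbf{d}v\rangle\,dS,
\]
and the boundary integral vanishes because $v|_{\partial M}=0$ forces $i_\nu v|_{\partial M}=0$. If $Lv=0$, then $\|\mathbf{d}v\|_{L^2}^2=0$, so $\mathbf{d}v\equiv 0$. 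The equation $\mathbf{d}v=0$ is an overdetermined first-order system of finite type (Killing-type tensors); combining tangential derivatives of $v|_{\partial M}=0$ with $\mathbf{d}v=0$ determines the full $1$-jet of $v$ on $\partial M$, and unique continuation for finite-type systems then forces $v\equiv 0$. Uniqueness of the decomposition itself follows from the same ingredients: differencing two decompositions yields $\mathbf{d}(v_1-v_2)=f_2^s-f_1^s$ with both sides solenoidal, so $\|\mathbf{d}(v_1-v_2)\|_{L^2}^2 = \langle \mathbf{d}(v_1-v_2),f_2^s-f_1^s\rangle_{L^2} = -\langle v_1-v_2,\boldsymbol{\delta}(f_2^s-f_1^s)\rangle_{L^2} = 0$, and the same rigidity argument gives $v_1=v_2$, whence $f_1^s=f_2^s$.

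The main obstacle is the two classical but technically delicate ingredients above: positivity of the principal symbol of $\boldsymbol{\delta}\mathbf{d}$ on symmetric tensor bundles, and rigidity of $\mathbf{d}$-flat tensors with vanishing boundary trace. For $m-1\geq 1$ the principal symbol is a matrix, not a scalar multiple of the identity, and positivity must be checked by decomposing the target tensor into components parallel and perpendicular to $\xi$; this is linear-algebraic but requires genuine work. The boundary rigidity of Killing-type tensors is elementary for $m-1\leq 1$ and follows for higher ranks from prolongation plus unique continuation. Both ingredients are standard in the integral-geometry literature \cite{sharafutdinov1994integral}, and granted them the decomposition, the stated regularity $v\in H^{k+1}$, and uniqueness all follow uniformly in $k$ and $m$.
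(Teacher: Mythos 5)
The paper does not actually prove Theorem~\ref{thm:helmholtz}; it is quoted as a known result from Sharafutdinov's book \cite{sharafutdinov1994integral}, and your argument is essentially the proof given there: reduce the decomposition to the Dirichlet problem $\boldsymbol{\delta}\mathbf{d}v=\boldsymbol{\delta}f$, $v|_{\partial M}=0$; check that $\boldsymbol{\delta}\mathbf{d}$ is elliptic with positive-definite principal symbol (via injectivity of symmetric multiplication by a nonzero covector) and that the Dirichlet problem is Fredholm of index zero; and kill the kernel with the energy identity plus the rigidity of fields satisfying $\mathbf{d}v=0$ and $v|_{\partial M}=0$. The outline and the uniqueness argument are correct. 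The one step stated too quickly is the rigidity: for rank $m-1\ge 2$ the vanishing of the $1$-jet of $v$ on $\partial M$ is not by itself enough to invoke the finite-type property, which pins down $v$ only through its $(m-1)$-jet at a point; one must iterate the normal-derivative computation using $\mathbf{d}v=0$ to annihilate all higher normal derivatives of $v$ along $\partial M$ before concluding $v\equiv 0$. You implicitly acknowledge this by appealing to prolongation, so the gap is one of exposition rather than substance.
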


Note that in the case $m=1$, Theorem~\ref{thm:helmholtz} is just the usual Helmholtz decomposition
\begin{align}
    \mathbf{F} = -\nabla \varphi + \nabla \times \mathbf{A}
\end{align}
after identifying vectors and covectors using the metric.

The decomposition of a symmetric tensor field into solenoidal and potential parts gives us a natural identification of the kernel for the Radon transform. Indeed, we have the following result:

\begin{theorem}\label{thm:kernel}
Let $(M,g)$ be a simple manifold and let $R_m$ be the $m$-tensor Radon transform on $M$. Let $v \in H^{k+1}(S_{m-1}M)$ be a vector field such that $v|_{\partial M} = 0$. Then
\begin{align}
    R_m[\boldsymbol{d}v] = 0.
\end{align}
\end{theorem}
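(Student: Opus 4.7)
The plan is to show that the integrand appearing in $R_m[\mathbf{d}v]$ along a boundary-anchored geodesic is a total derivative with respect to the arclength parameter, so that the integral reduces to boundary terms that vanish under the hypothesis $v|_{\partial M}=0$.

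First I would observe that since the tangent vector $\dot{\gamma}^{i}$ is totally symmetric in the sense that the product $\dot{\gamma}^{i_1}\cdots \dot{\gamma}^{i_m}$ is already symmetric in its indices, the symmetrization in the definition $(\mathbf{d}v)_{i_1\cdots i_m} = v_{(i_1\cdots i_{m-1};i_m)}$ drops out upon contraction with the tangent vectors. Hence
\begin{equation*}
    (\mathbf{d}v)_{i_1\cdots i_m}\,\dot{\gamma}^{i_1}\cdots \dot{\gamma}^{i_m}
    \;=\; v_{i_1\cdots i_{m-1};\,i_m}\,\dot{\gamma}^{i_1}\cdots \dot{\gamma}^{i_{m-1}}\dot{\gamma}^{i_m}
    \;=\; (\nabla_{\dot{\gamma}} v)_{i_1\cdots i_{m-1}}\,\dot{\gamma}^{i_1}\cdots \dot{\gamma}^{i_{m-1}}.
\end{equation*}

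Next I would use the geodesic equation $\nabla_{\dot{\gamma}}\dot{\gamma} = 0$. This equation together with the Leibniz rule for covariant differentiation along a curve implies that, for the scalar function $\varphi(s) := v_{i_1\cdots i_{m-1}}(\gamma(s))\,\dot{\gamma}^{i_1}(s)\cdots \dot{\gamma}^{i_{m-1}}(s)$, the ordinary derivative is
\begin{equation*}
    \frac{d\varphi}{ds} \;=\; (\nabla_{\dot{\gamma}} v)_{i_1\cdots i_{m-1}}\,\dot{\gamma}^{i_1}\cdots \dot{\gamma}^{i_{m-1}},
\end{equation*}
because the derivatives of the tangent vector factors vanish against the geodesic condition. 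Combining with the previous display, the integrand of the Radon transform of $\mathbf{d}v$ is exactly $d\varphi/ds$.

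From here the proof finishes by the fundamental theorem of calculus:
\begin{equation*}
    R_m[\mathbf{d}v](p,v) \;=\; \int_0^{\tau(p,v)} \frac{d\varphi}{ds}\,ds \;=\; \varphi(\tau(p,v)) - \varphi(0).
\end{equation*}
Since $\gamma_{p,v}(0)=p\in \partial M$ and, by definition of the exit time, $\gamma_{p,v}(\tau(p,v))\in \partial M$, both endpoints lie on the boundary. The hypothesis $v|_{\partial M}=0$ then makes both $\varphi(0)$ and $\varphi(\tau(p,v))$ vanish, giving $R_m[\mathbf{d}v]=0$ on all of $\partial_{+}SM$. There is no substantial obstacle here; the main thing to be careful about is the symmetrization step and the correct application of the geodesic equation when $v$ is a higher-order tensor (rather than just a $1$-form), but once one notes that $\dot{\gamma}^{i_1}\cdots \dot{\gamma}^{i_m}$ is already symmetric and that $\nabla_{\dot\gamma}\dot\gamma=0$ kills the derivatives of all the tangent-vector factors, the computation is essentially a one-line reduction to a total derivative.
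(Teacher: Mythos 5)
Your proof is correct, and it is the standard argument: contracting the symmetrized derivative $(\mathbf{d}v)_{i_1\cdots i_m}$ with the already-symmetric product $\dot\gamma^{i_1}\cdots\dot\gamma^{i_m}$ removes the symmetrization, the geodesic equation $\nabla_{\dot\gamma}\dot\gamma=0$ turns the integrand into the total derivative $\frac{d}{ds}\bigl(v_{i_1\cdots i_{m-1}}\dot\gamma^{i_1}\cdots\dot\gamma^{i_{m-1}}\bigr)$, and the boundary condition kills both endpoint terms. The paper itself states this theorem without proof, as a known fact from integral geometry, so there is nothing to compare against; your write-up supplies exactly the argument one would expect. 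The only point you might add for completeness is that, since $v$ is only assumed to lie in $H^{k+1}(S_{m-1}M)$ with vanishing trace rather than to be smooth, the pointwise fundamental-theorem-of-calculus computation should be carried out for smooth tensor fields vanishing near $\partial M$ and then extended by density and the $L^2$-continuity of $R_m$; this is routine and does not affect the substance of the proof.
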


Therefore, Theorem~\ref{thm:kernel} identifies a natural kernel for the Radon transform, namely the space of all potential tensor fields. Since every sufficiently smooth tensor field can be decomposed uniquely into a potential and a solenoidal part, a natural question is whether the solenoidal part is uniquely recoverable from the Radon transform, i.e., whether the space of potential tensor fields exhausts the kernel of the Radon transform. If this is indeed the case, i.e., if $R_m[f] = 0$ implies $f^{\mathrm{s}} = 0$, then we say that the Radon transform is $s$-\emph{injective}.

The question of the $s$-inectivity of the Radon transform is a fundamental problem in integral geometry. The general case remains open, although the case for $2$-dimensional simple manifolds was settled in the affirmative~\cite{uhlmann-injective}:

\begin{theorem}[$s$-injectivity~\cite{uhlmann-injective}]
Let $(M,g)$ be a simple $2$-dimensional Riemannian manifold. Then the tensor Radon transform $R_m$ on $M$ is $s$-injective for all $m\ge 0$.
\end{theorem}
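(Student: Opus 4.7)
The plan is to lift the problem from $M$ to its unit circle bundle $SM$, which in two dimensions is a $3$-manifold carrying a canonical framing and a canonical geodesic flow. Encode a symmetric $m$-tensor $f$ as a scalar function $\tilde f \in C^\infty(SM)$ by
\begin{align}
\tilde f(x,v) \;=\; f_{i_1\cdots i_m}(x)\, v^{i_1}\cdots v^{i_m}.
\end{align}
Because the fibers of $SM \to M$ are circles in 2d, every smooth function on $SM$ admits a fiberwise Fourier decomposition, and the encoded function $\tilde f$ lies in the finite span of harmonics of degrees $-m, -m+2, \ldots, m-2, m$. The Radon transform $R_m f$ is precisely the integral of $\tilde f$ along the geodesic flow $\phi_t$, so $R_m f = 0$ lets us set
\begin{align}
u(x,v) \;=\; -\int_0^{\tau(x,v)} \tilde f\bigl(\phi_t(x,v)\bigr)\,dt,
\end{align}
obtaining a function $u$ on $SM$ which solves the transport equation $X u = \tilde f$ (where $X$ is the geodesic vector field) and vanishes on the entire boundary $\partial SM$ (on $\partial_- SM$ because $\tau = 0$ there, and on $\partial_+ SM$ because $R_m f = 0$).

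Next I would deploy Pestov's identity on $SM$. With the canonical frame $(X, X_\perp, V)$, where $V$ generates rotation on each fiber and $X_\perp = [X,V]$, one has the structure equations $[V, X_\perp] = -X$ and $[X, X_\perp] = K V$ for $K$ the Gaussian curvature. A short commutator calculation produces a pointwise identity which, after integration over $SM$ and use of $u|_{\partial SM}=0$, takes the schematic form
\begin{align}
\|V X u\|^2 \;=\; \|X V u\|^2 \;-\; (K\,Vu,\,Vu) \;+\; (\text{boundary terms}).
\end{align}
Since $Xu = \tilde f$, the left-hand side is controlled by the known data; the task is to show the right-hand side dominates a coercive norm on the unknown fiber-content of $u$.

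The third and decisive step uses the Fourier decomposition in the fiber variable to identify that unknown part. Split $X = \eta_+ + \eta_-$ into fiber-raising and fiber-lowering operators and expand $u = \sum_k u_k$, $\tilde f = \sum_{|k|\le m,\ k\equiv m\,(2)} \tilde f_k$. Once the Pestov identity is turned into a coercive estimate, a standard argument shows that any $u$ with $Xu = \tilde f$ and $u|_{\partial SM} = 0$ must have vanishing Fourier components of fiber-degree $|k|\ge m$, so that $u$ lies in the span of harmonics of degrees $|k|\le m-1$. Such a $u$ is exactly the encoding $\tilde w$ of some symmetric $(m-1)$-tensor $w$ on $M$, with $w|_{\partial M}=0$ by the boundary condition on $u$. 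The operator $X$ on encoded functions coincides with the inner derivative $\mathbf d$, so $\tilde f = Xu = X\tilde w = \widetilde{\mathbf d w}$ implies $f = \mathbf d w$. By uniqueness of the Helmholtz decomposition (Theorem~\ref{thm:helmholtz}) this forces $f^{\mathrm s} = 0$, which is $s$-injectivity.

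The principal obstacle is the sign of the curvature term $(K\,Vu, Vu)$ in the Pestov identity: on a general simple surface, $K$ need not have a favorable sign, so the naive energy identity is not immediately coercive. The remedy is to introduce a scalar weight $a$ on $SM$ satisfying a Riccati-type inequality $X a + a^2 + K \le 0$ along each geodesic; such an $a$ is constructed from a positive Jacobi field along the geodesic, and its existence on all of $M$ is exactly the no-conjugate-points condition guaranteed by simplicity. Refining Pestov's identity with this weight converts it into a genuinely coercive estimate and constitutes the technical heart of the argument. Once the weighted identity is in hand, the Fourier-degree reduction and the identification of $f$ as purely potential proceed formally, completing the proof.
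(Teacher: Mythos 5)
First, a point of reference: the paper offers no proof of this theorem — it is quoted from the tensor tomography literature — so your proposal has to be measured against the known proof (Paternain–Salo–Uhlmann) rather than anything in the text. Your architecture — lifting to $SM$, the transport equation with $u|_{\partial SM}=0$, fiberwise Fourier decomposition, the Pestov identity in the frame $(X,X_\perp,V)$, and the final identification $f=\mathbf{d}w$ via uniqueness of the Helmholtz decomposition — is the correct skeleton and matches that proof. The gap is at the step you call decisive: you claim that once the Pestov identity is made coercive via a Riccati weight $Xa+a^2+K\le 0$ (whose existence is indeed equivalent to the no-conjugate-points condition), ``a standard argument'' kills the Fourier modes of $u$ of fiber degree $\ge m$. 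It does not. What the Pestov identity, together with the nonnegativity $\|Xw\|^2-(Kw,w)\ge 0$ that the Riccati/index-form argument supplies, yields for $u|_{\partial SM}=0$ is the inequality $\|Xu\|^2\le\|VXu\|^2$. With $Xu=\tilde f$ supported in fiber degrees $|k|\le m$ one has $\|V\tilde f\|^2=\sum_k k^2\|\tilde f_k\|^2$, which is controlled by $\|\tilde f\|^2=\|Xu\|^2$ only when $m\le 1$. For $m=0$ the inequality forces $\tilde f=0$; for $m=1$ it degenerates to an equality whose case analysis gives the result; for $m\ge 2$ it is weaker than the trivial bound and carries no information, so no degree reduction follows. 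This is precisely why tensor tomography on general simple surfaces (as opposed to negatively curved ones, where the Guillemin–Kazhdan Fourier-mode estimates are uniformly coercive in $k$) remained open for decades.

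The missing ingredient is the existence of holomorphic integrating factors: fiberwise holomorphic and antiholomorphic solutions $w$ of $Xw=-\lambda$ for prescribed $\lambda\in\Omega_{-1}\oplus\Omega_{1}$, which are used to conjugate the transport equation by $e^{w}$ and shift the fiber content of $\tilde f$ down into the range $|k|\le 1$ where the Pestov estimate does apply; one then undoes the conjugation mode by mode. Their existence is a genuinely global analytic fact resting on the surjectivity of $I_0^*$, the adjoint of the scalar X-ray transform, which is proved by microlocal analysis of the normal operator $I_0^*I_0$ (an elliptic pseudodifferential operator of order $-1$ on a simple manifold). Without this input your argument establishes only the classical cases $m=0$ and $m=1$, not the full statement.
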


Note that in the case of the scalar transform for $m=0$, all scalar functions are automatically solenoidal, so the scalar Radon transform $R_0$ is injective in the usual sense.

Given the $s$-injectivity of the Radon transform, we can recover the bulk tensor field by imposing the \emph{solenoidal gauge condition}
\begin{align}
    f = f^{\mathrm{s}}.
\end{align}
The solenoidal gauge is the most commonly employed gauge condition for the tensor Radon transform, but it comes with some inconvenient features. In particular, it does not respect the decomposition of a tensor into its trace and traceless parts. The $2$-tensor Radon transform on a purely trace bulk tensor field is identical to the scalar Radon transform on the corresponding trace function, but the inversion of the $2$-tensor Radon transform under the solenoidal gauge introduces extraneous gauge degrees of freedom which causes the recovered field to disagree with the original. This is rather undesirable, since the corresponding scalar transform is purely injective and admits a unique recovery. We will instead employ an alternative gauge condition, which we introduce in section~\ref{app:gaugefix}, that makes the tensor Radon transform consistent with the scalar Radon transform for pure trace fields.

\section{The Radon transform on the Poincare disk}
\label{app:TRTpoincare}

Let us now focus our attention to the case of a single time-slice of $\mathrm{AdS}_3$. Such a time-slice is isomorphic to the hyperbolic plane, which we will consider in the Poincare disk model. Concretely, the Poincare disk is the Riemannian manifold of constant scalar curvature $R=-1$, defined within the open unit disk 
\begin{align}
    \bbD = \left\{z\in \bbC \mid |z|<1\right\}.
\end{align}
We will cover the Poincare disk using its natural Cartesian coordinates $(x,y)$, or equivalently, using complex coordinates $z=x+iy$. The metric is then given by
\begin{align}
    g(z) = \frac{4(dx^2 + dy^2)}{(1-x^2-y^2)^2}=4(1-|z|^2)^{-2}\,|dz|^2,
    \label{eqn:poincaremetric}
\end{align}
where $|dz|^2 = dz\cdot d\overline{z} = dx^2 + dy^2$. We will denote the Poincare disk by $\bbH = (\bbD, g)$. Geometrically, the geodesics of the Poincare disk are circular segments which are orthogonal to the boundary circle $S^1$.

Both for physical reasons, and to properly define the Radon transform, we cannot work with the Poincare disk in its entirety.\footnote{Technically, the Radon transform doesn't really care about the fact that the Poincare disk is non-compact, especially since there is still a well-defined notion of (asymptotic) boundary. However, it is much more convenient to work with a compact manifold with an actual boundary.} Rather, we must impose a cutoff. We will let $\kappa\in (0,1)$ be a cutoff radius, and consider the cutoff disk
\begin{align}
    \bbD_{\kappa} = \{z \in \bbC \mid |z| < \kappa\}.
\end{align}
Then we consider the cutoff Poincare disk to be the manifold defined by $\bbH_\kappa = (\bbD_\kappa, g)$.

We will be mainly interested in the Radon transform for symmetric $2$-tensors. The Poincare disk (with cutoff) is a simple manifold. As such, the Radon transform is always well-defined on the Poincare disk. To that end, let $f_{ij}$ be a symmetric $2$-tensor field on $\bbH_\kappa$. The Radon transform is then given by
\begin{align}
    R_2[h](p,v) = \int_{0}^{\tau(p,v)}f_{ij}(\gamma(s))\dot{\gamma}_{p,v}^i(s)\dot{\gamma}_{p,v}^j(s)\ ds.
\end{align}
Note that since the boundary of the (cutoff) Poincare disk is a circle, we may uniquely identify each geodesic $\gamma_{p,v}$ with a connected subregion $A$ (i.e., a circular arc) of the boundary. The corresponding geodesic is then the minimal surface with respect to that boundary subregion. In this way, we may regard the image of the Radon transform as a function on the space of boundary subregions.

The metric on the Poincare disk is an isotropic metric. Let us write 
\begin{align}
g(z) = 4(1-|z|^2)^{-2}|dz|^2 = e^{2\lambda(z)}|dz|^2.
\end{align}
We may then write a unit tangent vector as $v = e^{-\lambda}(\cos\theta, \sin\theta)$, where $\theta$ is the natural angular coordinate on the unit disk. Then we can decompose the Radon transform as
\begin{align}
    f_{ij}\dot{\gamma}^i\dot{\gamma}^j &= e^{-2\lambda}\left(f_{11}\cos^2\theta + 2f_{12}\cos\theta\sin\theta + f_{22}\sin^2\theta\right) \label{eq:radon_function}\\
    &= e^{-2\lambda}\left(\frac{f_{11}+f_{22}}{2} + f_{12}\sin2\theta + \frac{f_{11} - f_{22}}{2}\cos2\theta \right)\nonumber\\
    &= e^{-2\lambda}\left(h_0 + he^{2i\theta} + \overline{h}e^{-2i\theta}\right),\nonumber
\end{align}
where in the last line we've defined the components
\begin{align}
    h_0 &= \frac{1}{2}\tr{h},\\
    h &= \frac{1}{4}(f_{11} - 2if_{12} - f_{22}),\\
    \overline{h} &= \frac{1}{4}(f_{11} + 2if_{12} - f_{22}).
\end{align}
The components $h_0$, $h$, and $\overline{h}$ will be called the \emph{trace}, \emph{holomorphic}, and \emph{anti-holomorphic} parts of $f$, respectively. Note that in this way, we can always identify any symmetric $2$-tensor $f_{ij}$ with a function $\tilde{f} \equiv f_{ij}\dot{\gamma}^i\dot{\gamma}^j$ of harmonic content $(-2,0,2)$.

We can then equivalently write the tensor Radon transform in the $h$-components as
\begin{align}
    R_2[g](p,v) = \int_0^{\tau(p,v)}e^{-2\lambda(\gamma(s))}\left(h_0(\gamma(s)) + h(\gamma(s))e^{2i\theta(s)} + \overline{h}(\gamma(s))e^{-2i\theta(s)}\right)\ ds.
    \label{eqn:TRTholo}
\end{align}
Note that the trace component of the transform simply gives the scalar transform (after absorbing the metric into the definition of $h_0$). Therefore the tensor Radon transform $R_2$ is essentially equivalent to the scalar Radon transform for tensors which are purely trace.

As previously mentioned, the solenoidal gauge does not respect the decomposition of the tensor into trace and traceless parts. It will therefore be convenient to find a gauge which preserves this decomposition. We do so in Appendix~\ref{app:gaugefix}.

\section{The Holomorphic Gauge}
\label{app:gaugefix}
In this section, we present an alternative to the solenoidal gauge, which we call the \emph{holomorphic gauge}, which both uniquely fixes a solution space to the inverse Radon transform and preserves the the scalar part of the transform.

To define the holomorphic gauge, we first introduce some background. We employ global isothermal coordinates $(x,y,\theta)$ on the unit circle bundle $S\bbH$ of the Poincare disk, where $(x,y)$ are the usual Poincare coordinates on the base manifold, and where $\theta$ is a fiber coordinate indicating the angular direction of a tangent vector.

We can define the geodesic flow $X$, given in global isothermal coordinates on the unit circle bundle by
\begin{align}
    X = e^{-\lambda}\left(\cos\theta\,\partial_x + \sin\theta\,\partial_y + \left(-\sin\theta\,\partial_x\lambda + \cos\theta\,\partial_y\lambda\right)\partial_\theta\right),
\end{align}
where we write the metric as $g=e^{2\lambda}(dx^2+dy^2)$. Writing the vector field in terms of complex coordinates, we can decompose the geodesic flow as $X = \eta_+ + \eta_-$, where
\begin{align}
    \eta_+ &= e^{-\lambda}e^{i\theta}\left(\partial + i\partial\lambda\,\partial_\theta \right),\\
    \eta_- &= e^{-\lambda}e^{i\theta}\left(\overline{\partial} + i\overline{\partial}\lambda\,\partial_\theta \right) = \overline{\eta}_+,
\end{align}
where $\partial = \frac{1}{2}(\partial_x - i\partial_y)$ and $\overline{\partial} = \frac{1}{2}(\partial_x + i\partial_y)$ are the Wirtinger derivatives. 

Let $\Omega_k = L^2(S\bbH) \cap \ker(\partial_\theta - ik)$ be the space of square-integrable functions on the unit circle bundle with fixed harmonic content $k$. Then it can be shown~\cite{Monard_2014} that $\eta_{\pm}$ are smooth elliptic differential operators such that $\eta_{\pm}:\Omega_k \rightarrow \Omega_{k\pm 1}$ for any $k\in \bbZ$. In particular, note that $\Delta \equiv \eta_+\eta_- = \eta_-\eta_+$ is also a smooth elliptic partial differential operator.

Note that the geodesic flow $X$ is naturally related to the Radon transform as follows~\cite{sharafutdinov1994integral}: Let $f_{ij}$ be a symmetric $2$-tensor field, and let us define the function
\begin{align}
u\left(x,y,\theta\right)=\int_{0}^{\tau_{+}(\gamma)}f_{ij}\dot{\gamma}^{i}\dot{\gamma}^{j}\ ds,
\end{align}
where $\gamma$ denotes the unique unit speed geodesic through $\left(x,y\right)$, with initial angle $\theta$, and where $\tau_{+}(\gamma)$ denotes the exit time of the geodesic. Note that we have $u|_{\partial_{-}SM}=0$ and 
\begin{align}
u|_{\partial_{+}SM}=R_{2}[f],
\end{align}
by construction. If $\gamma_{(x_{0},y_{0},\theta_{0})}$ is a geodesic through $\left(x_{0},y_{0},\theta_{0}\right)$, then 
\begin{align}
u\left(\gamma_{(x_{0},y_{0},\theta_{0})}(t),\theta_{\left(x_{0},y_{0},\theta_{0}\right)}(t)\right)=\int_{t}^{\tau_{+}(\gamma)}f_{ij}\dot{\gamma}^{i}\dot{\gamma}^{j}\ ds,
\end{align}
where $\theta_{\left(x_{0},y_{0},\theta_{0}\right)}(t)\equiv\arg\left[\dot{\gamma}_{\left(x_{0},y_{0},\theta_{0}\right)}(t)\right]$ denotes the angle of the tangent vector to $\gamma_{\left(x_{0},y_{0},\theta_{0}\right)}$ at time $t$. Differentiating with respect to $t$, we get
\begin{align}
\dot{\gamma}^{i}\frac{\partial u}{\partial x^{i}}+\dot{\theta}\frac{\partial u}{\partial\theta}=-f_{ij}\dot{\gamma}^{i}\dot{\gamma}^{j}.\label{eq:transport}
\end{align}
Note that the left-hand side is precisely the expression $Xu$.\footnote{Let us explicitly calculate $\dot{\theta}$ here. We start with the geodesic equation \begin{align}
\ddot{\gamma}^{i}+\Gamma_{\ jk}^{i}\dot{\gamma}^{j}\dot{\gamma}^{k}=0.
\end{align}
Evaluating the $x$ component, for example, we get
\begin{align}
\frac{d}{ds}(e^{-\lambda}\cos\theta)+e^{-2\lambda}\partial_{1}\lambda(\cos^{2}\theta-\sin^{2}\theta)+2e^{-2\lambda}\partial_{2}\lambda\sin\theta\cos\theta=0.
\end{align}
Taking the derivative and expanding, we then have
\begin{align}
-e^{-2\lambda}(\partial_{1}\lambda\cos^{2}\theta+\partial_{2}\lambda\sin\theta\cos\theta)-e^{-\lambda}\sin\theta\cdot\dot{\theta}+e^{-2\lambda}\partial_{1}\lambda(\cos^{2}\theta-\sin^{2}\theta)+2e^{-2\lambda}\partial_{2}\lambda\sin\theta\cos\theta=0,
\end{align}
which simplifies as
\begin{align}
-e^{-\lambda}\partial_{1}\lambda\sin\theta+e^{-\lambda}\partial_{2}\lambda\cos\theta=\dot{\theta},
\end{align}
so that equation~\eqref{eq:transport} is given by
\begin{align}
e^{-\lambda}\left[\cos\theta\frac{\partial}{\partial x}+e^{-\lambda}\sin\theta\frac{\partial}{\partial y}+\left(-\sin\theta\frac{\partial\lambda}{\partial x}+\cos\theta\frac{\partial\lambda}{\partial y}\right)\frac{\partial}{\partial\theta}\right]u\left(x,y,\theta\right)=Xu\left(x,y,\theta\right)=-\tilde{f}\left(x,y,\theta\right),
\end{align}
where $\tilde{f}\equiv f_{ij}\dot{\gamma}^{i}\dot{\gamma}^{j}$, as given in equation~\eqref{eq:radon_function}.} Denoting by $u^f$ the unique solution to the \emph{transport equation}~\eqref{eq:transport} with boundary condition $u^f|_{\partial_{-} SM} = 0$, it follows that the Radon transform $R_2[f]$ is given by
\begin{align}
    R_2[f] = u^f|_{\partial_+ SM}.
\end{align}
The transport equation can therefore be seen as the differential form of the Radon transform.

The key result leading to the holomorphic gauge is then the following theorem.

\begin{theorem}\label{thm:holomorphic_gauge}
For any symmetric $2$-tensor $f \in L^2(SM)$, there exists a unique $2$-tensor $h \in L^2(SM)$ such that $R_2f = R_2h$, and such that $h$ is of the form
\begin{align}
    h = h_0 + h_{2} + h_{-2},
\end{align}
where $h_0 \in L^2(M)\cap \Omega_0$, and where $h_{\pm 2} \in \ker\eta_{\mp}\cap \Omega_{\pm 2}$.
\end{theorem}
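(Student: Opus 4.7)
The plan is to reduce the theorem to a Dirichlet problem for the elliptic operator $\eta_+\eta_-$ on $S\bbH_\kappa$. Since requiring $R_2 f = R_2 h$ is equivalent to $h - f \in \ker R_2$, and Theorem~\ref{thm:kernel} identifies the kernel with the image of $\mathbf{d}$ applied to vector fields vanishing on $\partial M$, I am really searching for a vector field $v$ with $v|_{\partial M}=0$ such that $h \equiv f + \mathbf{d}v$ satisfies the holomorphic conditions $\eta_-\, h_{+2}=0$ and $\eta_+\, h_{-2}=0$ on its fiber-harmonic components.

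The key algebraic observation is that the symbol of $\mathbf{d}v$ is $X\tilde v$, where $X = \eta_+ + \eta_-$ is the geodesic vector field. Since the symbol of a vector field has harmonic content only in $\Omega_{-1}\oplus\Omega_{+1}$, the symbol of $\mathbf{d}v$ decomposes by harmonic content as
\begin{align*}
\widetilde{\mathbf{d}v} \;=\; \bigl(\eta_+\tilde v_{-1} + \eta_-\tilde v_{+1}\bigr) \,+\, \eta_+\tilde v_{+1} \,+\, \eta_-\tilde v_{-1},
\end{align*}
with the three summands lying in $\Omega_0$, $\Omega_{+2}$, and $\Omega_{-2}$, respectively. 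Matching against $\tilde f = \tilde f_0 + \tilde f_{+2} + \tilde f_{-2}$ and applying $\eta_\mp$ to the $\pm 2$ components, the holomorphic gauge condition becomes the decoupled pair of Dirichlet problems
\begin{align*}
\eta_-\eta_+\tilde v_{+1} = -\eta_-\tilde f_{+2}, \qquad \eta_+\eta_-\tilde v_{-1} = -\eta_+\tilde f_{-2},
\end{align*}
subject to $\tilde v_{\pm 1}|_{\partial S\bbH_\kappa}=0$ (which is just the lift of $v|_{\partial M}=0$). Existence then follows because $\Delta \equiv \eta_+\eta_- = \eta_-\eta_+$ is a smooth elliptic second-order operator on each $\Omega_k$ (as noted immediately before the theorem), so standard theory for the Dirichlet problem on the compact manifold-with-boundary $S\bbH_\kappa$ yields a unique solution $\tilde v_{\pm 1}$ in the appropriate Sobolev space. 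Setting $h_{\pm 2} \equiv f_{\pm 2} + \eta_\pm \tilde v_{\pm 1}$ and $h_0 \equiv f_0 + \eta_+\tilde v_{-1} + \eta_-\tilde v_{+1}$ produces the desired representative.

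For uniqueness, suppose $h$ and $h'$ both satisfy the statement. Then $h-h'\in\ker R_2$, so $h-h'=\mathbf{d}w$ for some $w$ with $w|_{\partial M}=0$; equating $\Omega_{\pm 2}$-components gives $\eta_\mp\eta_\pm\tilde w_{\pm 1}=0$ with zero Dirichlet data, whence $\tilde w_{\pm 1}=0$ by the same elliptic argument, forcing $w=0$ and hence $h=h'$. The main obstacle is analytic rather than algebraic: one must verify that $\eta_+\eta_-$ genuinely defines a well-posed elliptic Dirichlet problem on the finite-cutoff hyperbolic disk in function spaces compatible with the $L^2(SM)$ hypothesis of the statement. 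Because $\bbH_\kappa$ is simple with smooth, strictly convex boundary, one expects to invoke standard boundary-regularity and Fredholm theory for second-order elliptic operators; but care is needed in choosing the Sobolev scale so that $\mathbf{d}$ lands in $L^2$, so that trace operators onto $\partial SM$ are meaningful on the $\Omega_{\pm 1}$ pieces, and so that no mismatch arises from the slight regularity loss typically present in such boundary value problems.
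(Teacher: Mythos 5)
Your proof is correct and follows essentially the same route as the paper's: both reduce the holomorphic gauge conditions to the elliptic Dirichlet problems $\Delta \tilde v_{\pm 1} = -\eta_{\mp} f_{\pm 2}$ with vanishing boundary data and then shift $f$ by the resulting potential term, the only cosmetic difference being that you verify $R_2(f+\mathbf{d}v)=R_2 f$ via Theorem~\ref{thm:kernel} together with the identity $\widetilde{\mathbf{d}v}=X\tilde v$, whereas the paper performs the equivalent substitution $\eta_+ v_1 = Xv_1 - \eta_- v_1$ directly inside the transport equation. One small caveat: your uniqueness step requires the full $s$-injectivity theorem (that $\ker R_2$ consists \emph{only} of potential fields), not merely Theorem~\ref{thm:kernel}, though that result is stated in the paper and the paper's own proof is no more explicit on this point.
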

\begin{proof}
We adapt the proof from Theorem 1 of~\cite{Monard2015}, which covers the case where the underlying metric is the usual flat Euclidean metric. Let $f\in L^2(SM) \cap \Omega_k$ be given. Then consider the differential equation
\begin{align}
    \eta_- f = \Delta v,\\
    v|_{\partial SM} = 0,
\end{align}
where $v \in H^1(SM)\cap \Omega_{k-1}$. Since $\Delta$ is a smooth elliptic operator, it follows from the standard theory of elliptic differential equations that the above system admits a unique (weak) solution $v$. Given such a solution, let us define $g = f - \eta_+ v$. It follows that 
\begin{align}
\eta_- g = \eta_-(f- \eta_+ v) = 0.
\end{align}
This shows that each $f\in L^2(SM)\cap \Omega_k$ can be written uniquely as
\begin{align}
    f = \eta_+ v + g,
\end{align}
where $v \in H^1(SM)\cap \Omega_{k-1}$, $v|_{\partial SM} = 0$, and where $g\in L^2(SM)$ such that $\eta_- g = 0$.

Next, let $f$ be a given $2$-tensor which we may write as
\begin{align}
    f = f_0 + f_{-2}+ f_2,
\end{align}
where $f_k \in \Omega_k$. The transport equation gives us 
\begin{align}
    Xu = -f,\\
    u|_{\partial_- SM} = 0,\\
    u|_{\partial_+ SM} = R_2f.
\end{align}
Let us apply the previously derived decomposition to write $f_2 = \eta_+ v_1 + g_2$, where $v_1|_{\partial SM} = 0$ and where $\eta_- g_2 = 0$. This gives us
\begin{align}
    f_2 &= \eta_+ v_1 + g_2\\
        &= Xv_1 -\eta_- v_1 + g_2,
\end{align}
which allows us to write the transport equation as
\begin{align}
    X(u + v_1) = -(f_0 - \eta_- v_1 + f_{-2} +  g_2),
\end{align}
where $\eta_- g_2 = 0$. Since $v_1|_{SM} = 0$, it follows that we have $(u+v_1)|_{\partial_+ SM} = u|_{\partial_+ M} = R_2f$. If we define the $2$-tensor $h$ by
\begin{align}
    h = h_0 + h_{-2} + h_2,
\end{align}
where $h_0 = f_0 - \eta_- v_1$ and $h_{-2} = f_{-2}$, then $h$ satisfies $R_2h = R_2f$, and is such that $\eta_- h_2 = 0$. We may repeat this reasoning with the $f_{-2}$ term  (using the complex conjugate of the previous decomposition) to obtain the desired result.
\end{proof}

\begin{definition}
We will define a $2$-tensor whose components satisfy the conditions of Theorem~\ref{thm:holomorphic_gauge} to be in the \emph{holomorphic gauge}. 
\end{definition}

Importantly, let us note that if a $2$-tensor $f$ is purely scalar, i.e., $f = f_0$, then it is trivially already in the holomorphic gauge. Thus the holomorphic gauge is a gauge which respects the scalar part of the transform. This is to be contrasted with the solenoidal gauge, which will introduce spurious off-diagonal components even for scalar tensor fields.

Since $h_{\pm 2} \in \Omega_{\pm 2}$, let us write $h_{2} = he^{2i\theta}$ and $h_{-2} = \overline{h}e^{-2i\theta}$, where $h,\overline{h} \in L^2(M)$. In this notation, the holomorphic gauge condition reads
\begin{align}
    \eta_+(\overline{h}e^{-2i\theta}) = 0,\\
    \overline{\eta_+(\overline{h}e^{-2i\theta})} = 0.
\end{align}
We have
\begin{align}
    \eta(\overline{h}e^{-2i\theta}) &= e^{-\lambda}e^{i\theta}\left(\partial + i\partial\lambda\,\partial_\theta \right)(\overline{h}e^{-2i\theta})\label{eq:hologauge1}\\
      &= e^{-\lambda}e^{-i\theta}\left(\partial + 2\partial\lambda \right)h,\label{eq:hologauge2}
\end{align}
so the holomorphic gauge conditions simplify to the Schrodinger type equations
\begin{align}
    (\partial + 2\partial \lambda)\overline{h} = 0,\\
    (\overline{\partial} + 2\overline{\partial} \lambda)h = 0.
    \label{eqn:hologaugeconstraint}
\end{align}
We can solve these equations by introducing an integrating factor of $e^{2\lambda}$. Then the equation for $h$ gives us
\begin{align}
    0 &= e^{2\lambda}\overline{\partial}h + 2\overline{\partial} \lambda\,e^{2\lambda}h\\
    &= \overline{\partial}(e^{2\lambda}h).
\end{align}
This amounts to saying that the solutions are holomorphic functions (up to an exponential factor) on the unit disk $\bbD$. We can therefore generate any solution as follows: Given any function on the unit circle $h|_{\partial\bbD}:\partial\bbD \rightarrow \bbC$, we can extend $h$ into the unit disk using Cauchy's integral formula to get
\begin{align}
    h(z) = \frac{c_\lambda e^{-2\lambda(z)}}{2\pi i}\oint_{\partial\bbD}\frac{h|_{\partial \bbD}(w)}{w-z}\ dw,\label{eq:holo_gauge_solution}
\end{align}
where we write $c_\lambda = e^{2\lambda(1)}$ to denote the (constant) value of $e^{2\lambda}$ on the unit circle. We will use this convenient reconstruction property of solutions in the holomorphic gauge to benchmark the numerical (inverse) Radon transform in Appendix~\ref{app:benchmark}.

\section{The Numerical (Inverse) Radon Transform}
\label{app:numrecon}

In this appendix, we describe the details of the numerical (inverse) Radon transform.

The manifold we work with is the hyperbolic plane $\bbH_2$, which is modeled as the Poincare disk, i.e., the unit disk equipped with the canonical hyperbolic metric
\begin{align}
    g = \frac{4}{(1-x^2-y^2)^2}(dx^2+dy^2),
\end{align}
where $(x,y)$ are global Poincare coordinates. For physical reasons, and to make the Radon transform well-defined, we must impose a cutoff on the Poincare disk. We thus pick a constant $\kappa \in (0,1)$ and work with the Poincare disk restricted to $r\le \kappa$. Equivalently, the cutoff Poincare disk can be regarded (after a rescaling of the metric) as a model for a hyperbolic plane with curvature equal to $-\kappa^2$.

To perform the numerical Radon transform, we must first discretize the Poincare disk. A natural first choice would be to use a uniform tiling, a choice which conforms best to the intrinsic symmetries of the Poincare disk. However, the Gauss-Bonnet theorem places limitations on how fine a uniform tiling can be, and the inability to take the tile size to zero is an unwieldy restriction. Instead, we opt for a simple square tessellation which we perform in the Beltrami-Klein model.

The Beltrami-Klein model is related to the Poincare disk model through the change of coordinates
\begin{align}
    (r,\theta) \mapsto (R,\Theta) = \left(\frac{2r}{1+r^2},\theta\right), 
\end{align}
where $(r,\theta)$ denotes polar Poincare coordinates, and $(R,\Theta)$ denotes polar Beltrami-Klein coordinates. The Beltrami-Klein model has the convenient property that geodesics are straight lines. This makes a square tessellation in the Beltrami-Klein model the closest analogue to a regular Euclidean square tessellation for the hyperbolic plane. The figures present throughout this paper showcase the corresponding square tessellation in the Poincare disk.

Given a tessellation of the Poincare disk, we can discretize any function $f$ by assigning to a given tile $\cT$ the value of $f$ at the centroid of $\cT$. Ordering the tiles arbitrarily, we can regard the discretized functions as vectors,
\begin{align}
    \mathbf{f} = \begin{pmatrix}f(\cT_1) \\ \vdots \\ f(\cT_N)\end{pmatrix}.
\end{align}

We must also discretize the (ideal) boundary of the Poincare disk for the Radon transform. We do so by placing $M$ equally spaced boundary sites on the unit circle. We will then consider the collection of all geodesics originating from one boundary site and terminating on another. We can order the boundary sites by their angular position on the unit circle, and the geodesics lexicographically by the angular positions of their endpoints.

For each geodesic of the background geometry, the integral \eqref{eqn:TRTholo} can then be discretized by replacing the functions $h_0, h$ and $h$ with their piece-wise constant discretizations: 
\begin{align}
I_2[g](\gamma_j) &= \int_{\gamma_j}e^{-2\lambda(\gamma_j(s))}\left(h_0(\gamma_j(s)) + h(\gamma_j(s))e^{2i\theta(s)} + \overline{h}(\gamma_j(s))e^{-2i\theta(s)}\right)\ ds \\
&\approx \sum_{\mathcal{T}}\left[ h_0(\mathcal{T}) W_0(\gamma_j, \mathcal{T}) + h(\mathcal{T})W(\gamma_j, \mathcal{T})+\overline{h}(\mathcal{T})\overline{W}(\gamma_j, \mathcal{T})\right],
\end{align}
where $W_0, W$, and $\overline{W}$ contain the information on the geodesic and the remaining parts of the integrand. Explicitly, $W_0$ and $W$ are defined by
\begin{align}
 W_0(\gamma_j,\mathcal{T}) &= \int_{\gamma_j} e^{-2\lambda(\gamma_j(s))}\,\mathbf{1}_{\cT}(\gamma_j(s))\ ds,\\
W(\gamma_j,\mathcal{T}) &= \int_{\gamma_j} e^{-2\lambda(\gamma_j(s))+2i\theta(s)}\,\mathbf{1}_{\cT}(\gamma_j(s))\ ds,
\end{align}
where $\mathbf{1}_{\cT}(\gamma(s))$ is an indicator function such that
\begin{equation}
\mathbf{1}_{\cT}(x) = \begin{cases}1 & x \in \cT,\\
0 & x \notin \cT.
\end{cases}
\end{equation}
Note that $W_0$ is nothing but the arc lengths of the geodesic segments that intersects a tile $\cT$. The function $W$, on the other hand, also contains a complex weight which captures the directionality of the geodesic in the tile.

It's important to note that $W_0$ and $W$ depend only on the particular choice of discretization, and can be pre-computed.

We can collect all of the quantities into a matrix equation. Let $\mathbf{W}$ be the $K \times 3N$, where $K=\binom{M}{2}$, defined by

\begin{align}
\tiny{
    \mathbf{W} = \begin{pmatrix}W_0(\gamma_1,\cT_1) &  \cdots & W_0(\gamma_1,\cT_N) & W(\gamma_1, \cT_1) & \cdots & W(\gamma_1,\cT_N) & \overline{W}(\gamma_1,\cT_1) & \cdots & \overline{W}(\gamma_1,\cT_N) \\
    W_0(\gamma_2,\cT_1) &  \cdots & W_0(\gamma_2,\cT_N) & W(\gamma_2, \cT_1) & \cdots & W(\gamma_2,\cT_N) & \overline{W}(\gamma_2,\cT_1) & \cdots & \overline{W}(\gamma_2,\cT_N)\\
    \vdots & \ddots & \vdots & \vdots & \ddots & \vdots  & \vdots & \ddots & \vdots\\
    W_0(\gamma_{K},\cT_1) &  \cdots & W_0(\gamma_{K},\cT_N) & W(\gamma_{K}, \cT_1) & \cdots & W(\gamma_{K},\cT_N) & \overline{W}(\gamma_{K},\cT_1) & \cdots & \overline{W}(\gamma_{K},\cT_N)
    \end{pmatrix}.}
\end{align}
Likewise, let $\mathbf{h}$ be the length $3N$ vector defined by
\begin{align}
    \mathbf{h} = \begin{pmatrix}h_0(\cT_1) \cdots h_0(\cT_N) & h(\cT_1) & \cdots & h(\cT_N) & \overline{h}(\cT_1) & \cdots & \overline{h}(\cT_N)\end{pmatrix}^{\mathrm{T}}.
\end{align}
Then the discretized Radon transform, which we denote by $\Delta\mathbf{L}$, is given by the matrix equation
\begin{align}
    \Delta\mathbf{L} = \mathbf{W}\mathbf{h}.\label{eq:discrete_radon}
\end{align}

The discrete inverse Radon transform is then just the inverse problem to the system~\eqref{eq:discrete_radon}. However, the inverse problem is complicated by the fact that the Radon transform is neither surjective nor injective (recall that the forward transform has a non-trivial kernel that corresponds to the gauge degrees of freedom). 

To make the inverse well-posed, we must supplement it with a set of gauge constraints to pick out a unique inverse (when it exists) in the continuum case. In the discrete analog, we expect the kernel to be visible through the presence of zero eigenvalues in the singular value spectrum of $\mathbf{W}$. Due to discretization and numerical errors, the system~\eqref{eq:discrete_radon} will either be singular or extremely ill-conditioned. As in the continuum case, we will supplement the system~\eqref{eq:discrete_radon} with a set of discretized gauge constraints to make the problem well-posed.

For our implementation, we discretize the holomorphic gauge conditions given by equations~\eqref{eq:hologauge1} and~\eqref{eq:hologauge2}. The holomorphic gauge constraints are simple first order partial differential equations. They can be realized discretely by implementing $\partial_x$ and $\partial_y$ are finite-difference operators. We will use a simple three-point stencil for the finite-difference operators, although higher-order variations can be used for increased accuracy. For example, the $x$ partial of a function $f$ at tile $i$ will be approximated by
\begin{align}
    \partial_xf(\cT_i) \approx \frac{f(\cT_{R(i)})-f(\cT_{L(i)})}{2}, 
\end{align}
where $L(i)$ and $R(i)$ denote the indices of the tiles to the immediate left and right of tile $i$.\footnote{This approximation will work for all non-boundary tiles. The tiles on the boundaries will have to use forwards or backwards finite-differences instead. For example, for a tile on the left boundary, we get
\begin{align}
    \partial_xf(\cT_i) \approx \frac{-3f(\cT_i) + 4f(\cT_{R(i)}) - f(\cT_{R(R(i))})}{2}.
\end{align}
} The $y$-partials are analogous. In this way, we can write discretize the partial derivatives as matrices $\tilde{\Delta}_{x}$ and $\tilde{\Delta}_{y}$.\footnote{The entries of $\tilde{\Delta}_x$ are given by 
\begin{align}
(\tilde{\Delta}_x)_{ij} = \frac{\delta_{L(i),j} - \delta_{R(i),j}}{2} 
\end{align}
if tile $i$ is a non-boundary tile, with the appropriate modifications for the boundaries.
}
Since the vector $\mathbf{h}$ effectively contains $3$ copies stacked on top of each other, we define the operators  
\begin{align}
    \Delta_x = \left(\tilde{\Delta}_x \mid \tilde{\Delta}_x \mid \tilde{\Delta}_x\right),\label{eqn:discrete_derivative}\\
    \Delta_y = \left(\tilde{\Delta}_y \mid \tilde{\Delta}_y \mid \tilde{\Delta}_y\right).\nonumber
\end{align}
The discretized holomorphic gauge conditions can then be written as
\begin{align}
    \mathbf{C}\mathbf{h} = (\Delta_x + i\Delta_y + 2\Lambda)\mathbf{h},\label{eq:discrete_gauge_constraints}
\end{align}
where $\Lambda$ is the matrix defined by
\begin{align}
    \Lambda = \left(\tilde{\Lambda}\mid \tilde{\Lambda} \mid \tilde{\Lambda}\right),
\end{align}
where $\tilde{\Lambda}$ has entries given by
\begin{align}
    \tilde{\Lambda}_{ij} = \partial\lambda(\cT_i)\,\delta_{ij}.
\end{align}

\section{Accuracy of the Numerical Reconstruction}
\label{app:benchmark}

In the absence of an analytic reconstruction formula in the continuum case, we need to validate the numerical reconstruction so as to provide confidence that it performs the inverse transform correctly. To do so, we will first benchmark the numerical reconstruction using various test cases obtained by instantiating known rank-2 symmetric tensor fields in the bulk in the holomorhpic gauge. From equation~\eqref{eq:holo_gauge_solution}, we can see that a bulk solution in the holomorphic gauge can be readily generated by prescribing the boundary values.

As a benchmarking test, we therefore test the reconstruction algorithm as follows:
\begin{enumerate}
\item We first pick an arbitrary function $h|_{\partial\bbD}:\partial\bbD \rightarrow \bbC$ on the unit circle. We also pick another arbitrary scalar function $h_0:\bbD \rightarrow \bbC$.

\item Using equation~\eqref{eq:holo_gauge_solution}, we extend the function $h|_{\partial \bbC}$ into the bulk to get a holomorphic function $h:\bbD\rightarrow \bbD$.

\item From the functions $h_0$ and $h$, we define the bulk tensor
\begin{align}
    f = h_0 + he^{2i\theta} + \overline{h}e^{-2i\theta},
\end{align}
where $\overline{h}$ denotes the complex conjugate of $h$. The $f$ defined in this way is a real-valued continuum bulk $2$-tensor with components fixed in the holomorphic gauge.

\item We run the forward numerical Radon transform to generate boundary data $\mathbf{b}$. From $\mathbf{b}$, we then run the discretized inverse Radon transform with holomorphic gauge constraints to numerically recover a discretized bulk solution $\mathbf{h}$.

\item Finally, we compare the values of $h$, evaluated at the centroids of the discretized tiling, with the reconstructed value $\mathbf{h}$. Denoting the exact solution at the centroids as $\mathbf{h}_*$, we can evaluate the relative error
\begin{align}
    \cE_{\mathrm{bulk}} = \frac{\|\mathbf{h} - \mathbf{h}_*\|}{\|\mathbf{h}_*\|}
\end{align}
to get a sense of the reconstruction quality.
\end{enumerate}

In the absence of known analytic results, and before we move onto with physically relevant examples, we can run the above test with several choices of boundary functions $h|_{\partial \bbD}$ as a proof of concept that the numerical inverse Radon transform is performing an adequate reconstruction. 

Below, we show some sample test cases for the numerical inverse transform. All of the transforms shown below are performed with $1000$ bulk tiles and $100$ boundary sites (for a total of $4950$ geodesics), following the procedure outlined above.

The reconstructions here show good agreement with the original bulk data. On a qualitative level, the plots of the bulk metric profile are essentially indistinguishable between the original and the reconstruction. The bulk relative errors across various test cases range from $0.01$ to $0.1$, indicating quantitatively successful reconstructions in general. The boundary relative errors are typically an order of magnitude smaller than the bulk relative errors, ranging from $0.001$ to $0.01$. The magnitudes of the boundary relative errors for these known test cases can serve as an estimate for the general magnitude of numerical errors present in the algorithm. See Figures~\ref{fig:benchmark1}-\ref{fig:benchmark4}.\footnote{Note that the errors between the original and the reconstruction would be better illustrated as difference plots. Due to the unfortunately circumstances around the world during the writing of this paper, some of the reconstruction data is currently quarantined away from one of the authors. Difference plots will be added in a later version.}

It should be noted that the relative errors shown here can actually be slightly misleading. Most sources of error in the reconstruction arise due to large fluctuating values of the boundary tiles. Tiles at or near the boundary are generally underconstrained due to the relatively small number of geodesics which pass through any given boundary region. This allows the boundary tiles to take on arbitrary values in order to minimize the relative boundary error as the algorithm is designed to do, although this comes at the cost of bulk accuracy. We can see that if we exclude the the values of the boundary tiles from the calculation of relative error, that the relative error of the reconstruction is generally an order of magnitude smaller. This suggests that the bulk reconstruction performs very well deep into the bulk, with larger errors towards the boundary. This is important to keep in mind, since it suggests that the qualitative bulk picture provided by the numerical inverse should be generally trustworthy.

\begin{figure}
    \centering
    \includegraphics[width=0.6\textwidth]{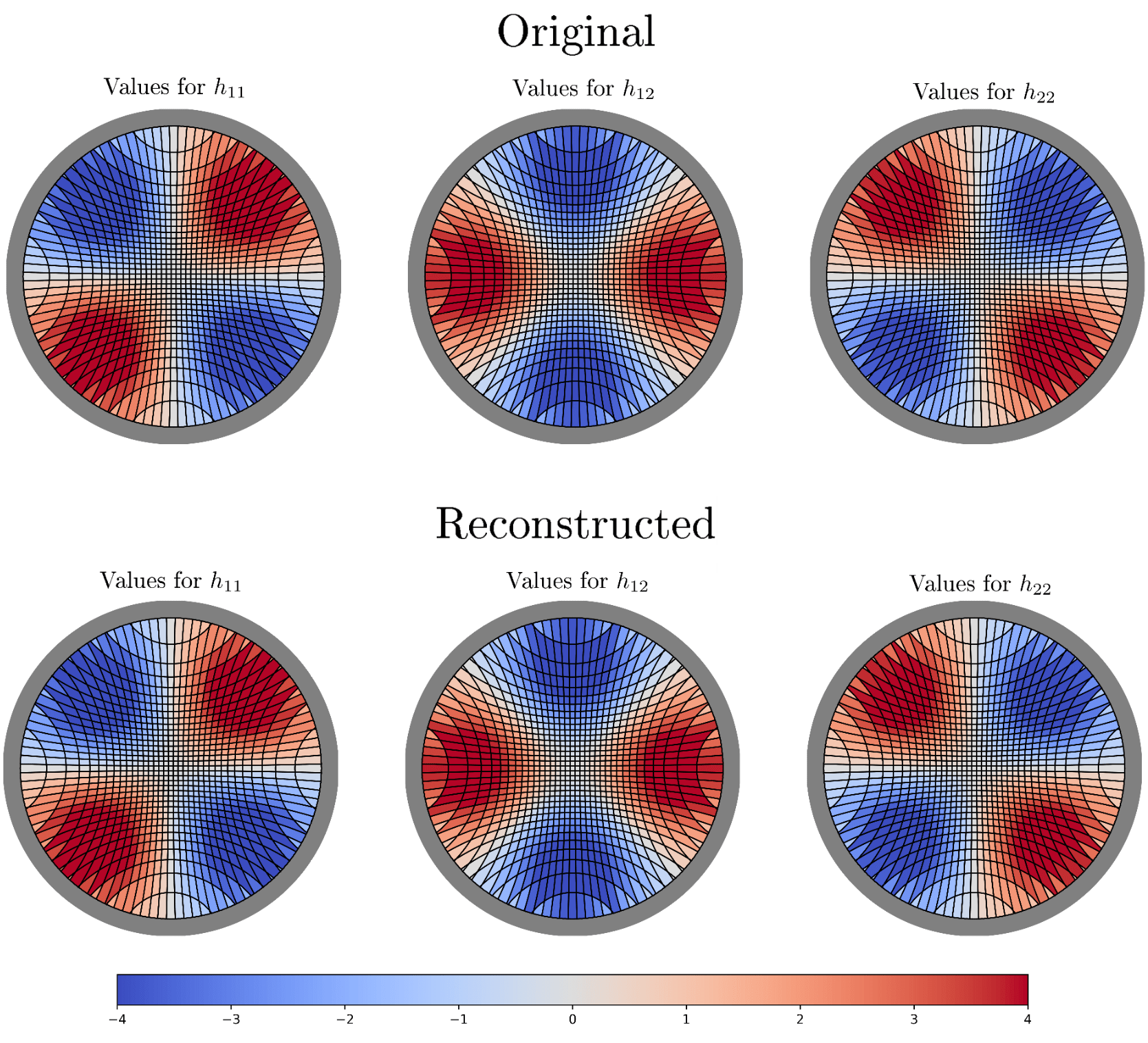}
    \caption{\textbf{Top:} Original bulk data generated with $h_0(x,y) = xy$ and $h|_{\bbC}(\theta)=\sin(2\theta)$. \textbf{Bottom:} The bulk data reconstructed after running the forward Radon transform to get boundary data. Visually, the two sets of data are identical. The relative error between the two are $\cE_{\mathrm{bulk}} \approx 0.0227$ (relative error $0.004643$ without boundary tiles).}
    \label{fig:benchmark1}
\end{figure}

\begin{figure}
    \centering
    \includegraphics[width=0.6\textwidth]{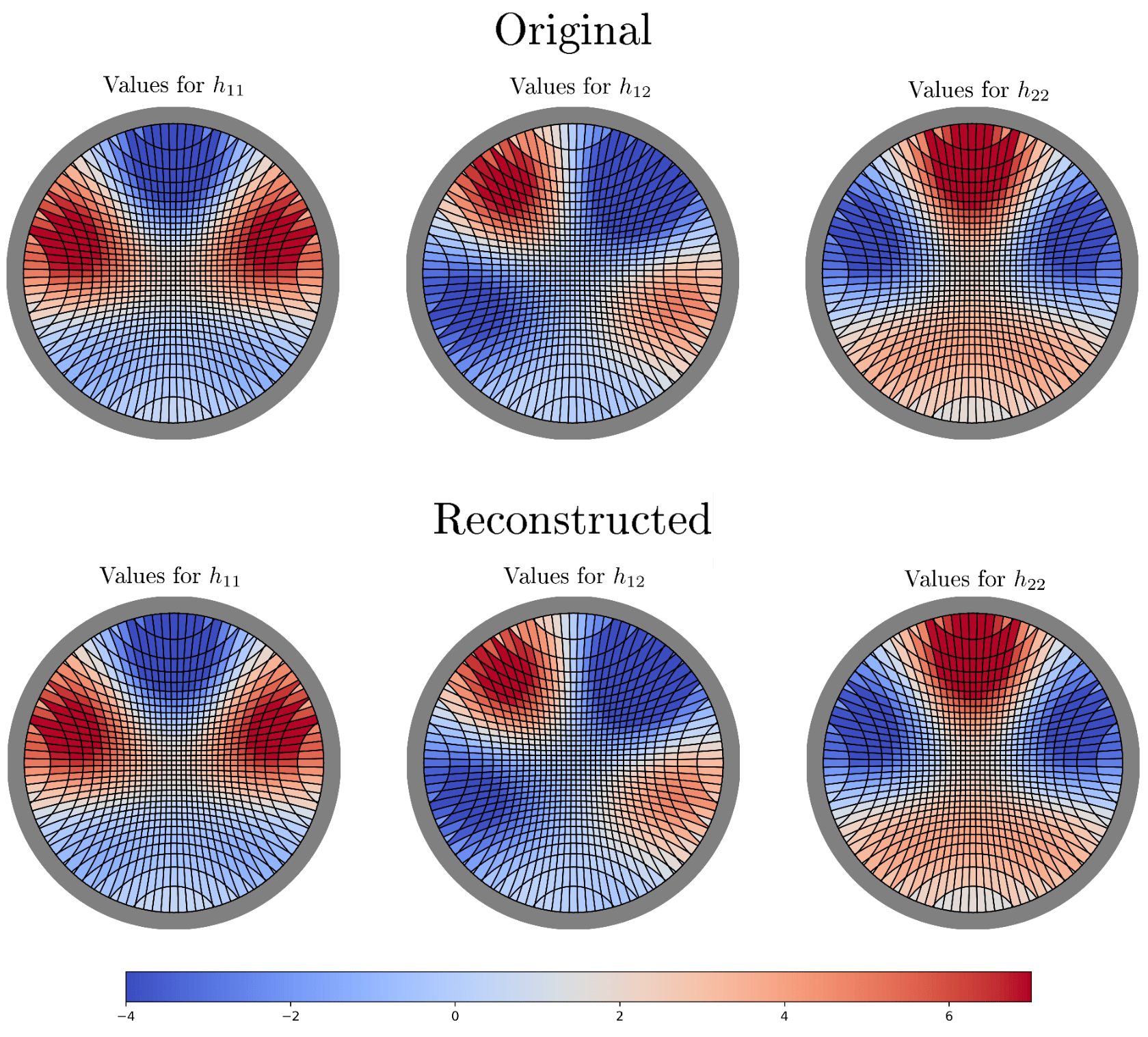}
    \caption{\textbf{Top:} Original bulk data generated with $h_0(x,y) = 2e^{-(x^2+y^2)}$ and $h|_{\bbC}(\theta)=\cos(2\theta) + \sin(3\theta)$. \textbf{Bottom:} The bulk data reconstructed after running the forward Radon transform to get boundary data. Visually, the two sets of data are identical. The relative error between the two are $\cE_{\mathrm{bulk}} \approx 0.0273$ (relative error $0.005516$ without boundary tiles).}
    \label{fig:benchmark2}
\end{figure}

\begin{figure}
    \centering
    \includegraphics[width=0.6\textwidth]{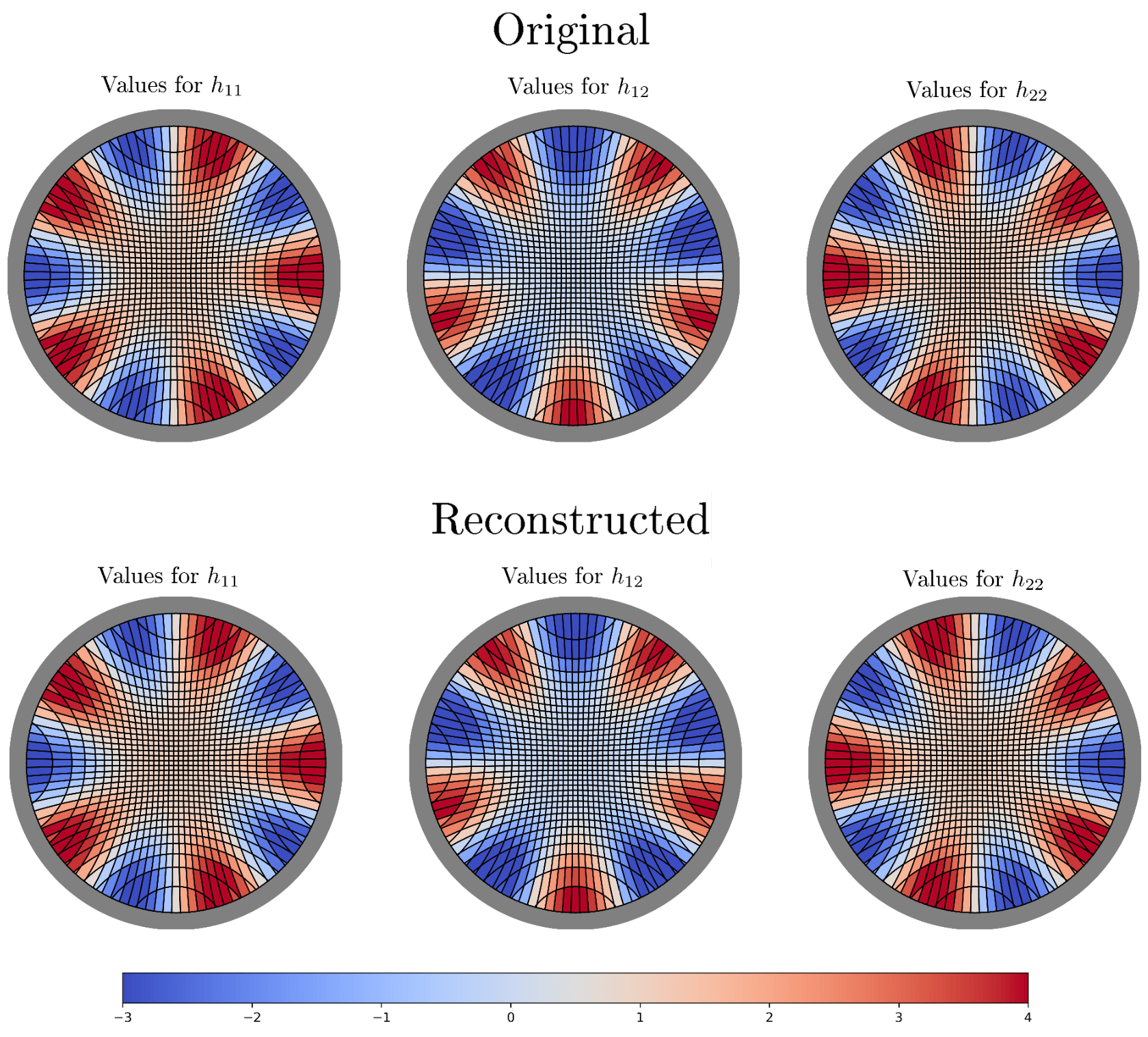}
    \caption{\textbf{Top:} Original bulk data generated with $h_0(x,y) = 1/(1+x^2+y^2)$ and $h|_{\bbC}(\theta)=\cos(5\theta)$. \textbf{Bottom:} The bulk data reconstructed after running the forward Radon transform to get boundary data. Visually, the two sets of data are identical. The relative error between the two are $\cE_{\mathrm{bulk}} \approx 0.0885$ (relative error $0.01357$ without boundary tiles).}
    \label{fig:benchmark3}
\end{figure}

\begin{figure}
    \centering
    \includegraphics[width=0.6\textwidth]{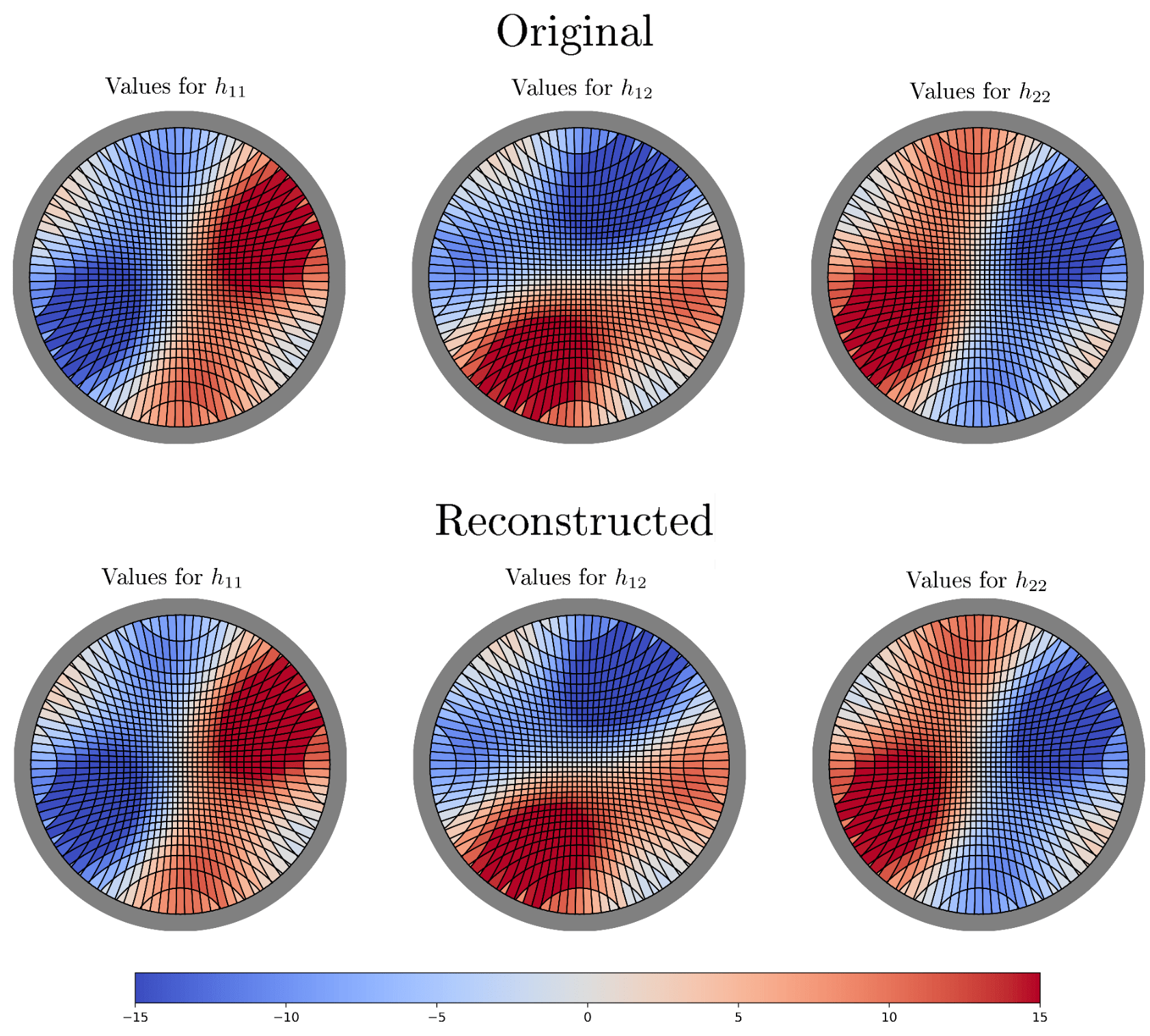}
    \caption{\textbf{Top:} Original bulk data generated with $h_0(x,y) = x^2+y^2$ and $h|_{\bbC}(\theta)=2\cos(\theta)+3\sin(3\theta)$. \textbf{Bottom:} The bulk data reconstructed after running the forward Radon transform to get boundary data. Visually, the two sets of data are identical. The relative error between the two are $\cE_{\mathrm{bulk}} \approx 0.01886$ (relative error $0.003709$ without boundary tiles).}
    \label{fig:benchmark4}
\end{figure}

\subsection{Constrained Optimization}
With the discrete Radon transform~\eqref{eq:discrete_radon} and the discretized holomorphic constraints~\eqref{eq:discrete_gauge_constraints}, we can solve this linear system for $\mathbf{h}$ as a constrained optimization problem:

%With the gauge constraints, we can solve (\ref{eqn:discreteinvprob}) as a partially constrained optimization problem,

\begin{align}
   \min_{\mathbf{h}}\,\|\mathbf{W}\mathbf{h}-\mathbf{b}\|,\label{eqn:conslstsq}\\
    \text{subject to}\ \mathbf{C}\mathbf{h}=\mathbf{0},\nonumber
\end{align}
We look for a best-fit solution $\mathbf{h}_*$ that solves the above system. In general, we do not expect there to exist a solution $\mathbf{h}_*$ such that $\mathbf{W}\mathbf{h}_*-\mathbf{b}=\mathbf{0}$, due to either numerical/discretization errors or the boundary data being non-geometric.

Because the objective function is linear, this problem has a unique global solution that can be obtained using standard constrained least squares. We briefly review the method\cite{} below. 
 
\begin{theorem}
Consider the constrained least square problem~\eqref{eqn:conslstsq}. Assuming the stacked matrix

\begin{equation}\label{eq:stacked_matrix}
\begin{pmatrix}
\mathbf{W}\\
\mathbf{C}
\end{pmatrix}
\end{equation}

is left-invertible and $\mathbf{C}$ is right-invertible, a vector $\mathbf{h}_*$ uniquely solves the constrained least square problem~\eqref{eqn:conslstsq} if and only if there exists some $\mathbf{g}$ such that
\begin{equation}
    \begin{pmatrix}
    \mathbf{W}^\dagger\mathbf{W} & \mathbf{C}^\dagger\\
    \mathbf{C} & \mathbf{0}
    \end{pmatrix}
    \begin{pmatrix}
    \mathbf{h}_*\\
    \mathbf{g}
    \end{pmatrix}=
    \begin{pmatrix}
    \mathbf{W}^\dagger\mathbf{b}\\
    \mathbf{0}
    \end{pmatrix}.
    \label{eqn:lstsqr}
\end{equation}
\end{theorem}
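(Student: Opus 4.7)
The plan is to recognize this as the standard Karush--Kuhn--Tucker characterization for an equality-constrained convex quadratic program, and to derive it from Lagrange multipliers plus a uniqueness argument based on the block matrix kernel. First I would note that the feasible set is nonempty (it contains $\mathbf{h}=\mathbf{0}$), the objective $\|\mathbf{Wh}-\mathbf{b}\|^2$ is a convex quadratic, and the constraint $\mathbf{Ch}=\mathbf{0}$ cuts out an affine subspace; consequently, first-order stationarity conditions are both necessary and sufficient for global optimality. This reduces the theorem to (a) deriving the stationarity conditions and (b) proving uniqueness of the pair $(\mathbf{h}_*,\mathbf{g})$.

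For step (a), I would form the Lagrangian
\begin{equation}
\mathcal{L}(\mathbf{h},\mathbf{g}) \;=\; \|\mathbf{Wh}-\mathbf{b}\|^2 + 2\,\mathrm{Re}\bigl(\mathbf{g}^{\dagger}\mathbf{Ch}\bigr),
\end{equation}
and compute the Wirtinger gradient with respect to $\mathbf{h}$ (and with respect to $\mathbf{h}^{\dagger}$, equivalently). Stationarity in $\mathbf{h}$ gives $\mathbf{W}^{\dagger}\mathbf{W}\mathbf{h} - \mathbf{W}^{\dagger}\mathbf{b} + \mathbf{C}^{\dagger}\mathbf{g} = \mathbf{0}$, while stationarity in $\mathbf{g}$ reproduces $\mathbf{Ch}=\mathbf{0}$. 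Together these assemble exactly into~\eqref{eqn:lstsqr}. Necessity follows from the Lagrange multiplier theorem for affine constraints (no constraint qualification is needed since the constraints are linear); sufficiency follows from convexity, because at any stationary point the quadratic lies above its tangent plane, so any KKT point is a global minimizer.

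For step (b), the key is to show that the block matrix on the left of~\eqref{eqn:lstsqr} is invertible under the two hypotheses. Suppose $(\mathbf{h},\mathbf{g})$ lies in its kernel. Then $\mathbf{Ch}=\mathbf{0}$ and $\mathbf{W}^{\dagger}\mathbf{W}\mathbf{h}+\mathbf{C}^{\dagger}\mathbf{g}=\mathbf{0}$. Left-multiplying the second equation by $\mathbf{h}^{\dagger}$ and using $\mathbf{Ch}=\mathbf{0}$ yields $\|\mathbf{Wh}\|^2 = 0$, so $\mathbf{Wh}=\mathbf{0}$. Thus $\mathbf{h}$ lies in the kernel of the stacked matrix~\eqref{eq:stacked_matrix}, which by left-invertibility forces $\mathbf{h}=\mathbf{0}$. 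Substituting back leaves $\mathbf{C}^{\dagger}\mathbf{g}=\mathbf{0}$; but right-invertibility of $\mathbf{C}$ is equivalent to left-invertibility (injectivity) of $\mathbf{C}^{\dagger}$, so $\mathbf{g}=\mathbf{0}$. This proves invertibility of the KKT matrix, hence unique existence of the pair solving~\eqref{eqn:lstsqr}, and in particular uniqueness of $\mathbf{h}_*$.

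The main obstacle is really just careful bookkeeping rather than any deep idea: handling the complex Lagrangian (since $\mathbf{W}^{\dagger}$ appears, allowing $\mathbf{W}$ and $\mathbf{C}$ to be complex) via Wirtinger calculus, and correctly translating the two invertibility assumptions into the statement that (i) the objective is strictly convex on $\ker\mathbf{C}$ (giving uniqueness of $\mathbf{h}_*$) and (ii) the linear independence constraint qualification holds (giving uniqueness of $\mathbf{g}$). Both translations are exactly what the kernel argument above verifies, after which the theorem reduces to standard convex QP duality.
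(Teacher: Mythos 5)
Your proposal is correct, and its decisive step --- showing the KKT block matrix is invertible by taking $(\mathbf{h},\mathbf{g})$ in its kernel, deducing $\mathbf{Wh}=\mathbf{0}$ from $\mathbf{h}^{\dagger}\mathbf{W}^{\dagger}\mathbf{W}\mathbf{h}=0$ and $\mathbf{Ch}=\mathbf{0}$, then invoking left-invertibility of the stacked matrix and right-invertibility of $\mathbf{C}$ --- is exactly the argument in the paper's proof of the reverse direction. Where you differ is in how optimality is certified: the paper proves sufficiency by a self-contained completing-the-square computation, expanding $\|\mathbf{Wh}-\mathbf{b}\|^2=\|\mathbf{W}(\mathbf{h}-\mathbf{h}_*)\|^2+\|\mathbf{Wh}_*-\mathbf{b}\|^2$ after the cross term is killed by the stationarity equation and the constraint, and it obtains existence of $\mathbf{g}$ purely from invertibility of the KKT matrix; you instead outsource both directions of the optimality characterization to standard convex QP facts (Lagrange multipliers for affine constraints for necessity, convexity for sufficiency), handling the complex case via Wirtinger calculus. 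Both are valid; the paper's route is more elementary and self-contained, while yours is shorter and makes explicit that left-invertibility of the stacked matrix is precisely strict convexity of the objective on $\ker\mathbf{C}$ and that right-invertibility of $\mathbf{C}$ is the constraint qualification giving a unique multiplier --- a conceptual dividend the paper's computation leaves implicit. One small bookkeeping point: the theorem's objective is $\|\mathbf{Wh}-\mathbf{b}\|$ rather than its square, so you should note (as is standard) that the two have the same minimizers before differentiating.
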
 
 
\begin{proof}
Suppose that $(\mathbf{h}_*, \mathbf{g})$ satisfies~\eqref{eqn:lstsqr}. Clearly $\mathbf{h}_*$ satisfies the constraint $\mathbf{C}\mathbf{h}_* = \mathbf{0}$. Then for any $\mathbf{h}$ that satisfies the constraint $\mathbf{C}\mathbf{h}=\mathbf{0}$, we have
\begin{align}
     \|\mathbf{Wh}-\mathbf{b}\|^2 &= \|\mathbf{W(h-h_*)}+\mathbf{W}\mathbf{h}_*-\mathbf{b}\|^2\\
     &=\|\mathbf{W(h-h_*)}\|^2 +\|\mathbf{W}\mathbf{h}_*-\mathbf{b}\|^2 +2(\mathbf{h}-\mathbf{h}_*)^\dagger\mathbf{W}^\dagger(\mathbf{Wh_*}-\mathbf{b})\\
     &=||\mathbf{W(h-h_*)}||^2 +||\mathbf{Wh_*}-\mathbf{b}||^2 + 2(\mathbf{h}-\mathbf{h_*})^\dagger\mathbf{C}^\dagger\mathbf{g}\\
     &=||\mathbf{W(h-h_*)}||^2 +||\mathbf{Wh_*}-\mathbf{b}||^2\\
     &\geq ||\mathbf{Wh_*}-\mathbf{b}||^2,
\end{align}
where in the second line we used the definition of the norm $\|\mathbf{v}\|^2 = \mathbf{v}^\dagger\mathbf{v}$, in the third line we used the fact that 
\begin{align}
   \mathbf{W}^\dagger\mathbf{Wh_*} +\mathbf{C}^\dagger\mathbf{g}=\mathbf{W}^\dagger\mathbf{b},
\end{align}
and in the fourth line we used the fact that
\begin{align}
    \mathbf{C(h-h_*)}=\mathbf{Ch}-\mathbf{Ch_*}=\mathbf{0}-\mathbf{0}=\mathbf{0}. 
\end{align}
Therefore, $\mathbf{h_*}$ is an optimal solution. Furthermore, the optimal constrained solution is obtained if and only if
 \begin{equation}
     \begin{pmatrix}
     \mathbf{W}\\
     \mathbf{C}
     \end{pmatrix}
     (\mathbf{h-h_*})=\mathbf{0}.
 \end{equation}
By assumption the stacked matrix is left-invertible, therefore the minimum $\mathbf{h}=\mathbf{h_*}$ is also unique.

In the reverse direction, it suffices to show the matrix in~\eqref{eqn:lstsqr} is invertible. Suppose on the contrary the matrix is non-invertible. Then there must exist a vector $(\mathbf{h}, \mathbf{g}) \ne \mathbf{0}$ such that
\begin{equation}\label{eq:lstsq1}
        \begin{pmatrix}
    \mathbf{W}^\dagger\mathbf{W} & \mathbf{C}^\dagger\\
    \mathbf{C} & \mathbf{0}
    \end{pmatrix}
    \begin{pmatrix}
    \mathbf{h}\\
    \mathbf{g}
    \end{pmatrix}=
    \begin{pmatrix}
    \mathbf{0}\\
    \mathbf{0}
    \end{pmatrix}.
\end{equation}
Left multiplying both sides by $(\mathbf{h},\mathbf{g})^\dagger$, we have 
 \begin{equation}
     \begin{pmatrix}\mathbf{h}^\dagger& \mathbf{0}^\dagger\end{pmatrix}\begin{pmatrix}\mathbf{W}^\mathrm{T}\mathbf{W} & \mathbf{C}^\dagger\\
     \mathbf{C} & \mathbf{0} \end{pmatrix} \begin{pmatrix} \mathbf{h} \\ \mathbf{g}\end{pmatrix} = \mathbf{h}^\dagger\mathbf{W}^\dagger\mathbf{W}\mathbf{h} + \mathbf{h}^\dagger\mathbf{C}^\dagger\mathbf{g} = 0.
 \end{equation}
Noting that $\mathbf{C}\mathbf{h}^\dagger=\mathbf{0}$, we get $\|\mathbf{W}\mathbf{h}\|^2 = 0$, so that $\mathbf{W}\mathbf{h} = \mathbf{0}$. Since the stacked matrix~\eqref{eq:stacked_matrix} is injective, it follows that we must have $\mathbf{h} = \mathbf{0}$. The system~\eqref{eq:lstsq1} then reduces to
\begin{equation}
    \mathbf{C}^\dagger\mathbf{g} = \mathbf{0}.
\end{equation}
We then conclude that $\mathbf{g} = \mathbf{0}$ since $\mathbf{C}$ is right-invertible by assumption. This is in contradiction with the assumption that $(\mathbf{h},\mathbf{g})\neq \mathbf{0}$. Hence the matrix in equation~\eqref{eqn:lstsqr} must be invertible, and a solution for $\mathbf{g}$ must exist.
\end{proof}

\subsection{Interpolation and Regularization}
  
In principle, the optimal solution $\mathbf{h}_*$ can be obtained through straightforward matrix inversion of equation~\eqref{eqn:lstsqr}. Suppose $\mathbf{h}$ is a column vector with $3N$ entries, and the constraint matrix is $M\times 3N$, then any well-known polynomial algorithm for matrix inversion is of $\cO\left((M+3N)^3\right)$. The inversion of system~\eqref{eqn:lstsqr} is slightly complicated by the fact that the matrix $\mathbf{W}$ is generally expected to be extremely ill-conditioned. We can get around this by regularizing the system.

To make the the least squares problem~\eqref{eqn:lstsqr} better conditioned, and to smooth out small scale fluctuations in the discretized reconstruction, we employ a derivative type Tikhonov regularization. We replace the matrix $\mathbf{W}^\dagger \mathbf{W}$ in~\eqref{eqn:lstsqr} by
\begin{align}
    \mathbf{W}^\dagger \mathbf{W} + \gamma \left(\Delta_x^\dagger\Delta_x + \Delta_y^\dagger\Delta_y\right),
\end{align}
where $\gamma>0$ is a regularization parameter which controls the strength of the regularization, and where $\Delta_x$ and $\Delta_y$ are the discretized partial differential operators defined in~\eqref{eqn:discrete_derivative}. This replacement effectively changes the least squares problem in~\eqref{eqn:conslstsq} to
\begin{align}
   \min_{\mathbf{h}}\,\left(\|\mathbf{W}\mathbf{h}-\mathbf{b}\|^2+\gamma\|\Delta_x\mathbf{h}\|^2 + \gamma\|\Delta_y\mathbf{h}\|^2\right),\\
    \text{subject to}\ \mathbf{C}\mathbf{h}=\mathbf{0},\nonumber
\end{align}
which both regularizes the system so that the smallest singular values of $\mathbf{W}$ are of order $~\sqrt{\gamma}$ and also takes into account the strength of fluctuations in the resulting solution. Since we expect small scale fluctuations to be mostly due to bulk discretization errors, this choice of regulator serves as a reasonable filter. In this note, all of our reconstructions employ regularization with $\gamma= 10^{-8}$.

In the case of fixed data but variable number of bulk tiles, the reconstruction can also become ill-conditioned when the number of bulk degrees of freedom exceed the number of boundary constraints in the form of geodesic lengths. Roughly speaking, because we have $O(3N)$ number of bulk degrees of freedom, one for each tensor component at a particular tile. Suppose there are $K$ sites on the boundary, then the number of linear equations from geodesic lengths is of order $O(K^2)$. Hence the reconstruction can become ill-conditioned when $3N>K^2$. While it is possible to decrease the number of bulk tiles, or increase the number of boundary sites, both come at a cost depending on our requirements for reconstruction. As an alternative, we can also interpolate between geodesic data by adding virtual sites and the lengths of geodesics for the extended set of sites. 

In the current implementation, we add the virtual sites in between the original lattice sites such that the new lattice scale is half of the original. Let us label all sites sequentially along the counter clockwise direction on the boundary circle from $1$ through $2K$ such that $2K+1 \equiv 1$. To generate the geodesic lengths between a virtual site $j$ and an original site $i$, we average the lengths of two geodesics that are anchored at $(i,j+1)$ and $(i,j-1)$. For a geodesic that ends on two virtual sites, $i,j$. We take the 4-point average of the original geodesic lengths for the ones anchored at $(i,j+1), (i,j-1), (i+1,j),$ and $(i-1,j)$. 

\bibliographystyle{unsrt}
\bibliography{radon}

\begin{thebibliography}{10}

\bibitem{VanRaamsdonk:2009ar}
Mark Van~Raamsdonk.
\newblock {Comments on quantum gravity and entanglement}.
\newblock 7 2009.

\bibitem{Ryu:2006bv}
Shinsei Ryu and Tadashi Takayanagi.
\newblock {Holographic derivation of entanglement entropy from AdS/CFT}.
\newblock {\em Phys. Rev. Lett.}, 96:181602, 2006.

\bibitem{Ryu:2006ef}
Shinsei Ryu and Tadashi Takayanagi.
\newblock {Aspects of Holographic Entanglement Entropy}.
\newblock {\em JHEP}, 08:045, 2006.

\bibitem{Swingle:2009bg}
Brian Swingle.
\newblock {Entanglement Renormalization and Holography}.
\newblock {\em Phys. Rev.}, D86:065007, 2012.

\bibitem{Maldacena:2013xja}
Juan Maldacena and Leonard Susskind.
\newblock {Cool horizons for entangled black holes}.
\newblock {\em Fortsch. Phys.}, 61:781--811, 2013.

\bibitem{qi2013exact}
Xiao-Liang Qi.
\newblock Exact holographic mapping and emergent space-time geometry.
\newblock {\em arXiv preprint arXiv:1309.6282}, 2013.

\bibitem{Cao:2016mst}
ChunJun Cao, Sean~M. Carroll, and Spyridon Michalakis.
\newblock {Space from Hilbert Space: Recovering Geometry from Bulk
  Entanglement}.
\newblock {\em Phys. Rev.}, D95(2):024031, 2017.

\bibitem{Lashkari:2013koa}
Nima Lashkari, Michael~B. McDermott, and Mark Van~Raamsdonk.
\newblock {Gravitational dynamics from entanglement 'thermodynamics'}.
\newblock {\em JHEP}, 04:195, 2014.

\bibitem{Faulkner:2013ica}
Thomas Faulkner, Monica Guica, Thomas Hartman, Robert~C. Myers, and Mark
  Van~Raamsdonk.
\newblock {Gravitation from Entanglement in Holographic CFTs}.
\newblock {\em JHEP}, 03:051, 2014.

\bibitem{Swingle:2014uza}
Brian Swingle and Mark Van~Raamsdonk.
\newblock {Universality of Gravity from Entanglement}.
\newblock 5 2014.

\bibitem{Faulkner:2017tkh}
Thomas Faulkner, Felix~M. Haehl, Eliot Hijano, Onkar Parrikar, Charles
  Rabideau, and Mark Van~Raamsdonk.
\newblock {Nonlinear Gravity from Entanglement in Conformal Field Theories}.
\newblock {\em JHEP}, 08:057, 2017.

\bibitem{Lashkari:2014kda}
Nima Lashkari, Charles Rabideau, Philippe Sabella-Garnier, and Mark
  Van~Raamsdonk.
\newblock {Inviolable energy conditions from entanglement inequalities}.
\newblock {\em JHEP}, 06:067, 2015.

\bibitem{Lashkari:2016idm}
Nima Lashkari, Jennifer Lin, Hirosi Ooguri, Bogdan Stoica, and Mark
  Van~Raamsdonk.
\newblock {Gravitational positive energy theorems from information
  inequalities}.
\newblock {\em PTEP}, 2016(12):12C109, 2016.

\bibitem{Almheiri:2019psf}
Ahmed Almheiri, Netta Engelhardt, Donald Marolf, and Henry Maxfield.
\newblock {The entropy of bulk quantum fields and the entanglement wedge of an
  evaporating black hole}.
\newblock {\em JHEP}, 12:063, 2019.

\bibitem{Penington:2019npb}
Geoffrey Penington.
\newblock {Entanglement Wedge Reconstruction and the Information Paradox}.
\newblock 2019.

\bibitem{Maldacena:1997re}
Juan~Martin Maldacena.
\newblock {The Large N limit of superconformal field theories and
  supergravity}.
\newblock {\em Int. J. Theor. Phys.}, 38:1113--1133, 1999.
\newblock [Adv. Theor. Math. Phys.2,231(1998)].

\bibitem{Aharony:1999ti}
Ofer Aharony, Steven~S. Gubser, Juan~Martin Maldacena, Hirosi Ooguri, and Yaron
  Oz.
\newblock {Large N field theories, string theory and gravity}.
\newblock {\em Phys. Rept.}, 323:183--386, 2000.

\bibitem{Cao:2017hrv}
ChunJun Cao and Sean~M. Carroll.
\newblock {Bulk entanglement gravity without a boundary: Towards finding
  Einstein’s equation in Hilbert space}.
\newblock {\em Phys. Rev.}, D97(8):086003, 2018.

\bibitem{Czech:2015kbp}
Bartlomiej Czech, Lampros Lamprou, Samuel McCandlish, and James Sully.
\newblock {Tensor Networks from Kinematic Space}.
\newblock {\em JHEP}, 07:100, 2016.

\bibitem{Bao:2017qmt}
Ning Bao, ChunJun Cao, Sean~M. Carroll, and Aidan Chatwin-Davies.
\newblock {De Sitter Space as a Tensor Network: Cosmic No-Hair,
  Complementarity, and Complexity}.
\newblock {\em Phys. Rev.}, D96(12):123536, 2017.

\bibitem{tHooft:1993dmi}
Gerard 't~Hooft.
\newblock {Dimensional reduction in quantum gravity}.
\newblock {\em Conf. Proc.}, C930308:284--296, 1993.

\bibitem{Susskind:1994vu}
Leonard Susskind.
\newblock {The World as a hologram}.
\newblock {\em J. Math. Phys.}, 36:6377--6396, 1995.

\bibitem{Porrati:2003na}
M.~Porrati and R.~Rabadan.
\newblock {Boundary rigidity and holography}.
\newblock {\em JHEP}, 01:034, 2004.

\bibitem{Roy:2018ehv}
Shubho~R. Roy and Debajyoti Sarkar.
\newblock {Bulk metric reconstruction from boundary entanglement}.
\newblock {\em Phys. Rev.}, D98(6):066017, 2018.

\bibitem{Czech:2014ppa}
Bartłomiej Czech and Lampros Lamprou.
\newblock {Holographic definition of points and distances}.
\newblock {\em Phys. Rev.}, D90:106005, 2014.

\bibitem{Pastawski:2015qua}
Fernando Pastawski, Beni Yoshida, Daniel Harlow, and John Preskill.
\newblock {Holographic quantum error-correcting codes: Toy models for the
  bulk/boundary correspondence}.
\newblock {\em JHEP}, 06:149, 2015.

\bibitem{Hayden:2016cfa}
Patrick Hayden, Sepehr Nezami, Xiao-Liang Qi, Nathaniel Thomas, Michael Walter,
  and Zhao Yang.
\newblock {Holographic duality from random tensor networks}.
\newblock {\em JHEP}, 11:009, 2016.

\bibitem{Evenbly:2017hyg}
Glen Evenbly.
\newblock {Hyperinvariant Tensor Networks and Holography}.
\newblock {\em Phys. Rev. Lett.}, 119(14):141602, 2017.

\bibitem{you2018machine}
Yi-Zhuang You, Zhao Yang, and Xiao-Liang Qi.
\newblock Machine learning spatial geometry from entanglement features.
\newblock {\em Physical Review B}, 97(4):045153, 2018.

\bibitem{Kabat:2018smf}
Daniel Kabat and Gilad Lifschytz.
\newblock {Emergence of spacetime from the algebra of total modular
  Hamiltonians}.
\newblock {\em JHEP}, 05:017, 2019.

\bibitem{Engelhardt:2016crc}
Netta Engelhardt and Gary~T. Horowitz.
\newblock {Recovering the spacetime metric from a holographic dual}.
\newblock {\em Adv. Theor. Math. Phys.}, 21:1635--1653, 2017.

\bibitem{Engelhardt:2016wgb}
Netta Engelhardt and Gary~T. Horowitz.
\newblock {Towards a Reconstruction of General Bulk Metrics}.
\newblock {\em Class. Quant. Grav.}, 34(1):015004, 2017.

\bibitem{Hammersley:2006cp}
John Hammersley.
\newblock {Extracting the bulk metric from boundary information in
  asymptotically AdS spacetimes}.
\newblock {\em JHEP}, 12:047, 2006.

\bibitem{Bilson:2008ab}
Samuel Bilson.
\newblock {Extracting spacetimes using the AdS/CFT conjecture}.
\newblock {\em JHEP}, 08:073, 2008.

\bibitem{Bilson:2010ff}
Samuel Bilson.
\newblock {Extracting Spacetimes using the AdS/CFT Conjecture: Part II}.
\newblock {\em JHEP}, 02:050, 2011.

\bibitem{Balasubramanian:2013lsa}
Vijay Balasubramanian, Borun~D. Chowdhury, Bartlomiej Czech, Jan de~Boer, and
  Michal~P. Heller.
\newblock {Bulk curves from boundary data in holography}.
\newblock {\em Phys. Rev.}, D89(8):086004, 2014.

\bibitem{Bale2017}
Spyros {Alexakis}, Tracey {Balehowsky}, and Adrian {Nachman}.
\newblock {Determining a Riemannian Metric from Minimal Areas}.
\newblock {\em arXiv e-prints}, page arXiv:1711.09379, November 2017.

\bibitem{Bao:2019bib}
Ning Bao, ChunJun Cao, Sebastian Fischetti, and Cynthia Keeler.
\newblock {Towards Bulk Metric Reconstruction from Extremal Area Variations}.
\newblock {\em Class. Quant. Grav.}, 36(18):185002, 2019.

\bibitem{Bao:2015bfa}
Ning Bao, Sepehr Nezami, Hirosi Ooguri, Bogdan Stoica, James Sully, and Michael
  Walter.
\newblock {The Holographic Entropy Cone}.
\newblock {\em JHEP}, 09:130, 2015.

\bibitem{UhlSteAnnounce}
Plamen Stefanov and Gunther Uhlmann.
\newblock {Recent progress on the boundary rigidity problem}.
\newblock {\em Electron. Res. Announc. Amer. Math. Soc.}, 11:64--70, 2005.

\bibitem{sharafutdinov1994integral}
V.A. Sharafutdinov.
\newblock {\em Integral Geometry of Tensor Fields}.
\newblock Inverse and ill-posed problems series. VSP, 1994.

\bibitem{PestovUhlmann2003}
L.~{Pestov} and G.~{Uhlmann}.
\newblock {Two dimensional compact simple Riemannian manifolds are boundary
  distance rigid}.
\newblock {\em arXiv Mathematics e-prints}, page math/0305280, May 2003.

\bibitem{Calabrese:2009qy}
Pasquale Calabrese and John Cardy.
\newblock {Entanglement entropy and conformal field theory}.
\newblock {\em J. Phys.}, A42:504005, 2009.

\bibitem{BCG1995}
G.~Besson, G.~Courtois, and S.~Gallot.
\newblock {Entropies et rigidit\'{e}s des espaces localement sym\'{e}triques de
  courbure strictment n\'{e}gative,}.
\newblock {\em Geom. Funct. Anal}, 5:731--799, 1995.

\bibitem{Croke1990}
Christopher~B. Croke.
\newblock {Rigidity for surfaces of non-positive curvature}.
\newblock {\em Commentarii Mathematici Helvetici}, 65:150--169, 1990.

\bibitem{Hubeny:2007xt}
Veronika~E. Hubeny, Mukund Rangamani, and Tadashi Takayanagi.
\newblock {A Covariant holographic entanglement entropy proposal}.
\newblock {\em JHEP}, 07:062, 2007.

\bibitem{Czech:2016tqr}
Bartlomiej Czech, Lampros Lamprou, Samuel McCandlish, Benjamin Mosk, and James
  Sully.
\newblock {Equivalent Equations of Motion for Gravity and Entropy}.
\newblock {\em JHEP}, 02:004, 2017.

\bibitem{Monard2015}
Francois {Monard}.
\newblock {Efficient tensor tomography in fan-beam coordinates}.
\newblock {\em arXiv e-prints}, page arXiv:1510.05132, October 2015.

\bibitem{Venka2010}
Venkateswaran~P. Krishnan.
\newblock {On the inversion formulas of Pestov and Uhlmann for the geodesic ray
  transform}.
\newblock {\em Journal of Inverse and Ill-posed Problems}, 18, 2010.

\bibitem{UhlmannVasy}
Gunther {Uhlmann} and Andr{\'a}s {Vasy}.
\newblock {The inverse problem for the local geodesic ray transform}.
\newblock {\em arXiv e-prints}, page arXiv:1210.2084, October 2012.

\bibitem{Monard2014}
Fran{\c{c}}ois {Monard}.
\newblock {On reconstruction formulas for the ray transform acting on symmetric
  differentials on surfaces}.
\newblock {\em Inverse Problems}, 30(6):065001, June 2014.

\bibitem{BAL20051362}
Guillaume Bal.
\newblock Ray transforms in hyperbolic geometry.
\newblock {\em Journal de Mathématiques Pures et Appliquées}, 84(10):1362 --
  1392, 2005.

\bibitem{dahlen_tromp_1999}
F.~A. Dahlen and Jeroen Tromp.
\newblock {\em Theoretical global seismology}.
\newblock Princeton University Press, 1999.

\bibitem{Heemskerk:2009pn}
Idse Heemskerk, Joao Penedones, Joseph Polchinski, and James Sully.
\newblock {Holography from Conformal Field Theory}.
\newblock {\em JHEP}, 10:079, 2009.

\bibitem{Banados:1992wn}
Maximo Banados, Claudio Teitelboim, and Jorge Zanelli.
\newblock {The Black hole in three-dimensional space-time}.
\newblock {\em Phys. Rev. Lett.}, 69:1849--1851, 1992.

\bibitem{Liu:2013qca}
Hong Liu and S.~Josephine Suh.
\newblock {Entanglement growth during thermalization in holographic systems}.
\newblock {\em Phys. Rev.}, D89(6):066012, 2014.

\bibitem{Couch:2019zni}
Josiah Couch, Stefan Eccles, Phuc Nguyen, Brian Swingle, and Shenglong Xu.
\newblock {The Speed of Quantum Information Spreading in Chaotic Systems}.
\newblock 8 2019.

\bibitem{Almheiri:2016blp}
Ahmed Almheiri, Xi~Dong, and Brian Swingle.
\newblock {Linearity of Holographic Entanglement Entropy}.
\newblock {\em JHEP}, 02:074, 2017.

\bibitem{Almheiri:2014lwa}
Ahmed Almheiri, Xi~Dong, and Daniel Harlow.
\newblock {Bulk Locality and Quantum Error Correction in AdS/CFT}.
\newblock {\em JHEP}, 04:163, 2015.

\bibitem{DiFrancesco:639405}
Philippe Di~Francesco, Pierre Mathieu, and David Sénéchal.
\newblock {\em {Conformal field theory}}.
\newblock Graduate texts in contemporary physics. Springer, New York, NY, 1997.

\bibitem{PeschelEisler2009}
Ingo {Peschel} and Viktor {Eisler}.
\newblock {Reduced density matrices and entanglement entropy in free lattice
  models}.
\newblock {\em Journal of Physics A Mathematical General}, 42(50):504003,
  December 2009.

\bibitem{Myers:2014jia}
Robert~C. Myers, Junjie Rao, and Sotaro Sugishita.
\newblock {Holographic Holes in Higher Dimensions}.
\newblock {\em JHEP}, 06:044, 2014.

\bibitem{Balasubramanian:2018uus}
Vijay Balasubramanian and Charles Rabideau.
\newblock {The dual of non-extremal area: differential entropy in higher
  dimensions}.
\newblock 12 2018.

\bibitem{Ved}
Vedran Lekic.
\newblock personal communication.

\bibitem{Dong:2016fnf}
Xi~Dong.
\newblock {The Gravity Dual of Renyi Entropy}.
\newblock {\em Nature Commun.}, 7:12472, 2016.

\bibitem{Lekic}
Scott Burdick and Vedran Lekić.
\newblock {Velocity variations and uncertainty from transdimensional P-wave
  tomography of North America}.
\newblock {\em Geophysical Journal International}, 209(2):1337--1351, 03 2017.

\bibitem{Tarantola}
Albert Tarantola.
\newblock {\em Inverse Problem Theory and Methods for Model Parameter
  Estimation}.
\newblock Society for Industrial and Applied Mathematics, USA, 2004.

\bibitem{pestov-uhlmann2003}
Leonid Pestov and Gunther Uhlmann.
\newblock Two dimensional compact simple riemannian manifolds are boundary
  distance rigid.
\newblock {\em Annals of mathematics}, 161:1093--1110, 05 2003.

\bibitem{tomography_survey}
G.P. Paternain, M.~Salo, and G~Uhlmann.
\newblock Tensor tomography: Progress and challenges.
\newblock {\em Chinese Annals of Mathematics, Series B}, 35:399--428, 05 2014.

\bibitem{uhlmann-injective}
G.P. Paternain, M.~Salo, and G.~Uhlmann.
\newblock Tensor tomography on surfaces.
\newblock {\em Inventiones mathematicae}, 193:229--247, 11 2013.

\bibitem{Monard_2014}
Fran{\c{c}}ois Monard.
\newblock On reconstruction formulas for the ray transform acting on symmetric
  differentials on surfaces.
\newblock {\em Inverse Problems}, 30(6):065001, may 2014.

\end{thebibliography}

\end{document}